\begin{document}

\newtheorem{theorem}{Theorem}[section]
\newtheorem{conjecture}[theorem]{Conjecture}
\newtheorem{corollary}[theorem]{Corollary}
\newtheorem{lemma}[theorem]{Lemma}
\newtheorem{proposition}[theorem]{Proposition}
\newtheorem{remark}[theorem]{Remark}

\newcommand{\Z}{{\mathbb Z}}
\newcommand{\N}{{\mathbb N}}
\newcommand{\C}{{\mathbb C}}
\newcommand{\R}{{\mathbb R}}
\newcommand{\Q}{{\mathbb Q}}

\newcommand{\cA}{\mathcal{A}}
\newcommand{\cB}{\mathcal{B}}
\newcommand{\cD}{\mathcal{D}}
\newcommand{\cH}{\mathcal{H}}
\newcommand{\cL}{\mathcal{L}}
\newcommand{\cQ}{\mathcal{Q}}

\newcommand{\fR}{\mathfrak{R}}

\newcommand{\re}{{\rm Re}\, }
\newcommand{\im}{{\rm Im}\, }
\newcommand{\Irrep}{{\rm Irrep}}
\newcommand{\Rad}{{\rm Ker}}
\newcommand{\Ker}{{\rm Ker}}

\def\beq#1#2\eeq{%
            \begin{equation}%
            \label{#1}%
                #2%
            \end{equation}%
        }

\title[Quasi--invariants and Lassalle--Nekrasov correspondence]{Quasi--invariant Hermite polynomials and Lassalle--Nekrasov correspondence}

\author{M.~V.~Feigin}
\email{misha.feigin@glasgow.ac.uk}
\address{School of Mathematics and Statistics, University of Glasgow, University Place, Glasgow G12 8QQ, UK  and  Faculty of Mechanics and Mathematics, Moscow State University, Moscow, Russia}
\author{M.~A.~Halln\"as}
\email{hallnas@chalmers.se}
\address{Department of Mathematical Sciences, Chalmers University of Technology and the University of Gothenburg, SE-412 96 Gothenburg, Sweden}
\author{A.~P.~Veselov}
\email{A.P.Veselov@lboro.ac.uk}
\address{Department of Mathematical Sciences, Loughborough University, Leicestershire LE11 3TU, UK; 
Faculty of Mechanics and Mathematics, Moscow State University and Steklov Mathematical Institute, Moscow, Russia
}

\date{\today}

\begin{abstract}
Lassalle and Nekrasov discovered in the 1990s a surprising correspondence between the rational Calogero--Moser system with a harmonic term and its trigonometric version. We present a conceptual explanation of this correspondence using the rational Cherednik algebra and establish its quasi-invariant extension. 

More specifically, we consider configurations $\mathcal A$ of real hyperplanes with multiplicities admitting the rational Baker--Akhiezer function and use this to introduce a new class of non-symmetric polynomials, which we call $\mathcal A$-Hermite polynomials. These polynomials form a linear basis in the space of $\mathcal A$-quasi-invariants, which is an eigenbasis for the corresponding generalised rational Calogero--Moser operator with harmonic term. In the case of the Coxeter configuration of type $A_N$ this leads to a quasi-invariant version of the Lassalle--Nekrasov correspondence and its higher order analogues.


\end{abstract}

\maketitle

\section{Introduction}

In 1971 Calogero \cite{Cal71} studied the quantum system
describing $N$ particles on the line pairwise
interacting  with the rational potential \beq{rat}
U_R(x_{1},\dots , x_{N})=\sum_{1\le i<j\le
N}\frac{\gamma}{(x_{i}-x_{j})^2} +\omega^2 x^2, \quad  x^2 = \sum_{i=1}^N x_i^2.
\eeq
Almost at the same time Sutherland \cite{Sut71} considered the quantum system of $N$ particles on a circle with trigonometric interaction
\beq{trig}
U_T(x_{1},\dots ,
x_{N})=\sum_{1\le i<j\le N}\frac{\gamma a^2}{\sin^2 a (x_{i}-x_{j})},\
\eeq 
which in the limit $a \to 0$ turns into the Calogero system without the harmonic term.

At the classical level both of these systems were studied in 1975 by Moser in a very influential paper \cite{Mos75}, who proved their integrability in Liouville sense by finding their Lax representations. Olshanetsky and Perelomov made another substantial contribution, in particular, by introducing the generalisations related to arbitrary root systems \cite{OP76A,OP83}.

Note that the dynamics in the rational and trigonometric cases are very different. In the rational case (\ref{rat}) the system is isochronous: all solutions are periodic with the same period $T=2\pi/\omega,$ while in the trigonometric case the motion is much more complicated \cite{OP76B}.

So it came as a surprise when in 1997 Nekrasov \cite{Nek97} discovered that the systems (both in the classical and the quantum case) are essentially equivalent.
More precisely, he showed that there is a symplectomorphism from the phase space of the rational system onto the open (positive) part of the phase space of the trigonometric system. 
Under this equivalence  integrals are mapped to  integrals with the Hamiltonian of the rational system being mapped to the momentum of the trigonometric system (which gives another proof of isochronicity of the rational system). 
To derive this map Nekrasov used the Hamiltonian reduction, following the ideas of the work by Kazhdan, Kostant and Sternberg \cite{KKS78}.

This explains an earlier construction of Lassalle \cite{Las91} of multivariable Hermite polynomials from Jack polynomials, which can be interpreted as a correspondence between the eigenfunctions of the two quantum systems with potentials \eqref{rat} and \eqref{trig} (see also Baker and Forrester's paper  \cite{BF97} which, in particular, contains further unpublished results due to Lassalle).
For these reasons we call this equivalence the {\it Lassalle--Nekrasov correspondence}.

One of the aims of this paper is to give a 
new explanation of the Lassalle--Nekrasov correspondence in the quantum case using the rational Cherednik algebra and extend it for the special parameter values $\gamma=2 m(m+1), \, m \in \mathbb Z_{\ge 0}$, from the symmetric to the much wider quasi-invariant setting. 

We start with a more general quantum system with the Hamiltonian
\beq{CMharm}
\mathcal H_\mathcal A =-\Delta + \sum_{\alpha\in\mathcal{A}}\frac{m_\alpha(m_\alpha+1)(\alpha,\alpha)}{(\alpha,x)^2}+ x^2,
\eeq
where $\mathcal A$ is any configuration of vectors in the Euclidean space  $\R^N$ with multiplicities $m_\alpha\in \Z_{\ge 0}$ admitting the so-called rational Baker--Akhiezer function $\phi(x,\lambda)$. Any Coxeter configuration belongs to this class, but there are also non-symmetric configurations (see \cite{CFV99} and the next section). 

Following \cite{CV90, CFV99, FV02} we define an algebra of {\it quasi-invariant}  polynomials $\mathcal Q_{\mathcal A}$ for any such configuration $\mathcal A$. It  consists of the  polynomials $q(x)$, $x \in \R^N$, which are  invariant up to order $2 m_\alpha$ with respect to orthogonal reflection about the hyperplane $(\alpha,x)=0$ for all $\alpha \in \mathcal A$, or equivalently, for each $\alpha\in \mathcal A$, the odd normal derivatives $\partial_\alpha^{2s-1} q(x) = (\alpha,\frac{\partial}{\partial x})^{2s-1} q(x)$ vanish on the hyperplane $(\alpha,x)=0$ for $s=1,2,\ldots,m_\alpha$.

Our starting observation is that after multiplication by a Gaussian factor the  rational BA function can be considered as the generating function of a new interesting class of quasi-invariant polynomials, which we call $\mathcal A$-{\it Hermite polynomials}. 
 In the symmetric case  multivariable  generalisations of Hermite polynomials were introduced and studied by Lassalle \cite{Las91}, with important subsequent contributions including works by Baker and Forrester \cite{BF97}, van Diejen \cite{vDie97} and R\"osler \cite{Ros98}.

 More precisely, choosing a homogeneous basis of quasi-invariants $q^i, \, i\in \mathbb Z_{\geq 0}$, we define the $\mathcal A$-Hermite polynomials $H_{q_i}(x)$ 
 by the generating function
\begin{equation}\label{AH}
\phi(x,\lambda)\exp(-\lambda^2/2)=\sum_{i=0}^\infty H_{q_i}(x)q^i(\lambda), \quad x, \lambda \in \R^N.
\end{equation}
The polynomial $H_{q_i}(x)$ is a non-homogeneous polynomial with the highest degree term $q_i(x)$, where collection of polynomials $q_j, \, j \in \mathbb Z_{\geq 0}$,  forms the dual basis in the space of quasi-invariants (see  details in Section~\ref{sec3}). 
Expansion \eqref{AH}  is a far-reaching generalisation of a well-known generating function expansion (see e.g. \cite{Sze39}) for the classical Hermite polynomials:
$$
\exp(\lambda x- \lambda^2/2) = \sum_{n=0}^\infty H_n(x) \frac{\lambda^n}{n!},
$$
corresponding to $m=0$ and $N=1.$

More generally, we define the {\it Hermitisation map} $\chi_H\colon \mathcal Q_{\mathcal A} \to \mathcal Q_{\mathcal A}$ sending $q$ to $H_q$, which can be given explicitly by 
\begin{equation}\label{simi}\nonumber
H_q(x)=(-1)^{\deg q}\exp(x^2/2)L_q\exp(-x^2/2)
\end{equation}
or, alternatively,   
\begin{equation}\label{simi2}\nonumber
H_q(x)=e^{-L/2} q(x),
\end{equation}
where $L=\Delta - \sum_{\alpha \in \mathcal A}\frac{2 m_\alpha}{(\alpha,x)}\partial_\alpha$ is a (conjugated) rational  Calogero--Moser (CM) Hamiltonian and $L_q$ is its quantum integral with highest order term $q(\partial)$ (see Section \ref{sec3}). 
These relations  generalise properties of the classical Hermite polynomials (see formulae (\ref{exp1}), (\ref{exp2}) below). Lassalle's generalised Hermite polynomials  \cite{Las91}
correspond  to the Coxeter configuration $\mathcal A$ of type $A_{N-1}$ 
and Jack symmetric polynomials $q$.

The corresponding functions $\Psi_q(x)=A_m(x)^{-1}H_q(x)e^{-x^2/2}$, where
$
A_m(x) = \prod_{\alpha\in\mathcal{A}}(x,\alpha)^{m_\alpha},
$
are (formal) eigenfunctions of the generalised CM Hamiltonian \eqref{CMharm} with harmonic term.


In Section \ref{sect5} we study in more detail the case of the Coxeter configuration $\mathcal A$ of type $A_{N-1}$ with all roots of multiplicity $m.$ 
In particular, we extend the Lassalle--Nekrasov correspondence to the quasi-invariants.
More precisely, we show that the map $\chi_H$ intertwines the action on $\cQ_\mathcal A$ of  quantum integrals $\cL_p, \, p \in \R[\xi_1,\ldots,\xi_N]^{S_N}$, of the trigonometric CM system in exponential coordinates and the corresponding quantum integrals $ \cL_{p}^{\mathcal H}$ of the rational CM system with harmonic term, so that the  diagram
\[
\begin{tikzcd}
\cQ_\mathcal A \arrow[r,"\chi_H"] \arrow[d,swap,"{\mathcal L}_p"] &
  \cQ_\mathcal A \arrow[d,"{\mathcal L}^H_p"] \\
\cQ_\mathcal A  \arrow[r,"\chi_H"] & \cQ_\mathcal A
\end{tikzcd}
\]
is commutative. The original Lassalle--Nekrasov correspondence is the restriction of this construction to the subspace of symmetric polynomials $\R[x_1,\ldots,x_N]^{S_N}\subset \cQ_\mathcal A.$

In Section \ref{automLN} we provide a more conceptional explanation of the Lassalle--Nekrasov correspondence by showing that it can be obtained via an automorphism of the corresponding rational Cherednik algebra.

By considering more general automorphisms in Section \ref{sechigher}, we establish a higher order analogue of the Lassalle--Nekrasov correspondence in which the trigonometric side of the correspondence is unchanged and the momentum operator corresponds to the rational operator $K = E-L_p$, where $E=\sum x_i \partial_i$ is the Euler operator and $p=\sum \xi_i^{l+1}$, $l\in \N$.
 In particular, the operator $K$ has a complete set of commuting quantum integrals. 

Note that in the case $l=1$, which corresponds to the usual Lassalle--Nekrasov correspondence, the operator $K$ is conjugated to the rational CM operator with harmonic term. In the simplest one-dimensional case with multiplicity 0 and arbitrary $l$ its polynomial eigenfunctions are known as Gould--Hopper polynomials \cite{GH62}, which were studied in various relations, in particular, in  \cite{Cha11, DLMT96, VL13}. 
We show that our generalisations of these polynomials satisfy certain bispectrality properties (cf. Duistermaat and Gr\"unbaum \cite{DG86}).

In Section \ref{secDefCM} we extend the Lassalle--Nekrasov correspondence to the deformed Calogero--Moser systems describing the interaction of two types of particles \cite{CFV98}, \cite{SV04}.

\section{Baker-Aklhiezer function related to configurations of hyperplanes}

Let $\mathcal{A}$ be a finite collection of non-collinear vectors $\alpha=(\alpha_1,\ldots,\alpha_N)\in\mathbb{R}^N$ with multiplicities $m_\alpha\in\mathbb{N}$. Consider the corresponding configuration of hyperplanes with multiplicities (which we also denote as $\mathcal A$) defined by
$$
A_m(x): = \prod_{\alpha\in\mathcal{A}}(x,\alpha)^{m_\alpha}=0.
$$
The (rational) {\it Baker--Akhiezer (BA) function} $\phi(x,\lambda)$, $x,\lambda\in\mathbb{C}^N$ associated to $\mathcal{A}$ satisfies the following two conditions \cite{CV90}:
\begin{enumerate}
\item[(I)] $\phi(x,\lambda)$ is of the form
$	\phi(x,\lambda) = P(x,\lambda)e^{(x,\lambda)}$
with $P(x,\lambda)$ a polynomial in $x$ with highest degree term equal to $A_m(x)A_m(\lambda)$;
\item[(II)] for all $\alpha\in\mathcal{A}$, we have the vanishing conditions
$$	\partial_\alpha\phi(x,\lambda) = \partial_\alpha^3\phi(x,\lambda) =\cdots= \partial_\alpha^{2m_\alpha-1}\phi(x,\lambda)\equiv 0,\quad (x,\alpha)=0,$$
where $\partial_\alpha = (\alpha,\partial/\partial x)$ is the normal derivative in direction $\alpha$.
\end{enumerate}

The existence of $\phi$ puts a strong restriction on $\mathcal A$. The known cases besides the Coxeter arrangements with invariant multiplicities include their deformed  versions ${\mathcal A}_{N-1}(p)$ and ${\mathcal C}_{N}(r,s)$ as well as some special 2-dimensional configurations, see \cite{CFV99, FJ14}.

It is known that if the BA function $\phi$ exists, then it is unique and symmetric with respect to $x$ and $\lambda$: $\phi(x,\lambda) = \phi(\lambda,x),$
see \cite{CFV99}.



The key fact \cite{CV90, CFV99, VSC93} is that if the configuration $\mathcal A$ admits the BA function $\phi(x,\lambda)$,
 then there is a homomorphism $\chi^{\mathcal A}$ mapping a quasi-invariant $q\in \mathcal Q_{\mathcal A}$ to a differential
operator 
$L_q(x,\frac{\partial}{\partial x})$ such that 
\beq{BAeigenf}
L_q \phi
(x,\lambda)=q(\lambda)\phi(x,\lambda). 
\eeq 


In particular, $\phi$ satisfies the Schr\"odinger equation
\begin{equation}\label{CMeq}
	L\phi = \lambda^2\phi, \quad \lambda^2 = \sum_{i=1}^N \lambda_i^2,
\end{equation}
for the
CM operator
\begin{equation}\label{CM}
	L = \Delta - \sum_{\alpha\in\mathcal{A}}\frac{2m_\alpha}{(\alpha,x)}\partial_\alpha, \quad \Delta = \sum_{i=1}^N \partial_i^2.
\end{equation}
If $\mathcal A$ is a Coxeter system, then $L$ is gauge-equivalent to the generalised CM operator from Olshanetsky--Perelomov \cite{OP83} in the special case of integer parameters. 

In the general case the BA function $\phi(x,\lambda)$ can be given in terms of the operator $L$ by Berest's formula \cite{Ber98}
\begin{equation}\label{Berest}\nonumber
	\phi(x,\lambda) = \big(2^{|m|} |m|!\big)^{-1}\big(L-\lambda^2\big)^{|m|}\left(A_m(x)^2e^{(\lambda,x)}\right),
\end{equation}
where $|m|=\sum_{\alpha\in\mathcal{A}}m_\alpha$, see Theorem 3.1 in \cite{CFV99}.

In the Coxeter case Etingof and Ginzburg \cite{EG02A} observed that the function $\phi(x,\lambda)$  has the expansion
\begin{equation}
\label{10}
\phi(x,\lambda) =\sum_{i=0}^\infty   {q_i}(x)q^i(\lambda),
\end{equation}
where $q^i, \, i\in \mathbb Z_{\geq 0}$, is a homogeneous basis of quasi-invariants, and $q_i$ is the dual basis with respect to a natural bilinear form $\langle \cdot, \cdot \rangle_m$ (see formulae \eqref{canform}--\eqref{bform2} in Section~\ref{sec3}). The same statement and proof hold true in the general case provided that $\phi(0,0)\ne 0$, which is satisfied in all the known cases (see \cite{FV03}, \cite{FHV13}).

Multiplication of the left-hand side of   formula \eqref{10} by the Gaussian factor $\exp(-\lambda^2/2)$ remarkably leads to a very interesting quasi-invariant version of Lassalle's multivariable Hermite polynomials, see formula (\ref{AH}) above. 
 
In the next section we study these polynomials in the simplest one-dimensional case, leaving the general case for Section~\ref{sec3}.

\section {$m$-Hermite polynomials on the line}
\label{m-line}

The classical Hermite polynomials $H_n(x)$ (in the ``probabilistic" convention) are the monic orthogonal polynomials with Gaussian weight $w(x)=e^{-x^2/2}$ (see e.g. \cite{Sze39}). They satisfy the recurrence relation
$$H_{n+1}(x)=xH_n(x)-nH_{n-1}(x)$$ 
and the differential equation
$$
(-\frac{d^2}{d x^2}  + x \frac{d}{d x}) H_n = n H_n.
$$
The corresponding functions $\psi_n(x) = e^{-x^2/4} H_n(x)$
are the eigenfunctions of the quantum harmonic oscillator
$$
(-\frac{d^2}{d x^2} +\frac{x^2}{4}) \psi_n = (n+\frac12) \psi_n.
$$
The Hermite polynomials can be given by the formulas 
\begin{equation}
\label{exp1}H_n(x)=(-1)^n e^{x^2/2}\frac{d^n}{dx^n}e^{-x^2/2},
\end{equation}
\begin{equation}
\label{exp2}
H_n(x)=e^{-\frac12 \frac{d^2}{d x^2}} x^n.   
\end{equation}
Here are the first few of them:
$$H_0(x)	=	1,\,
H_1(x)	=	x,\,
H_2(x)	=	x^2-1,\,
H_3(x)	=	x^3-3x,$$
$$H_4(x)	=	x^4-6x^2+3,\,
H_5(x)	=	x^5-10x^3+15x.
$$

It will be important for us that Hermite polynomials can also be defined by the following generating series \cite{Sze39}:
\begin{equation}
\label{firsteq}
e^{k x- k^2/2} = \sum_{n=0}^\infty H_n(x) \frac{k^n}{n!}.
\end{equation}

Let us define now their $m$-version, where $m\in \Z_{\geq 0}$. In our geometric setting they correspond to the configuration $\mathcal A$ consisting of the point $0$ taken with multiplicity $m.$ 
Consider the corresponding Schr\"odinger operator
\begin{equation}
\label{CM1}\nonumber
{\mathcal H}_m = -\frac{d^2}{d x^2} +\frac{x^2}{4}  +\frac{m(m+1)}{x^2},
\end{equation}
which is the simplest case of a CM operator with harmonic term.
It is well known that it is the result of Darboux transformations applied to the harmonic oscillator ${\mathcal H}_0$ at the first $m$ odd levels.

Namely, let $W_{1, 3,\ldots, 2m-1}$ be the Wronskian of the first $m$ odd Hermite polynomials
$$
W_{1, 3,\ldots, 2m-1} = W[H_1(x), H_3(x), \ldots, H_{2m-1}(x)].
$$
Then it is easy to check that
$$
W_{1, 3,\ldots, 2m-1} = W[x, x^3, \ldots, x^{2m-1}]= \prod_{j=1}^{m-1} (m-j)!  \, 2^\frac{m(m-1)}{2} x^\frac{m(m+1)}{2}.
$$
Introduce the notation 
$$
W_{1, 3,\ldots, 2m-1}[f]: = W[x, x^3, \ldots, x^{2m-1}, f] = W[H_1, H_3, \ldots, H_{2m-1}, f ]
$$
and the differential operator $D_m$ by
\begin{equation}
\label{intert1}\nonumber
D_m(f) = \frac{W_{1,3,\ldots, 2m-1}[f e^{x^2/4}]}{W_{1,3,\ldots, 2m-1}} e^{-x^2/4}.
\end{equation}
Then one can check that the following intertwining relation holds:
\begin{equation}
\label{intert2}
\left( {\mathcal H}_m+m\right) D_m = D_m {\mathcal H}_0.
\end{equation}

Define now the {\it $m$-Hermite polynomials} on the line by
\begin{equation}
\label{mdef}
H_n^{(m)}(x) =\frac{1}{ \prod_{j=1}^{m-1} (m-j)!} (2x)^{-\frac{m(m-1)}{2}} W_{1, 3, \ldots, 2m-1}[H_n].
\end{equation}
From the intertwining relation (\ref{intert2}) it follows that 
\beq{rr1}
\psi^{(m)}_n:= D_m(\psi_n) = x^{-m}H_n^{(m)}e^{-x^2/4}
\eeq
 is an eigenfunction of the CM operator ${\mathcal H}_m$:
\beq{eigenf0}
{\mathcal H}_m \psi_n^{(m)} = (n - m+\frac12) \psi_n^{(m)}. 
\eeq

The operator (\ref{CM}) in this case has the form
$$
L_m=\frac{d^2}{dx^2}-\frac{2m}{x} \frac{d}{dx}.
$$
The rational BA function, satisfying the equation
$$
L_m\phi_m(x,\lambda)=\lambda^2 \phi_m(x,\lambda),
$$
can be written as 
\begin{equation}
\label{ba1}\nonumber
\phi_m(x,\lambda) = \frac{x^m W_{1, 3, \ldots, 2m-1}[e^{ \lambda x}]}{W_{1, 3, \ldots, 2m-1}},
\end{equation}
(see \cite{CFV99}), or, more explicitly, 
\begin{equation}
\label{ba2}
\phi_m(x,\lambda)= (x \frac{d}{d x} - 2m+1)(x \frac{d}{d x} - 2m +3)\ldots (x \frac{d}{d x} - 1)e^{\lambda x}.
\end{equation}

\begin{theorem} $m$-Hermite polynomials have the generating function
\label{thm1d1} 
\beq{genfn}\nonumber
\phi_m(x,k) e^{-k^2/2} = \sum_{n=0}^\infty H_n^{(m)}(x) \frac{k^n}{n!}.
\eeq
\end{theorem}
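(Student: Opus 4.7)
The plan is to reduce the identity to the classical Hermite generating function
$$e^{kx - k^2/2} = \sum_{n=0}^\infty H_n(x)\frac{k^n}{n!}$$
by exploiting the fact that the Wronskian operator $W_{1,3,\ldots,2m-1}[\,\cdot\,]$ is linear in its last slot and differentiates only with respect to $x$, so that the scalar factor $e^{-k^2/2}$ passes through it untouched.

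First I would start from the defining formula \eqref{mdef} and form the generating series term by term:
$$\sum_{n=0}^\infty H_n^{(m)}(x)\frac{k^n}{n!} = \frac{(2x)^{-m(m-1)/2}}{\prod_{j=1}^{m-1}(m-j)!}\,\sum_{n=0}^\infty W_{1,3,\ldots,2m-1}[H_n(x)]\,\frac{k^n}{n!}.$$
Using linearity of the Wronskian in its last argument I can exchange the sum and the Wronskian to obtain
$$\sum_{n=0}^\infty W_{1,3,\ldots,2m-1}[H_n(x)]\,\frac{k^n}{n!} = W_{1,3,\ldots,2m-1}\bigl[e^{kx - k^2/2}\bigr] = e^{-k^2/2}\,W_{1,3,\ldots,2m-1}[e^{kx}],$$
since the Wronskian acts only in $x$.

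Next I would insert the explicit evaluation
$$W_{1,3,\ldots,2m-1} = \Bigl(\prod_{j=1}^{m-1}(m-j)!\Bigr)\,2^{m(m-1)/2}\,x^{m(m+1)/2}$$
computed in the text, so that the scalar prefactor becomes
$$\frac{(2x)^{-m(m-1)/2}}{\prod_{j=1}^{m-1}(m-j)!} = \frac{x^m}{W_{1,3,\ldots,2m-1}}\cdot x^{-m(m-1)/2 - m + m(m+1)/2} = \frac{x^m}{W_{1,3,\ldots,2m-1}},$$
after checking that the exponent of $x$ is indeed zero.

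Combining these two steps with the explicit formula for $\phi_m$ gives
$$\sum_{n=0}^\infty H_n^{(m)}(x)\frac{k^n}{n!} = e^{-k^2/2}\,\frac{x^m W_{1,3,\ldots,2m-1}[e^{kx}]}{W_{1,3,\ldots,2m-1}} = \phi_m(x,k)\,e^{-k^2/2},$$
as required. There is no real obstacle: the only non-trivial point is the bookkeeping of powers of $x$ and of $2$ in matching the normalising constant of $H_n^{(m)}$ to the one hidden in the denominator of $\phi_m$, which is a routine arithmetic check.
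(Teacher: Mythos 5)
Your proof is correct and is essentially the paper's own argument: the paper simply applies the operator $\Phi_m[f]=x^m W_{1,3,\ldots,2m-1}[f]/W_{1,3,\ldots,2m-1}$ to the classical generating function \eqref{firsteq}, which is exactly your combination of linearity of the Wronskian, passage of $e^{-k^2/2}$ through it, and the normalisation check $x^m/W_{1,3,\ldots,2m-1}=(2x)^{-m(m-1)/2}/\prod_{j=1}^{m-1}(m-j)!$. Your version just spells out the bookkeeping that the paper leaves implicit.
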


\begin{proof}
Consider the operator $\Phi_m$ given by $\Phi_m [f] = \frac{x^m W_{1, 3, \ldots, 2m-1}[f]}{W_{1, 3, \ldots, 2m-1}}$. The statement follows by applying the operator $\Phi_m$ to the equality \eqref{firsteq}.
\end{proof}

More properties of these polynomials are given by the following

\begin{proposition}\label{propm}
$m$-Hermite polynomials $H_n^{(m)}$ have the following properties:
\begin{enumerate}
\item
\label{prop0}
$
H_n^{(m)}(-x) = (-1)^n H_n^{(m)}(x), 
$
\item
\label{prop1}
$
H_1^{(m)} = H_3^{(m)} = \ldots =  H_{2m-1}^{(m)} = 0, 
$
\item
\label{reqeq}
${\mathcal L}_m H_n^{(m)} = n H_n^{(m)}$,
where 
$ {\mathcal L}_m = -\frac{d^2}{d x^2} + x \frac{d}{ dx} +\frac{2m}{x} \frac{d}{d x},$
\item
\label{expprop}
$
H_n^{(m)}(x)=c_{m,n}e^{-L_m/2}x^n,\, 
$
where $c_{m,n} = \prod_{k=1}^m (n-2k+1)$,
\item
\label{integrality}
$
H_n^{(m)} =c_{m,n} p_n^{(m)}(x), 
$
where $p_n^{(m)}(x)\in \Z[x]$ is a monic polynomial of degree~$n$, 
\item
$
\langle H_n^{(m)}(x)\colon n\in \Z_{\ge 0} \rangle = Q_m,
$
where $Q_m = \langle 1, x^2, x^4, \ldots, x^{2m}, x^{2m+1}, \ldots \rangle$ is the corresponding algebra of quasi-invariants.
\end{enumerate}
\end{proposition}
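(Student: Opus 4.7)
The plan is to establish the six properties mainly via the generating function of Theorem \ref{thm1d1} and the explicit form $\phi_m(x,\lambda) = P_m(x\partial_x) e^{\lambda x}$ with $P_m(t) = \prod_{k=1}^m(t-(2k-1))$ recorded in (\ref{ba2}). Since $x\partial_x$ acts diagonally on monomials, expanding $e^{\lambda x}$ in powers of $\lambda$ immediately gives
\[
\phi_m(x,\lambda) = \sum_{n\ge 0} c_{m,n} \frac{x^n \lambda^n}{n!},
\]
which is the workhorse for several parts. Property (2) is the quickest consequence: by (\ref{mdef}), the Wronskian $W[H_1,H_3,\ldots,H_{2m-1},H_{2k-1}]$ has two equal columns and so vanishes, forcing $H_{2k-1}^{(m)}=0$ for $k=1,\ldots,m$.

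For (1), the displayed expansion yields $\phi_m(-x,k)=\phi_m(x,-k)$, so Theorem \ref{thm1d1} and comparison of coefficients of $k^n/n!$ give $H_n^{(m)}(-x)=(-1)^n H_n^{(m)}(x)$. Property (3) comes from the intertwining identity (\ref{intert2}) and the eigenvalue equation (\ref{eigenf0}): gauge-conjugating $\mathcal{H}_m$ by the prefactor $x^{-m}e^{-x^2/4}$ of (\ref{rr1}) converts $\mathcal{H}_m$ into $\mathcal{L}_m - (m-\tfrac12)$ and shifts the eigenvalue from $n-m+\tfrac12$ to $n$. Property (4) follows by applying $e^{-L_m/2}$ to the expansion above and using $L_m\phi_m=\lambda^2\phi_m$ to get $e^{-L_m/2}\phi_m = e^{-\lambda^2/2}\phi_m$; Theorem \ref{thm1d1} then identifies the coefficients as $H_n^{(m)}(x) = c_{m,n} e^{-L_m/2} x^n$.

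Properties (5) and (6) are the more substantial ones. For (5), Part (4) together with the computation $L_m^j x^n = x^{n-2j}\prod_{i=0}^{j-1}(n-2i)(n-2i-2m-1)$ yields
\[
p_n^{(m)}(x) = \sum_{j=0}^{\lfloor n/2\rfloor} \frac{(-1)^j}{2^j\, j!} \prod_{i=0}^{j-1} (n-2i)(n-2i-2m-1)\, x^{n-2j},
\]
and monicity is visible from the $j=0$ term. Integrality of the coefficients is verified by a parity split: when $n$ is even the quotient $\prod_i(n-2i)/(2^j j!)$ equals $\binom{n/2}{j}$, whereas for $n$ odd the quotient $\prod_i(n-2i-2m-1)/(2^j j!)$ equals $\binom{(n-2m-1)/2}{j}$, the remaining factor being an integer in each case. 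For (6) we must show $H_n^{(m)}\in Q_m$ and that these polynomials span $Q_m$. Containment amounts to checking that the coefficients of $x^{2k-1}$, $k=1,\ldots,m$, vanish in $H_n^{(m)}$; by (1) this is automatic unless $n$ is odd with $n\ge 2m+1$, in which case setting $n-2j=2k-1$ makes $i=(n-2m-1)/2\le j-1$ an integer, forcing the factor $(n-2i-2m-1)$ in the product to vanish. Spanning then follows by a graded dimension count: the non-zero $H_n^{(m)}$ have distinct degrees $\{0,2,\ldots,2m,2m+1,2m+2,\ldots\}$, which matches a homogeneous basis of $Q_m$.

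The main obstacle is the parity/index bookkeeping underlying (5) and the containment step of (6): while each individual manipulation is elementary, the cases $n$ even versus $n$ odd must be tracked separately, and the vanishing of low-degree coefficients in (6) hinges on pinpointing the correct zero factor in the product expansion above.
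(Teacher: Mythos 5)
Your proof is correct, and the arithmetic at its core (the products $\prod_{i=0}^{j-1}(n-2i)(n-2i-2m-1)$ and the binomial-coefficient parity split) is identical to the paper's. The route differs in a few localized but genuine ways. For property (4) the paper invokes the general multidimensional Propositions \ref{Cor:eigValEq} and \ref{heat} (uniqueness of the quasi-invariant eigenfunction of $\mathcal L^{\mathcal H}$ with prescribed top term, plus $H_q=e^{-L/2}q$) and only uses Theorem \ref{thm1d1} to pin down the constant $c_{m,n}$; you instead exploit the explicit diagonal form $\phi_m(x,\lambda)=P_m(x\partial_x)e^{\lambda x}=\sum_n c_{m,n}x^n\lambda^n/n!$, which makes the whole statement a one-line coefficient comparison and keeps the argument self-contained within the one-dimensional setting. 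For property (5) the paper derives the coefficients from a recurrence extracted from the ODE $\mathcal L_m p=np$, whereas you expand $e^{-L_m/2}x^n$ directly via $L_m^jx^n$; the resulting formulas and integrality check coincide. For property (6) the paper gets quasi-invariance from property (II) of the Baker--Akhiezer function, while you verify by hand that the offending coefficients of $x,x^3,\ldots,x^{2m-1}$ vanish because the factor $(n-2i-2m-1)$ hits zero at $i=(n-2m-1)/2$; this is more computational but has the merit of not appealing to the BA axioms. The spanning argument by degree count is the same in both. In short: your version trades the paper's appeals to the general $\mathcal A$-machinery for explicit elementary computations specific to the rank-one configuration, at the cost of slightly heavier index bookkeeping; both are complete.
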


\begin{proof}
Properties \eqref{prop0} and \eqref{prop1} follow from the definition of $m$-Hermite polynomials \eqref{mdef}.
Property \eqref{reqeq} follows from formulas  \eqref{rr1}, \eqref{eigenf0} and the 
relation  
$$
\mathcal{L}_m = x^me^{x^2/4} (\mathcal{H}_m+m-\frac{1}{2}) x^{-m}e^{-x^2/4}.
$$
It follows from more general Propositions \ref{Cor:eigValEq}, \ref{heat} below that $H_n^{(m)}$ is proportional to $e^{-L_m/2}x^n$. It is easy to see from Theorem \ref{thm1d1} that the coefficient of proportionality equals $c_{m,n}$, which proves property \eqref{expprop}.

Consider a monic polynomial $p(x)$ of degree $n$, it has  the form
$$
p(x) = \sum_{i=0}^{n} a_i x^{n-i}
$$
for some $a_i\in \C$ with $a_0=1$. We have to show that the equality
\beq{reqeqp}
{\mathcal L}_m p = n p
\eeq
implies that $a_i\in \Z$ for all $i$.
It is easy to see that $a_i=0$ if $i$ is odd. Also the equality \eqref{reqeqp}  implies the following recurrence relation for any even $i$, $0\le i \le n$:
\beq{recc}
(n-i)(n-i-1-2m)a_i = - (i+2) a_{i+2},
\eeq
where we put $a_{n+1}=a_{n+2}=0$. By applying relation \eqref{recc} iteratively we get for $i=2k$, $k\in \Z$, that
\beq{successive}
\prod_{j=1}^k (n-2(j-1)) \prod_{j=1}^k (n-2m-1-2(j-1)) = (-1)^k 2^k k! a_{2k}.
\eeq
Depending on the parity of $n$ one of the two products in the left-hand side of equality \eqref{successive} is the product of $k$ successive even integers. If $n$ is even then the first product divided by 
$2^k k!$ equals $\binom{n/2}{k}$. If $n$ is odd then  the second product divided by
$2^k k!$ equals $\binom{(n-2m-1)/2}{k}$, where $\binom{x}{k}:=x(x-1)\ldots (x-k+1)/k!$ is integer for all integer $x$. This implies that $a_i\in \Z$ for all $i$.

As to the final property the quasi-invariance of $H_n^{(m)}$ follows from Theorem~\ref{thm1d1} and property (2) of the BA function. The statement that these polynomials generate the space of quasi-invariants $Q_m$ follows from property \eqref{integrality}.
\end{proof}

We note that integrality property \eqref{integrality} also follows from more general results in \cite{BHSS18}.

\begin{remark}
Arithmetical properties of Hermite polynomials were studied by Schur in remarkable papers \cite{Sch29, Sch31}. 
In particular, he proved that all polynomials $H_{2k}$ and $x^{-1}H_{2k-1}$  (with the exception of $H_2(x)=x^2-1$) are irreducible over $\mathbb Q$. It would be interesting to investigate  similar problems for our $m$-Hermite polynomials. 

We also note that combinatorial and representation theoretic interpretations of the coefficients of general Wronskians of Hermite polynomials were obtained recently in \cite{BDS20}.
\end{remark}

Here are the first few examples of $m$-Hermite polynomials for $m=1$:
$$
H_0^{(1)} = -1, \, H_1^{(1)} = 0, \, H_2^{(1)} = x^2 +1, \, H_3^{(1)} = 2 x^3,  \, H_4^{(1)} = 3 ( x^4 - 2 x^2 -1) \,
$$
$$
H_5^{(1)} = 4 x^3 ( x^2 - 5), \, H_6^{(1)} = 5 (x^6- 9 x^4 + 9 x^2  +  3), \, H_7^{(1)} = 6 x^3 ( x^4  - 14 x^2 + 35),
$$
$$
H_8^{(1)} = 7 ( x^8 - 20 x^6 + 90 x^4 - 60 x^2  -15), \,  H_9^{(1)} = 8 x^3 ( x^6 - 27 x^4 + 189 x^2  -315),
$$
$$
H_{10}^{(1)} = 9 (x^{10}- 35 x^8  + 350 x^6  - 1050 x^4  + 525 x^2  + 105).
$$

Note that $m$-Hermite polynomials $H_{2k+1}^{(m)}$, $k\in \Z_{\ge 0}$,  are odd and divisible by $x^{2m+1}$. Indeed, they are quasi-invariant and $x^{2m+1}$ is the odd quasi-invariant of the lowest degree.  

Let us define the monic polynomials $E_k^{(m)}(x)$ and $G_k^{(m)}(x)$ of degree $k$, $k\in \Z_{\ge 0}$, such that 
\beq{EGpol}\nonumber
c_{m, 2k} E_k^{(m)}(x^2) = H_{2k}^{(m)} (x), \quad
c_{m, 2k+2m+1} x^{2m+1}G_k^{(m)}(x^2) = H_{2k+2m+1}^{(m)} (x).
\eeq
 Let us recall that the generalised Laguerre polynomials $L_n^{(\alpha)}$ are polynomials of degree $n$ with the highest order coefficient $(-1)^n/n!$, which satisfy the differential equation
\beq{Lag}
z \frac{d^2 L_n^{(\alpha)}(z) }{d z^2} +(\alpha + 1 -z) \frac{d L_n^{(\alpha)}(z)}{d z} + n L_n^{(\alpha)}(z)=0.
\eeq
We have the following relation of $m$-Hermite polynomials with the generalised  Laguerre polynomials for special values of the parameter $\alpha$ (cf. \cite{BDS20} where general Wronskians of Hermite polynomials were expressed via Wronskians of Laguerre polynomials).
\begin{proposition}
We have 
$$
E_n^{(m)}(x) = (-1)^n n! L^{(\alpha)}_n(\frac12 x^2), \quad \alpha=-m-\frac12, \,
$$
and
$$
G_n^{(m)}(x) = (-1)^n n! L^{(\alpha)}_n(\frac12 x^2), \quad \alpha=m+\frac12, 
$$
for any  $n \in \Z_{\ge 0}$.
\end{proposition}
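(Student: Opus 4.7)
The plan is to derive a second-order ODE for each of $E_k^{(m)}$ and $G_k^{(m)}$ directly from the eigenvalue equation $\mathcal L_m H_n^{(m)} = n H_n^{(m)}$ of Proposition~\ref{propm}\eqref{reqeq}, and identify this ODE, after an affine change of variables, with Laguerre's equation~\eqref{Lag}.

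First I would handle the even-index case. Setting $y=x^2$ in $H_{2k}^{(m)}(x)=c_{m,2k}E_k^{(m)}(x^2)$ and applying the chain rule inside $\mathcal L_m$, the identity $\mathcal L_m H_{2k}^{(m)}=2k H_{2k}^{(m)}$ collapses (after the expected cancellation of the $\frac{2m}{x}\cdot 2x$ term against part of $\frac{d^2}{dx^2}$) into
\[
2y\,(E_k^{(m)})''(y) - (y+2m-1)(E_k^{(m)})'(y) + k\,E_k^{(m)}(y) = 0.
\]
The rescaling $z=y/2$ converts this into Laguerre's equation \eqref{Lag} with $\alpha=-m-1/2$ and index $n=k$. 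Since $E_k^{(m)}$ is a polynomial solution of degree $k$, and polynomial solutions of \eqref{Lag} are unique up to scalar, $E_k^{(m)}(y)$ is a constant multiple of $L_k^{(-m-1/2)}(y/2)$; the constant is pinned down by the monic normalisation of $E_k^{(m)}$ together with the known leading coefficient $(-1)^k/k!$ of $L_k^{(\alpha)}$.

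For the odd-index case I would use the factorisation $H_{2k+2m+1}^{(m)}(x)=c_{m,2k+2m+1}x^{2m+1}G_k^{(m)}(x^2)$ and conjugate $\mathcal L_m$ by multiplication by $x^{2m+1}$. A short calculation using $(x^{2m+1})''=(2m+1)(2m)x^{2m-1}$ and $\tfrac{2m}{x}(x^{2m+1})'=2m(2m+1)x^{2m-1}$ shows that the $x^{2m-1}$-contributions cancel, yielding
\[
x^{-(2m+1)}\mathcal L_m\bigl(x^{2m+1}\widetilde G\bigr) = -\widetilde G'' + \Bigl(x-\frac{2(m+1)}{x}\Bigr)\widetilde G' + (2m+1)\widetilde G
\]
for any smooth $\widetilde G$. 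Thus $\widetilde G(x)=G_k^{(m)}(x^2)$ satisfies the same type of eigenvalue equation as in the even case but with $m$ replaced by $-(m+1)$, and the analysis of the first step then gives $\alpha = -(-(m+1)) - 1/2 = m+1/2$, with the proportionality constant again fixed by monicity.

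The only place requiring genuine computation is the conjugation identity in the odd case; its key content is that the exponent $2m+1$ is precisely what makes the singular $x^{2m-1}$ terms vanish, so that conjugation by $x^{2m+1}$ produces a first-order perturbation of the Calogero--Moser-type operator with parameter $-m-1$. Once that identity is in hand, both cases reduce uniformly to recognising Laguerre's equation and comparing leading coefficients.
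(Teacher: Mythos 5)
Your proof is correct and follows essentially the same route as the paper's: the substitution $z=\tfrac12 x^2$ identifying the eigenvalue equation of Proposition~\ref{propm}\eqref{reqeq} with Laguerre's equation at $\alpha=-m-\tfrac12$ in the even case, and conjugation by $x^{2m+1}$ (the paper's conjugation by $x^{2\alpha}$ with $\alpha=m+\tfrac12$) in the odd case. You merely run the change of variables in the opposite direction (from the $m$-Hermite equation to Laguerre's equation rather than the reverse) and are somewhat more explicit about uniqueness of polynomial solutions and the fixing of the proportionality constant.
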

\begin{proof}
Let us do the change of variables $z=\frac12 x^{2}$ in equation \eqref{Lag}. We get the following equation:
\beq{Lag2}
 \frac{d^2 L_n^{(\alpha)}(x^2) }{d x^2} +\frac{2\alpha +1}{x} \frac{d L_n^{(\alpha)}(x^2)}{d x}  -  x \frac{d L_n^{(\alpha)}(x^2)}{d x}+ 2n L_n^{(\alpha)}(x^2)=0.
\eeq
which coincides with the equation in property \eqref{reqeq} of Proposition \ref{propm} for $H_{2n}(x)$  when  $\alpha=-m-\frac12$. 

Let us conjugate the equation \eqref{Lag2} by $x^{2\alpha}$. It follows that $f=x^{2\alpha} L_n^{(\alpha)}(x^2)$ satisfies the equation
\beq{Lag3}\nonumber
 \frac{d^2 f }{d x^2} +\frac{1-2\alpha}{x} \frac{d f}{d x}  -  x \frac{d f}{d x}+ (2\alpha +2n) f=0,
\eeq
which coincides with the equation in property \eqref{reqeq} of Proposition \ref{propm} for $H_{2n+2m+1}(x)$  when  $\alpha=m+\frac12$. 
\end{proof}
Recall that generalised Laguerre polynomials are orthogonal with respect to the bilinear form 
$$
(f, g) = \int_{0}^\infty x^{\alpha} f(x) g(x) e^{-x} dx.
$$
Note that the integral in general diverges for $\alpha\le-1$ but it can be regularised using complex contour integrals. A related problem for the corresponding $m$-Hermite polynomials was discussed in a wider context of exceptional Hermite polynomials in   \cite{HHV16}.

\section{$\mathcal A$-Hermite polynomials: definition and properties}
\label{sec3}

%
%

Let us assume now that the configuration $\mathcal A$ is such that the rational BA function $\phi(x,\lambda)$ exists and satisfies the condition $\phi(0,0)\ne 0$. 
This function is both symmetric and homogeneous in the sense that
\begin{equation}\label{sym}
\phi(x,\lambda)=\phi(\lambda,x)
\end{equation}
and
\begin{equation}\label{hom}
\phi(tx,\lambda)=\phi(x,t\lambda),\ \ \ t\in\C,
\end{equation}
respectively. Symmetry follows from Theorem 2.3 in \cite{CFV99}, whereas homogeneity is a simple consequence of the defining conditions and uniqueness.

It also satisfies the following integral identity
\begin{equation}\label{intId}
\int_{i\xi+\R^N}\frac{\phi(-iz,\lambda)\phi(iz,\mu)}{A_m(z)^2}d\gamma(z)=e^{-(\lambda^2+\mu^2)/2}\phi(\lambda,\mu),
\end{equation}
where $d\gamma(z)=(2\pi)^{-N/2}e^{-z^2/2}dz$, 
and $\xi\in\R^N$ is such that $A_m(\xi)\neq 0$,
which was established in \cite{FHV13}.

 For each homogeneous $q\in\mathcal Q_{\mathcal A}$, there exists a unique homogeneous differential operator
$$
L_q=q(\partial_{1},\ldots,\partial_{N})+\mathrm{l.o.t.}
$$
of homogeneity degree $-\deg q$ such that $[L_q,L]=0$. Such operators form a commutative ring isomorphic to $\mathcal Q_{\mathcal A}$, with $L$ corresponding to $x^2$ and the ring isomorphism given by
\begin{equation}\label{LqEq}
L_q\phi(x,\lambda)=q(\lambda)\phi(x,\lambda).
\end{equation}
Moreover, as it follows from \cite{Ber98}, \cite{Cha98}, \cite{CFV99}, 
\beq{new35}
L_q\mathcal Q_{\mathcal A}\subset\mathcal Q_{\mathcal A},\ \ \ \forall q\in\mathcal Q_{\mathcal A}.
\eeq

Attached to $\mathcal Q_{\mathcal A}$ is a natural bilinear form, given by
\begin{equation}
\label{canform}
(p,q)_m=(L_p q)(0),\ \ \ p,q\in\mathcal Q_{\mathcal A},
\end{equation}
which can be seen to be symmetric \cite{EG02A}, \cite{FV03}.
Then Theorem 3.1 in \cite{FHV13} yields the integral representation
$$
(p,q)_m=\phi(0,0)\int_{i\xi+\R^N}\frac{(e^{L/2}p)(-ix)(e^{L/2}q)(ix)}{A_m(x)^2}d\gamma(x).
$$

To proceed further, we find it convenient to renormalise the bilinear form $(\cdot,\cdot)_m$: 
we set
\begin{equation}\label{bform2}
\langle p,q\rangle_m=\phi(0,0)^{-1}(p, q)_m,\ \ \ p,q\in \mathcal Q_{\mathcal A}.
\end{equation}
Choosing a homogeneous basis $q_i$, $i\in\Z_{\ge 0}$, in $\mathcal Q_{\mathcal A}$, the defining condition (II) for $\phi$ entails that
\begin{equation}\label{phiExp}
\phi(x,\lambda)=\sum_{i=0}^\infty q_i(x)q^i(\lambda)
\end{equation}
for some polynomials $q^i$. By symmetry \eqref{sym} and homogeneity \eqref{hom}, we have $q^i\in\mathcal Q_{\mathcal A}$ and $\deg q^i=\deg q_i$. Etingof and Ginzburg \cite{EG02A} showed that the quasi-invariants $q^i$, $i\in\Z_{\ge 0}$, constitute a dual basis in $\mathcal Q_{\mathcal A}$:
\begin{equation}\label{dual}
\langle q_i,q^j\rangle_m=\delta_{ij}.
\end{equation}
(Strictly speaking, they considered only the Coxeter case but their arguments work in our more general case as well, see also \cite{FV03}.)

Let us introduce now the analogues of the Hermite polynomials in the algebra of quasi-invariants $\mathcal Q_{\mathcal A}$.
Let 
\begin{equation}\label{FExp}
F(x,\lambda) =  \phi(x,\lambda)\exp(-\lambda^2/2).
\end{equation}
We define a linear ``Hermitisation" map
$\chi_H:\mathcal Q_{\mathcal A}\to \mathcal Q_{\mathcal A}, q\mapsto H_q$, by
\begin{equation}\label{HqDef2}
H_q(x)=\langle F(x,\cdot),q(\cdot)\rangle_m,\ \ \ q\in\mathcal Q_{\mathcal A}.
\end{equation}
Polynomials $H_q(x)$ for homogeneous $q\in \mathcal Q_{\mathcal A}$ are called {\it $\mathcal A$-Hermite polynomials}. For a quasi-invariant polynomial $q$ of degree $d$ the polynomial $H_q(x)$ has the form
\begin{equation}\label{HqExp}
H_q(x)=q(x)+\sum_{n=1}^{\lfloor d/2\rfloor}H_q^{(d-2n)}(x),
\end{equation}
with $H_q^{(d^\prime)}\in\mathcal Q_{\mathcal A}[d^\prime]$, where   $\mathcal Q_{\mathcal A}[d^\prime]$ denotes
the space of  homogeneous quasi-invariants of degree $d'$. 

For the basis $q_i$ of quasi-invariants $\mathcal Q_{\mathcal A}$
we can define the corresponding $\mathcal A$-Hermite polynomials by the generating function expansion
\begin{equation}\label{FExp2}
F(x,\lambda)=\sum_{i=0}^\infty H_{q_i}(x)q^i(\lambda).
\end{equation}

\begin{proposition}
The $\mathcal A$-Hermite polynomials $H_{q_i}(x)$ form a basis in the space of quasi-invariants $\mathcal Q_{\mathcal A}$. 
\end{proposition}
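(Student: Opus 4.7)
The plan is to exploit the triangular structure of the Hermitisation map $\chi_H$ with respect to the degree filtration of $\mathcal{Q}_\mathcal{A}$. Concretely, set $\mathcal{Q}_\mathcal{A}^{\leq d} = \bigoplus_{d'\leq d}\mathcal{Q}_\mathcal{A}[d']$, which is finite-dimensional, and order the homogeneous basis $\{q_i\}_{i \in \mathbb Z_{\geq 0}}$ so that $\deg q_0 \leq \deg q_1 \leq \cdots$.

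First I would record that each $H_{q_i}$ genuinely lies in $\mathcal{Q}_\mathcal{A}$. The expansion \eqref{phiExp} shows that $\phi(x,\lambda) = \sum_j q_j(x) q^j(\lambda)$ with $q_j(x) \in \mathcal{Q}_\mathcal{A}$, and multiplying by the fully symmetric factor $e^{-\lambda^2/2}$, which is a power series in $\lambda^2$ and in particular a quasi-invariant in $\lambda$, only produces finite $\mathcal{Q}_\mathcal{A}$-linear recombinations among the coefficients $q_j(x)$. Hence the coefficients $H_{q_i}(x)$ in \eqref{FExp2} are quasi-invariant, and $\chi_H$ is indeed a well-defined linear map $\mathcal{Q}_\mathcal{A} \to \mathcal{Q}_\mathcal{A}$.

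The key step is then the triangularity statement \eqref{HqExp}: for each homogeneous $q \in \mathcal{Q}_\mathcal{A}$ of degree $d$, the $\mathcal{A}$-Hermite polynomial $H_q$ equals $q$ plus a sum of quasi-invariants of strictly smaller degree. Consequently $\chi_H$ preserves the filtration $\{\mathcal{Q}_\mathcal{A}^{\leq d}\}_{d \geq 0}$, and the induced map on the associated graded $\bigoplus_d \mathcal{Q}_\mathcal{A}[d]$ is the identity. Thus $\chi_H$ restricted to each finite-dimensional piece $\mathcal{Q}_\mathcal{A}^{\leq d}$ is represented, in the ordered basis $\{q_i : \deg q_i \leq d\}$, by a lower-triangular matrix with ones on the diagonal, hence is a linear isomorphism. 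Since every quasi-invariant lies in some $\mathcal{Q}_\mathcal{A}^{\leq d}$, it follows that $\chi_H$ is a bijection $\mathcal{Q}_\mathcal{A} \to \mathcal{Q}_\mathcal{A}$, so $\{H_{q_i}\}_{i \in \mathbb Z_{\geq 0}} = \{\chi_H(q_i)\}$ is a basis.

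There is no real obstacle here: the whole argument is a triangularity/filtration principle, and everything required — the inclusion $H_q \in \mathcal{Q}_\mathcal{A}$, the expansion \eqref{phiExp}, and the leading-term property \eqref{HqExp} — has already been established. The only mild point of care is the observation that the Gaussian factor is quasi-invariant for every $\mathcal{A}$, so that multiplication by $e^{-\lambda^2/2}$ respects quasi-invariance in $\lambda$ term-by-term; once this is noted, the proof is a one-line triangularity argument.
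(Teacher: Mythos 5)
Your proof is correct and follows essentially the same route as the paper's: quasi-invariance of the coefficients is read off from the generating function \eqref{FExp2} (the paper invokes property (II) of the BA function in the $x$-variable, while you use quasi-invariance of the Gaussian factor in $\lambda$ together with \eqref{phiExp}; both work), and linear independence plus spanning follow from the unitriangular structure \eqref{HqExp}. Nothing further is needed.
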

\begin{proof}
Quasi-invariance of the polynomials  $H_{q_i}(x)$ follows from formulae \eqref{FExp}, \eqref{FExp2} and property (II) of the BA function. It is also clear from formula \eqref{HqExp} that these polynomials are linearly independent and span the space $Q_{\mathcal A}$.
\end{proof}

Now we establish some properties of $\mathcal A$-Hermite polynomials.
Let us introduce the Euler operator
\begin{equation*}
E= E_x =  \sum_{i=1}^N x_i\partial_i
\end{equation*}
and the following ``harmonic" version of the operator (\ref{CM}):
\begin{equation}
\label{harm}
\mathcal L^{\mathcal H}= L-E =\Delta - \sum_{\alpha\in\mathcal{A}}\frac{2m_\alpha}{(\alpha,x)}\partial_\alpha-\sum_{i=1}^N x_i\partial_i.
\end{equation}

\begin{proposition}\label{Cor:eigValEq}
For $q\in\mathcal Q_{\mathcal A}[d]$, $H_q(x)$satisfies the differential equation
\begin{equation}\label{HqEq}
\mathcal L^{\mathcal H}H_q=-d H_q.
\end{equation}
\end{proposition}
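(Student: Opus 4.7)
The plan is to establish a single partial differential equation satisfied by the generating function $F(x,\lambda)=\phi(x,\lambda)e^{-\lambda^{2}/2}$, and then extract \eqref{HqEq} by applying the bilinear pairing $\langle\cdot,q\rangle_m$ in the $\lambda$-variable. Since $\chi_H$ is linear and $\mathcal{Q}_{\mathcal{A}}$ is graded, it suffices to treat the case where $q=q_j$ is a homogeneous basis element of degree $d$, and then extend by linearity.

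\smallskip

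The key intermediate identity I aim for is
$$
\mathcal{L}^{\mathcal{H}}_{x}F(x,\lambda) \;=\; -E_\lambda F(x,\lambda).
$$
To prove it, note that $L_x$ acts only on the $x$-variable, so \eqref{CMeq} gives $L_x F=\lambda^{2} F$. Differentiating the homogeneity relation \eqref{hom} in $t$ at $t=1$ yields $E_x\phi=E_\lambda\phi$. Combined with $E_\lambda e^{-\lambda^{2}/2}=-\lambda^{2}e^{-\lambda^{2}/2}$ and the Leibniz rule, one gets
$$
E_x F \;=\; (E_x\phi)\,e^{-\lambda^{2}/2} \;=\; (E_\lambda\phi)\,e^{-\lambda^{2}/2} \;=\; E_\lambda F + \lambda^{2}F.
$$
Subtracting from $L_x F=\lambda^{2}F$ cancels the $\lambda^{2}F$ terms and produces the displayed PDE.

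\smallskip

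Now I apply this to the expansion \eqref{FExp2}, in which the $q^i$ are homogeneous of degrees $d_i=\deg q_i$ by \eqref{sym} and \eqref{hom}. Term-by-term differentiation gives $E_\lambda F=\sum_i d_i H_{q_i}(x)q^i(\lambda)$, while $\mathcal{L}^{\mathcal{H}}_{x}$ slides inside the sum to give $\sum_i (\mathcal{L}^{\mathcal{H}} H_{q_i})(x)\,q^i(\lambda)$. Matching coefficients of the linearly independent $q^i$ via the PDE forces $\mathcal{L}^{\mathcal{H}} H_{q_i}=-d_i H_{q_i}$ for each $i$. For a general homogeneous $q\in\mathcal{Q}_{\mathcal{A}}[d]$, decomposing $q=\sum_{d_i=d}c_i q_i$ and using linearity of $\chi_H$ extends the identity to $\mathcal{L}^{\mathcal{H}} H_q=-dH_q$.

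\smallskip

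The main obstacle is the sign-bookkeeping in the PDE step: the cancellation hinges on $\phi$ being an $L_x$-eigenfunction with eigenvalue \emph{exactly} $\lambda^{2}$ and on the precise form $E_x\phi=E_\lambda\phi$ of the joint homogeneity, so that the potentially spoiling $\lambda^{2}F$ contributions annihilate. Once these two inputs are secured, the rest of the argument is essentially a coefficient comparison in the generating series \eqref{FExp2}.
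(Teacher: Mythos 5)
Your proof is correct and follows essentially the same route as the paper: both hinge on deriving $E_x\phi=E_\lambda\phi$ from the homogeneity \eqref{hom}, combining it with $L\phi=\lambda^2\phi$ to obtain the generating-function identity $\mathcal L^{\mathcal H}_xF=-E_\lambda F$, and then transferring this to $H_q$. The only (immaterial) difference is the last step, where you match coefficients in the expansion \eqref{FExp2} while the paper applies $\langle\cdot,q\rangle_m$ and invokes self-adjointness of $E$ with respect to the form — two phrasings of the same fact that homogeneous quasi-invariants of different degrees are orthogonal.
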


\begin{proof}
Taking $t\to 0$ in the identity $(\phi(tx,\lambda)-\phi(x,t\lambda))/t=0$ (see \eqref{hom}), we obtain
\begin{equation*}
E_x\phi(x,\lambda)-E_\lambda\phi(x,\lambda)=0.
\end{equation*}
When combining this identity with equation \eqref{CMeq}, we readily find that $F(x,\lambda)$ satisfies the differential equation
\begin{equation}\label{genFuncEq}\nonumber
\mathcal L^{\mathcal H}_xF=-E_\lambda F.
\end{equation}
Note that the operator $E$ is self-adjoint with respect to $\langle\cdot,\cdot\rangle_m$ since homogeneous quasi-invariants of different degrees are orthogonal.  
Hence, for any $q\in\mathcal Q_{\mathcal A}$, we get
\begin{equation*}
\big(\mathcal L^{\mathcal H} \chi_H\big)(q) = \big\langle \mathcal L^{\mathcal H}_xF(x,\cdot),q(\cdot)\big\rangle_m
=-  \langle F(x,\cdot),Eq (\cdot)\rangle_m
= - (\chi_H  E)(q),
\end{equation*}
which implies the statement.
\end{proof}

\begin{proposition}\label{unique}
For $q\in\mathcal Q_{\mathcal A}[d]$, $H_q(x)$ is the unique quasi-invariant polynomial of the form \eqref{HqExp} that satisfies the differential equation \eqref{HqEq}.
\end{proposition}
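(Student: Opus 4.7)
The plan is to prove uniqueness by a degree-by-degree argument, exploiting the fact that $L$ lowers homogeneity degree by $2$ while $E$ preserves it.

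Suppose $P(x)$ is another quasi-invariant polynomial of the form \eqref{HqExp} satisfying \eqref{HqEq}, and set $R = H_q - P$. Then $R$ has no homogeneous component of degree $d$, so it decomposes as
\begin{equation*}
R = \sum_{n=1}^{\lfloor d/2\rfloor} R^{(d-2n)}, \qquad R^{(d-2n)} \in \mathcal Q_{\mathcal A}[d-2n],
\end{equation*}
and $R$ still satisfies $\mathcal L^{\mathcal H} R = -d R$, i.e.\ $L R = (E - d) R$.

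Next I would extract the homogeneous components of this equation. Because $L$ is a differential operator of order $2$ with no lower-order correction in homogeneity (it is homogeneous of degree $-2$ under scaling $x \mapsto tx$), it maps $\mathcal Q_{\mathcal A}[k]$ into $\mathcal Q_{\mathcal A}[k-2]$ (the containment in $\mathcal Q_{\mathcal A}$ uses \eqref{new35}). Taking the degree-$(d-2n)$ part of $L R = (E-d) R$ gives the recursion
\begin{equation*}
L R^{(d-2(n-1))} = -2n\, R^{(d-2n)}, \qquad n = 1, 2, \ldots, \lfloor d/2 \rfloor,
\end{equation*}
since $(E-d) R^{(d-2n)} = -2n\, R^{(d-2n)}$.

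Finally I would run the induction. For $n=1$ the left-hand side is $L R^{(d)} = 0$ because $R$ has no degree-$d$ component, hence $R^{(d-2)} = 0$. Assuming $R^{(d-2(n-1))} = 0$, the recursion immediately yields $R^{(d-2n)} = 0$. Thus $R \equiv 0$ and $P = H_q$. There is no serious obstacle here: the only subtle point is making sure that the coefficients $-2n$ are nonzero for all $n$ in the relevant range, which is obvious. All ingredients (the form \eqref{HqExp}, quasi-invariance preservation by $L$, and the eigenvalue equation \eqref{HqEq}) are already available from the preceding material.
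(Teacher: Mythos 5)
Your proof is correct and follows essentially the same route as the paper: both arguments reduce the eigenvalue equation $LH_q=(E-d)H_q$ to the recurrence $2nH_q^{(d-2n)}=-LH_q^{(d-2n+2)}$ by comparing homogeneous components, using that $L$ maps $\mathcal Q_{\mathcal A}[k]$ to $\mathcal Q_{\mathcal A}[k-2]$. The paper states the recurrence directly as determining all lower components from $H_q^{(d)}=q$, while you phrase it as an induction on the difference of two solutions; these are the same argument.
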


\begin{proof}
Let us treat \eqref{HqExp} as an ansatz. From \cite{Cha98}, \cite{CFV99} we derive that 
$$
L(\mathcal Q_{\mathcal A}[d])\subset\mathcal Q_{\mathcal A}[d-2], \quad  d\geq 2. 
$$
Substituting the right-hand side of \eqref{HqExp} for $H_q$ in \eqref{HqEq}, and using 
$$
EH_q^{(d-2n)}=(d-2n)H_q^{(d-2n)},
$$
we arrive at the recurrence relation
\begin{equation*}
2nH_q^{(d-2n)}= - LH_q^{(d-2n+2)},\ \ \ n=1,\ldots\lfloor d/2\rfloor,
\end{equation*}
which implies uniqueness of the solution $H_q(x)$ such that $H_q^{(d)}=q$.
\end{proof}

The next proposition details an integral representation for $H_q$, obtained as a straightforward consequence of formula \eqref{FExp} and the integral identity \eqref{intId}.

\begin{proposition}\label{Prop:intRep}
Let $q\in\mathcal Q_{\mathcal A}[d]$. Then we have
\begin{equation}\label{intRep}
H_q(x)=\exp(x^2/2)\int_{i\xi+\R^N}\frac{q(-iz)\phi(iz,x)}{A_m(z)^2}d\gamma(z),
\end{equation}
where $d\gamma(z)=(2\pi)^{-N/2}e^{-z^2/2}dz$ and $\xi \in \R^N$ satisfies $A_m(\xi)\ne 0$ as before.
\end{proposition}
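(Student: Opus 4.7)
The plan is to combine the generating function expansion \eqref{FExp2} with the integral identity \eqref{intId}: rewrite the latter as an explicit integral representation of $F$, expand both sides in the dual basis $\{q^i\}$, and read off coefficients.

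First, renaming the variables $(\lambda,\mu) \to (x,y)$ in \eqref{intId} and multiplying both sides by $\exp(x^2/2)$, the exponential factors combine (using $F(x,y) = e^{-y^2/2}\phi(x,y)$) to give
\begin{equation*}
F(x,y) = \exp(x^2/2) \int_{i\xi+\R^N} \frac{\phi(-iz,x)\phi(iz,y)}{A_m(z)^2}\, d\gamma(z).
\end{equation*}
Next, I expand $\phi(iz,y) = \sum_j q_j(iz) q^j(y)$ via \eqref{phiExp} and interchange the sum with the $z$-integral. Comparing the resulting expansion in $\{q^j(y)\}$ with the generating function expansion $F(x,y) = \sum_i H_{q_i}(x) q^i(y)$ provided by \eqref{FExp2}, uniqueness of coefficients yields
\begin{equation*}
H_{q_i}(x) = \exp(x^2/2) \int_{i\xi+\R^N} \frac{\phi(-iz,x)\, q_i(iz)}{A_m(z)^2}\, d\gamma(z),
\end{equation*}
and by linearity in $q$ the same formula holds for every $q \in \mathcal Q_{\mathcal A}$.

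To match the form stated in \eqref{intRep}, I perform the change of variable $z \to -z$. Both $A_m(z)^2$ and $d\gamma(z)$ are invariant, so only the numerator transforms into $\phi(iz,x)\, q(-iz)$; the contour $i\xi+\R^N$ is mapped to $-i\xi+\R^N$, which is equally admissible since $A_m(-\xi)^2 = A_m(\xi)^2 \neq 0$, and relabeling $-\xi$ as $\xi$ then gives \eqref{intRep} exactly. The only nontrivial technical point is the interchange of summation with integration, which I expect to be routine given the polynomial structure of the integrand in $z$ together with the Gaussian decay provided by $d\gamma$.
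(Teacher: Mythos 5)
Your proof is correct and follows essentially the same route as the paper's: specialize one variable of the integral identity \eqref{intId} to $x$, multiply by the appropriate Gaussian factor to produce $F$, expand the remaining Baker--Akhiezer factor via \eqref{phiExp}, and identify coefficients against \eqref{FExp2}. The only cosmetic difference is that the paper takes $\mu\to x$, so that $\phi(iz,x)$ appears directly and no change of variables is needed, whereas you specialize the first variable and then recover the stated form via the substitution $z\to-z$, which you justify correctly.
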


\begin{proof}
By linearity in $q$,  it suffices to prove \eqref{intRep} for $q=q_i$, $i\in\Z_{\ge 0}$. Multiplying \eqref{intId} by $\exp(\mu^2/2)$ and taking $\mu\to x$, we obtain
\begin{equation}
\label{rr11}
F(x,\lambda)=\exp(x^2/2)\int_{i\xi+\R^N}\frac{\phi(-iz,\lambda)\phi(iz,x)}{A_m(z)^2}d\gamma(z).
\end{equation}
Due to formula \eqref{phiExp}  we have
$$
\phi(-i z,\lambda)=\sum_{j=0}^\infty q_j(-i z)q^j(\lambda).
$$
By substituting this expansion into formula \eqref{rr11} and comparing it with the expansion \eqref{FExp2} for $F(x, \lambda)$ we get the claim.
\end{proof}

\begin{proposition}\label{Prop:diffRep}
For $q\in\mathcal Q_{\mathcal A}[d]$, we have
\begin{equation*}
H_q(x)=(-1)^d\exp(x^2/2)L_q\exp(-x^2/2).
\end{equation*}
\end{proposition}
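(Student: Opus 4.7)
The plan is to combine the integral representation of $H_q$ from Proposition~\ref{Prop:intRep} with the eigenfunction property \eqref{LqEq}, by rewriting the integrand there as the result of applying $L_q$ to a simpler kernel that represents $e^{-x^2/2}$.

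First I would extract a base identity from \eqref{intId} by specialising to $\lambda=0$: the homogeneity \eqref{hom} at $t=0$ gives $\phi(0,\lambda)=\phi(x,0)=\phi(0,0)$, a nonzero constant, so dividing through by $\phi(0,0)$ yields
$$e^{-x^2/2}=\int_{i\xi+\R^N}\frac{\phi(iz,x)}{A_m(z)^2}\,d\gamma(z).$$
Next I would apply $L_q$, acting in the variable $x$, to both sides. By symmetry \eqref{sym} of $\phi$ and the eigenfunction relation \eqref{LqEq}, $L_q\phi(iz,x)=L_q\phi(x,iz)=q(iz)\phi(x,iz)=q(iz)\phi(iz,x)$, and since $q$ is homogeneous of degree $d$, $q(iz)=(-1)^d q(-iz)$. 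Hence
$$L_q e^{-x^2/2}=(-1)^d\int_{i\xi+\R^N}\frac{q(-iz)\phi(iz,x)}{A_m(z)^2}\,d\gamma(z)=(-1)^d e^{-x^2/2}H_q(x),$$
where the last equality is Proposition~\ref{Prop:intRep}. Multiplying through by $(-1)^d e^{x^2/2}$ gives the claim.

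The only non-trivial point is the interchange of the differential operator $L_q$ with the contour integral. Along $i\xi+\R^N$ the measure $d\gamma$ supplies Gaussian decay, $\phi(iz,x)$ is polynomial in $z$ times the bounded oscillatory factor $e^{i(z,x)}$, and $A_m(z)^2$ is bounded away from zero by the choice of $\xi$, so differentiation under the integral sign is routine. The rational coefficients of $L_q$ carry poles on the hyperplanes $(\alpha,x)=0$, but the computation above is valid for $x$ in the complement of these hyperplanes, and the resulting polynomial identity extends to all of $\R^N$ by continuity of both sides.
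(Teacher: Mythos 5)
Your proof is correct and follows essentially the same route as the paper's: both rest on the integral representation of Proposition~\ref{Prop:intRep}, the eigenfunction relation \eqref{LqEq} combined with homogeneity to produce the sign $(-1)^d$, and the evaluation $\int_{i\xi+\R^N}\phi(iz,x)A_m(z)^{-2}\,d\gamma(z)=e^{-x^2/2}$ obtained by setting $\lambda=0$ in \eqref{intId}. The only difference is the order in which these ingredients are assembled (you apply $L_q$ to the base identity, the paper pulls $L_q$ out of the integrand of \eqref{intRep}), which is the same computation read in the other direction.
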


\begin{proof}
By using formula  \eqref{LqEq}  and  
the fact that $q\in\mathcal Q_{\mathcal A}[d]$, we rewrite \eqref{intRep} as
\begin{equation*}
H_q(x)=(-1)^d\exp(x^2/2)L_q\int_{i\xi+\R^N}\frac{\phi(iz,x)}{A_m(z)^2}d\gamma(z).
\end{equation*}
From \eqref{phiExp} it is evident that $\phi(x,0)=\phi(0,\lambda)=\phi(0,0)$. Setting $\lambda=0$ in the identity \eqref{intId}, we thus find that
\begin{equation*}
\int_{i\xi+\R^N}\frac{\phi(iz,x)}{A_m(z)^2}d\gamma(z)=\exp(-x^2/2),
\end{equation*}
and the claim follows.
\end{proof}

The polynomials $H_q$ can also be written in the following form (cf. Theorem 3.4 in R\"osler \cite{Ros98}).

\begin{proposition}\label{heat}
For $q\in\mathcal Q_{\mathcal A}[d]$, we have
\begin{equation*}
H_q(x)=e^{-L/2} q.
\end{equation*}
\end{proposition}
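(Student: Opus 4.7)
The plan is to apply the uniqueness result of Proposition \ref{unique}: it suffices to verify that $e^{-L/2}q$ is a quasi-invariant polynomial of the form \eqref{HqExp} satisfying the eigenvalue equation \eqref{HqEq}, since $H_q$ is the unique such polynomial.

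First I would check that $e^{-L/2}q$ makes sense as a polynomial and has the correct leading term. Because $L$ lowers degree by $2$ (as recalled in the proof of Proposition \ref{unique}) and preserves $\mathcal Q_{\mathcal A}$ by \eqref{new35}, we have $L^k q = 0$ for $2k > d$. Hence $e^{-L/2}q = \sum_{k=0}^{\lfloor d/2 \rfloor}\frac{(-1)^k}{2^k k!}L^k q$ is a finite sum in $\mathcal Q_{\mathcal A}$, with leading homogeneous term $q$ in degree $d$ and lower terms in $\mathcal Q_{\mathcal A}[d-2k]$, so the ansatz \eqref{HqExp} is satisfied.

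The main step, which I expect to be the only calculation of substance, is the eigenvalue equation $\mathcal L^{\mathcal H}(e^{-L/2}q) = -d\,e^{-L/2}q$. The key commutator identity is $[E,L] = -2L$, which holds because $L$ is a homogeneous differential operator of degree $-2$; by induction $[E,L^k] = -2kL^k$, and summing the series gives
\begin{equation*}
[E, e^{-L/2}] = \sum_{k\geq 1}\frac{(-1/2)^k}{k!}(-2k)L^k = L\,e^{-L/2}.
\end{equation*}
Using this together with $[L,e^{-L/2}]=0$ and $Eq=dq$, I would compute
\begin{equation*}
\mathcal L^{\mathcal H}\bigl(e^{-L/2}q\bigr) = (L-E)e^{-L/2}q = e^{-L/2}(L-E)q - L\,e^{-L/2}q = L\,e^{-L/2}q - d\,e^{-L/2}q - L\,e^{-L/2}q = -d\,e^{-L/2}q,
\end{equation*}
which is the desired identity. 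Invoking Proposition \ref{unique} then finishes the proof.

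No step is a genuine obstacle: the only thing one must take care of is the sign bookkeeping in the commutator calculation and the observation that $L$ commutes with $e^{-L/2}$, which together collapse two terms. An alternative would be to derive the statement from Proposition \ref{Prop:diffRep} via the conjugation identity $e^{-x^2/2}(\cdot)e^{x^2/2}$ relating $L_q$ and $e^{-L/2}q(\partial)$-type operators, but the uniqueness route is shorter and avoids any extra analytic input.
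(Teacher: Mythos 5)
Your proposal is correct and follows essentially the same route as the paper: the paper likewise uses $[E,L]=-2L$ to get $[E,e^{-L/2}]=Le^{-L/2}$, computes $\mathcal L^{\mathcal H}e^{-L/2}q=-d\,e^{-L/2}q$, and concludes by the uniqueness statement of Proposition \ref{unique}. Your extra verification that $e^{-L/2}q$ is a finite sum of the form \eqref{HqExp} is a reasonable (implicitly assumed) detail, and the sign bookkeeping in your commutator computation checks out.
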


\begin{proof}
Note that $[E, L]=- 2 L$, which implies 
$
[E, e^{-L/2}] = L e^{-L/2}.
$
Therefore 
$$
\mathcal L^{\mathcal H}e^{-L/2} q=(L-E)  e^{-L/2} q = - e^{-L/2} E q = -d (e^{-L/2} q),
$$
and the statement follows by Proposition \ref{unique}.
\end{proof}

As the following theorem demonstrates, the Hermitisation map $\chi_H$ becomes a unitary map when 
 the domain has the bilinear form $\langle\cdot,\cdot\rangle_m$ and the codomain is equipped with the bilinear form
$$
\{p,q\}_m\equiv \int_{i\xi+\R^N}\frac{p(z)q(z)}{A_m(z)^2}d\gamma(z),\ \ \ p,q\in\mathcal Q_{\mathcal A}.
$$

\begin{theorem}\label{two_forms}
For $p,q\in\mathcal Q_{\mathcal A}$, we have
\begin{equation*}
\{H_p,H_q\}_m =\langle p,q\rangle_m.
\end{equation*}
\end{theorem}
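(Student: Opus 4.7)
The plan is to establish the reproducing-kernel identity
\begin{equation*}
\int_{i\xi+\R^N}\frac{F(z,\lambda)F(z,\mu)}{A_m(z)^2}\,d\gamma(z)=\phi(\lambda,\mu) \qquad (\star)
\end{equation*}
and deduce the theorem from it by comparing generating-function expansions. Granting $(\star)$, substituting \eqref{FExp2} into both factors on the left yields $\sum_{i,j}\{H_{q_i},H_{q_j}\}_m\, q^i(\lambda)\,q^j(\mu)$; on the right, using \eqref{phiExp} together with the dual-basis expansion $q_k=\sum_i\langle q_k,q_i\rangle_m\, q^i$ (an immediate consequence of \eqref{dual} and the symmetry of $\langle\cdot,\cdot\rangle_m$), one rewrites $\phi(\lambda,\mu)=\sum_{i,j}\langle q_i,q_j\rangle_m\, q^i(\lambda)\, q^j(\mu)$. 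Linear independence of the family $\{q^i(\lambda)\,q^j(\mu)\}_{i,j}$ forces $\{H_{q_i},H_{q_j}\}_m=\langle q_i,q_j\rangle_m$ for all $i,j$, and the theorem follows by bilinearity in $p,q$.

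It remains to prove $(\star)$. Using $F(z,\lambda)=\phi(z,\lambda)e^{-\lambda^2/2}$, this is equivalent to
\begin{equation*}
\int_{i\xi+\R^N}\frac{\phi(z,\lambda)\phi(z,\mu)}{A_m(z)^2}\,d\gamma(z)=e^{(\lambda^2+\mu^2)/2}\phi(\lambda,\mu),
\end{equation*}
which will be extracted from \eqref{intId}. Both sides of \eqref{intId} are entire in $(\lambda,\mu)\in\C^N\times\C^N$---the integrand has Gaussian-dominated decay on $i\xi+\R^N$ uniformly on compact subsets---so the identity extends holomorphically and we may substitute $\lambda\mapsto i\lambda$, $\mu\mapsto i\mu$. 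By the homogeneity \eqref{hom} one has $\phi(-iz,i\lambda)=\phi(z,\lambda)$, $\phi(iz,i\mu)=\phi(-z,\mu)$ and $\phi(i\lambda,i\mu)=\phi(-\lambda,\mu)$; \eqref{intId} thereby becomes
\begin{equation*}
\int_{i\xi+\R^N}\frac{\phi(z,\lambda)\phi(-z,\mu)}{A_m(z)^2}\,d\gamma(z)=e^{(\lambda^2+\mu^2)/2}\phi(-\lambda,\mu).
\end{equation*}
Applying \eqref{hom} with $t=-1$ to flip the sign in the second argument of each $\phi$, and then relabelling $\mu\mapsto-\mu$, produces the desired identity.

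The main obstacle, and essentially the only nontrivial point, is spotting the use of homogeneity \eqref{hom} under the analytic continuation $\lambda\mapsto i\lambda$, $\mu\mapsto i\mu$ that converts the Fourier-type pairing $\phi(-iz,\lambda)\cdot\phi(iz,\mu)$ in \eqref{intId} into the reproducing-kernel pairing $\phi(z,\lambda)\cdot\phi(z,\mu)$ of $(\star)$. Once $(\star)$ is available, the comparison of generating functions is routine.
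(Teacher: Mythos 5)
Your proof is correct and takes essentially the same route as the paper: your key identity $(\star)$ is precisely the paper's intermediate identity \eqref{inteq} after the substitution $\lambda\mapsto i\lambda$, $\mu\mapsto -i\mu$, both being derived from \eqref{intId} via the homogeneity \eqref{hom} with imaginary scalings. The only difference is cosmetic --- the paper keeps the phases $(\pm i)^{\deg}$ and cancels them when pairing with $\langle p,\cdot\rangle_m$ and $\langle q,\cdot\rangle_m$ (formula \eqref{pmh}), whereas you remove them at the level of the kernel identity and then compare generating-function expansions coefficientwise.
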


\begin{proof}
By linearity, we may and shall assume $p\in\mathcal Q_{\mathcal A}[d]$ and $q\in\mathcal Q_{\mathcal A}[d^\prime]$ for some $d,d^\prime\in\Z_{\ge 0}$. From the integral identity \eqref{intId}, and using property \eqref{hom} with $t=\pm i$, we infer
\begin{equation}
\label{inteq}
\int_{i\xi+\R^N}\frac{F(z,-i\lambda)F(z,i\mu)}{A_m(z)^2}d\gamma(z)=\phi(\lambda,\mu).
\end{equation}
Note that
\beq{pmh}
\langle p(\nu), F(z, \pm i \nu) \rangle_m = (\pm i)^d \langle p(\nu), F(z, \nu)\rangle_m.
\eeq
Therefore by
applying $\langle p,\cdot\rangle_m$ in $\lambda$ to the equality \eqref{inteq}, it follows from fornulas \eqref{HqDef2} and  \eqref{BAeigenf}  that
$$
(-i)^d\int_{i\xi+\R^N}\frac{H_p(z)F(z,i\mu)}{A_m(z)^2}d\gamma(z)=p(\mu).
$$
The claim now follows by applying $\langle q,\cdot\rangle_m$ in $\mu$ to this equality and using again formulae \eqref{HqDef2} and \eqref{pmh}.
\end{proof}

\section{The Lassalle--Nekrasov correspondence and multivariable $m$-Hermite polynomials}
\label{sect5}

In this section, we consider the case of the Coxeter configuration $\mathcal A$ of type $A_{N-1}$ in which $\mathcal A$ is the set of positive roots ${A_{N-1}}_+=\{e_i - e_j: 1\le i<j\le N\}\subset \R^N$ with all the roots taken with the same multiplicity $m\in \Z_{\ge 0}$. We will call the corresponding $\mathcal A$-Hermite polynomials {\it multivariable $m$-Hermite polynomials}. 

The one-dimensional $m$-Hermite polynomials  from Section \ref{m-line} correspond to a different embedding of the root system $A_1\subset \R$. Moreover, as explained in further detail at the end of this section, Lassalle's multivariable Hermite polynomials \cite{Las91} with parameter $\alpha = -1/m$ can be viewed as special instances of multivariable $m$-Hermite polynomials.


Throughout this section, we use the standard notation
$$
\Lambda_N = \C[x_1,\ldots,x_N]^{S_N}
$$
for the algebra of   symmetric polynomials in $N$ variables $x_1,\ldots,x_N$.

In the present case, the operator (\ref{harm}) takes the form
\begin{equation}
\label{LH}
\mathcal  L^{\mathcal H}=\Delta - \sum_{i<j}^N\frac{2m}{(x_i-x_j)}(\partial_i-\partial_j)-\sum_{i=1}^Nx_i\partial_i
\end{equation}
and is conjugated to the $N$-particle CM operator
$$
L^{\mathcal H}=\Delta-\sum_{i<j}^N\frac{2m(m+1)}{(x_i-x_j)^2}-\frac{1}{4}\sum_{i=1}^N x_i^2,
$$
$$
\mathcal{L}^{\mathcal H} =  \widehat \Psi_0^{-1} (L^{\mathcal H}  - \frac12 mN(N-1)+\frac{N}{2}) \widehat\Psi_0,
$$
where $\widehat \Psi_0$ is the operator of multiplication by $\Psi_0=\prod_{i<j}^N(x_i-x_j)^{-m}\exp(-x^2/4).$

It can be included in a commutative ring of differential operators $\mathcal L^{\mathcal H}_p, \, p\in\Lambda_N$, with highest order terms $p(\partial_1^2,\dots, \partial_N^2)$ (see \cite{OP83, Poly92}), given explicitly by formula (\ref{cLpH}) below.

Lassalle \cite{Las91} (in the quantum case) and Nekrasov \cite{Nek97} (both in the classical and  quantum cases) discovered a remarkable correspondence between this system and Sutherland's version of the CM system with the Hamiltonian
$$
H_N = -\Delta+\sum_{i<j}^N \frac{2m(m+1)}{\sin^2(z_{i}-z_{j})}.
$$ 
It has an eigenfunction $\Phi_0= \prod_{i<j}^N \sin^{-m} (z_i-z_j)$ with eigenvalue $\lambda_0= m^2N(N^2-1)/3$. Its gauged version $\frac{1}{4}\widehat\Phi_0^{-1} (H_N-\lambda_0) \widehat\Phi_0$, written in the exponential coordinates 
$x_j = e^{2i z_j}$, takes the form
\begin{equation}
\label{CMtrig}
 {\mathcal H}_{N}=\sum_{i=1}^N
 (x_{i}{\partial_{i}})^2-m\sum_{ i < j}^N
\frac{x_{i}+x_{j}}{x_{i}-x_{j}}(x_i\partial_{i}-x_j
\partial_{j} ).
\end{equation}


To construct the corresponding quantum integrals, Heckman \cite{Hec91} introduced the following version of the Dunkl operators:
 \begin{equation}
  \label{heckdun}\nonumber
 \cD_{i}=x_i {\partial_i}-\frac{m}{2}\sum_{j\ne i}^N\frac{x_i+x_j}{x_i-x_j}(1-s_{ij}), \ \ \ i=1,\dots, N,
 \end{equation}
 where $s_{ij}$ is the transposition acting on functions by permuting the coordinates $x_i$ and $x_j.$
 They are related to the original Dunkl operators \cite{Dun89}
 $$
D_i={\partial_i}+m\sum_{j\neq i}\frac{1}{x_i-x_j}(s_{ij}-1)
$$
by
$$
\cD_i=x_iD_i -\frac{m}{2}\sum_{j\neq i}(s_{ij}-1).
$$
Heckman's operators $\cD_i$ do not commute, but the differential operators
\begin{equation}
\label{heck}
 \mathbb L_k = {\rm Res} \,(\cD_{1}^k+\dots+\cD^k_{N}),
\end{equation}
where ${\rm Res}$ means the restriction to the space of symmetric polynomials $\Lambda_N$, do commute with each other. Since $\mathbb L_2= {\mathcal H}_{N}$ they are integrals of the quantum trigonometric CM system (\ref{CMtrig}).

For our purposes, it is convenient to replace Heckman's operators ${\mathcal D}_i$ by Polychronakos'{} operators \cite{Poly92}
$$
\pi_i = x_i D_i,\ \ \ i=1,\ldots,N.
$$
The corresponding differential operators
\begin{equation}
\label{PolyInt}
{\mathcal L}_k = {\rm Res} \,(\pi_{1}^k+\dots+\pi^k_{N})
\end{equation}
pair-wise commute and $\mathcal L_2={\mathcal H}_{N}$, (see \cite[Proposition 5.2]{SV15} for the precise relation between the two sets of quantum integrals \eqref{heck} and \eqref{PolyInt}). Note that
$$
\mathcal L_1 = \mathbb L_1=x_1\partial_1+\dots+x_N\partial_N
$$
is simply the total momentum of the system (in the exponential coordinates).

More generally, for every $p\in\Lambda_N$, there exists a unique $S_N$-invariant differential operator $\cL_p$ such that
\begin{equation}\label{cLpCond}\nonumber
\cL_pq=p(\pi_1,\ldots,\pi_N)q,\ \ \ \forall q\in\Lambda_N,
\end{equation}
which commutes with all other such operators $\cL_{r}$, $r\in\Lambda_N$.
The operator (\ref{PolyInt}) corresponds  to the Newton power sum $p_k(\xi)=\xi_1^k+\dots+\xi_N^k.$

Consider now the  algebra of quasi-invariants $\mathcal Q_{\mathcal A},$ which we in this case will denote by $\cQ_m$. It consists of the polynomials $p(\xi_1,\dots,\xi_N)$ which are invariant under permutations $s_{ij}$ of $\xi_i$ and $\xi_j$ up to order $2m$:
$$
p(\xi)-p(s_{ij}\xi)\equiv 0 \mod (\xi_i-\xi_j)^{2m+1}.
$$
The symmetric algebra $\Lambda_N$ is a subalgebra of the algebra ${\mathcal Q}_m$.

First we show that the algebra $\cQ_m$ is preserved by the operators $\cL_p.$

\begin{proposition}\label{Prop:QmAction}
For each $p\in\Lambda_N$, we have
$$
\cL_p\cQ_m\subset\cQ_m.
$$
\end{proposition}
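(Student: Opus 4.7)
My plan is to verify quasi-invariance of $\cL_p q$ for $q \in \cQ_m$ by combining the $S_N$-invariance of $\cL_p$ with an ideal-preservation property, and the proof has two stages.

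First, since $p \in \Lambda_N$ is symmetric and the Polychronakos operators $\pi_i = x_i D_i$ are permuted equivariantly by $S_N$, the operator $\cL_p$ commutes with every transposition $s_{ij}$. For any $q \in \cQ_m$ and $i < j$ this immediately gives
$$
(s_{ij} - 1)\cL_p q = \cL_p(s_{ij} - 1) q = \cL_p\bigl[(x_i - x_j)^{2m+1} f_{ij}\bigr],
$$
where $f_{ij} \in \C[x]$ is determined by the quasi-invariance of $q$. It therefore suffices to show that $\cL_p$ preserves each principal ideal $I_{ij} := (x_i - x_j)^{2m+1} \C[x_1, \ldots, x_N]$, since this bounds the failure of $s_{ij}$-symmetry of $\cL_p q$ to lie in $I_{ij}$, which is exactly the quasi-invariance condition.

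Second, to establish this ideal-preservation, I would analyze $\cL_p$ through its construction from the Polychronakos operators. Regarding $p(\pi_1, \ldots, \pi_N)$ inside the smash product $\C(x) \rtimes \C[S_N]$ and writing it as $\sum_{w \in S_N} B_w \cdot w$ with differential-operator coefficients $B_w$, the $S_N$-symmetry of $p$ forces $B_{sws^{-1}} = s B_w s^{-1}$, and $\cL_p$ equals $\sum_w B_w$ as a pure differential operator. Individual summands $B_w$ applied to $g \in I_{ij}$ may lose factors of $(x_i - x_j)$ because of the simple poles $1/(x_k - x_l)$ appearing in the Dunkl coefficients; however, the pairing of $B_w$ with $B_{s_{ij} w s_{ij}}$ in the symmetric sum forces exact cancellations that retain the full divisibility by $(x_i - x_j)^{2m+1}$.

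The main obstacle is precisely this cancellation analysis, which requires delicate tracking of poles in the rational coefficients against the vanishing of $g \in I_{ij}$. A cleaner alternative would be to use the basis of non-symmetric Jack-type eigenfunctions $E_\lambda$ of the Cherednik operators, on which all $\cL_p$ act diagonally: one identifies the subset of indices for which $E_\lambda \in \cQ_m$, verifies that their span is all of $\cQ_m$, and then the diagonal action of each $\cL_p$ trivially preserves $\cQ_m$. A third, even more conceptual route goes through the spherical subalgebra of the rational Cherednik algebra, of which $\cQ_m$ is naturally a module.
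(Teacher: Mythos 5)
Your main route stops exactly where the real content of the proposition lies. The reduction in your first stage is reasonable (modulo the unaddressed point that $\cL_p q$ must first be shown to be a \emph{polynomial} at all, since $\cL_p$ has poles along $x_i=x_j$), but the second stage — that the pairing of $B_w$ with $B_{s_{ij}ws_{ij}}$ "forces exact cancellations that retain the full divisibility by $(x_i-x_j)^{2m+1}$" — is asserted, not proved, and you acknowledge it is the main obstacle. Note also that the target of this stage, $\cL_p I_{ij}\subset I_{ij}$, is a strictly stronger-looking statement than the proposition (for $N\ge 3$ the ideal $I_{ij}$ is not contained in $\cQ_m$), so you have not reduced the problem to anything easier. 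Your second alternative is based on a false premise: the non-symmetric Jack polynomials $E_\lambda$ are eigenfunctions of the operators $p(\pi_1,\ldots,\pi_N)$, but $\cL_p$ agrees with $p(\pi)$ only on $\Lambda_N$; on non-symmetric polynomials these operators differ, so the $E_\lambda$ are not eigenfunctions of $\cL_p$ and the diagonal-action argument does not apply.

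Your third, one-sentence alternative is in fact the paper's proof, but the phrase "of which $\cQ_m$ is naturally a module" begs the question unless you supply the two inputs that make it work. The paper's argument is: (i) by Berest--Etingof--Ginzburg, the spherical subalgebra $eH_me$ is generated by $\C[x_1,\ldots,x_N]^{S_N}$ and $\C[D_1,\ldots,D_N]^{S_N}$, so $p(\pi_1,\ldots,\pi_N)e$ — and hence the differential operator $\cL_p$ — is a polynomial in operators of multiplication by symmetric $q(x)$ and the commuting integrals $L_q$; and (ii) the previously established fact (formula \eqref{new35}, from Berest, Chalykh, and Chalykh--Feigin--Veselov) that $L_q\cQ_{\mathcal A}\subset\cQ_{\mathcal A}$ for all $q\in\cQ_{\mathcal A}$. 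Since multiplication by symmetric polynomials obviously preserves $\cQ_m$, the claim follows. Without citing (i) and (ii), or proving substitutes for them, your proposal does not close the gap.
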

\begin{proof}
Let $H_m$ be the rational Cherednik algebra corresponding to the symmetric group $S_N$ and parameters $t=1, c=m$ \cite{EG02B}. It is generated by operators $D_i, x_i$, and $s_{ij}$. 

Consider the spherical subalgebra $eH_m e \subset H_m$, where
 $e= \frac{1}{N!}\sum_{w\in S_N}w$.
One can identify operators  $p(\pi_1,\ldots,\pi_N)$,  $p\in \Lambda_N$,  with elements 
$$
m_e(p):=p(\pi_1,\ldots,\pi_N)e \in eH_m e.
$$
Since, up to analogous identifications, $\C[x_1,\ldots,x_N]^{S_N}$ and $\C[D_1,\ldots,D_N]^{S_N}$ generate $eH_me$ \cite{BEG02}, we can write $p(\pi_1,\ldots,\pi_N)$ as a  polynomial in (non-commuting) operators of multiplication by $q(x_1,\ldots,x_N)$ and $q(D_1,\ldots,D_N)$ with $q\in\Lambda_N$.
Therefore the differential operator $\cL_p$ can be expressed  as a polynomial in $q(x)$ and $L_q$. 
The statement follows from formula \eqref{new35}.
\end{proof}

To proceed further, we need the following lemma (cf. \cite{DJO94} where a related property was studied).

\begin{lemma}\label{Lemma:nonDeg} 
The restriction of $(\cdot,\cdot)_m$ to $\Lambda_N$ is non-degenerate.
\end{lemma}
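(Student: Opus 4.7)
The plan is to prove non-degeneracy on $\Lambda_N$ by exhibiting $\Lambda_N$ as an orthogonal direct summand of $\mathcal{Q}_m$ with respect to $(\cdot,\cdot)_m$, and then invoking the known non-degeneracy of the form on the full space $\mathcal{Q}_m$ (which follows from the existence of the dual basis $\{q_i,q^j\}$ with $\langle q_i,q^j\rangle_m=\delta_{ij}$, together with $\phi(0,0)\neq 0$ in the Coxeter case).

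First, I would observe that the symmetric group $S_N$ acts on $\mathcal{Q}_m$, since it permutes the roots of $A_{N-1}$ while preserving the common multiplicity $m$. Averaging yields the symmetrization projector $\pi_s=\frac{1}{N!}\sum_{w\in S_N}w$, and the resulting decomposition
\[
\mathcal{Q}_m = \Lambda_N \oplus \mathcal{K}, \qquad \mathcal{K}:=\ker(\pi_s|_{\mathcal{Q}_m}).
\]
The goal is to prove that this decomposition is orthogonal for $(\cdot,\cdot)_m$.

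The key step, and the place where I expect most of the work to sit, is showing that for every $p\in\Lambda_N$ the operator $L_p$ is $S_N$-equivariant. I would deduce this from the characterization $L_p\phi(x,\lambda)=p(\lambda)\phi(x,\lambda)$, using the $S_N$-symmetry of the Baker--Akhiezer function (both in $x$ and in $\lambda$ in the Coxeter case) and uniqueness of $L_p$: for $w\in S_N$, the operator $wL_pw^{-1}$ acts on $\phi$ by the eigenvalue $p(w\lambda)=p(\lambda)$ (since $p\in\Lambda_N$), hence agrees with $L_p$. Alternatively, $wL_pw^{-1}$ has the same symmetric leading symbol $p(\partial)$ as $L_p$ and commutes with $L$, so uniqueness forces equality.

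Given $S_N$-equivariance of $L_p$, the subspace $\mathcal{K}$ is preserved by $L_p$: if $\sum_w wq=0$ then $\sum_w w(L_pq)=L_p\sum_w wq=0$. Finally, since $S_N$ fixes the origin one has $(wf)(0)=f(0)$ for any polynomial $f$; hence every $f\in\mathcal{K}$ satisfies $0=(\pi_s f)(0)=f(0)$. Applied to $f=L_pq$ with $p\in\Lambda_N$ and $q\in\mathcal{K}$, this gives $(p,q)_m=(L_pq)(0)=0$, which establishes the orthogonality $\Lambda_N\perp\mathcal{K}$. Since the form on $\mathcal{Q}_m$ is non-degenerate and $\Lambda_N$ is an orthogonal summand, its restriction is non-degenerate, as required.
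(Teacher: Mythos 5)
Your proof is correct, but it follows a genuinely different route from the one in the paper. The paper works entirely inside $\Lambda_N$: it considers the radical $\Ker(m)$ of the restricted form, shows it is a graded ideal invariant under all $L_q$, takes a nonzero homogeneous element of minimal degree, observes that it must be $m$-harmonic, and then invokes the result of \cite{FV02} that the only $S_N$-invariant $m$-harmonic polynomials are the constants. You instead deduce the statement from non-degeneracy of $(\cdot,\cdot)_m$ on all of $\cQ_m$ (which indeed follows from the Etingof--Ginzburg duality \eqref{dual} together with $\phi(0,0)\neq 0$, the form being graded with finite-dimensional homogeneous components), by exhibiting $\Lambda_N$ as an orthogonal summand via the symmetrizer $\pi_s$. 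The key step --- $S_N$-equivariance of $L_p$ for $p\in\Lambda_N$ --- is sound; your second argument for it (same leading symbol $p(\partial)$, same homogeneity degree, commutation with the $S_N$-invariant operator $L$, hence equality by uniqueness) is the cleaner one, since the first as phrased leans on ``$S_N$-symmetry of $\phi$ in $x$'', which should more precisely be the equivariance $\phi(wx,w\lambda)=\phi(x,\lambda)$ rather than invariance in $x$ alone. As for what each approach buys: yours trades the nontrivial input about $m$-harmonic polynomials for the (also nontrivial, but already quoted in Section~\ref{sec3}) duality \eqref{dual}, and as a bonus produces the stronger structural statement $\cQ_m=\Lambda_N\oplus\ker\pi_s$ as an orthogonal decomposition; the paper's argument stays inside $\Lambda_N$ and does not need non-degeneracy on the ambient space at all.
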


\begin{proof}
Let
$$
\Ker(m)=\{p\in \Lambda_N : (p,q)_m=0,\, \forall q\in \Lambda_N\}.
$$
For $p\in\Rad(m)$ and $q,q^\prime\in \Lambda_N$, we have
\begin{equation*}
\begin{split}
(L_qp,q^\prime)_m&=(p,qq^\prime)_m=0.
\end{split}
\end{equation*}
Hence $\Ker(m)$ is a graded ideal in $\Lambda_N$ which is invariant under the operators $L_q$. Assuming $\Rad(m)\neq\{0\}$, we choose a non-zero homogenous $p\in \Ker(m)$ of minimal degree. By the invariance of $\Ker(m)$ and minimality of $p$,
$$
L_q p=0,\ \ \ \forall q\in \Lambda_N,
$$
that is  $p$ is $m$-harmonic in the terminology of \cite{FV02}. Since   the subspace of $S_N$-invariant $m$-harmonic polynomials coincides with $\C$ we must have $p\in\C$. Then $(p,p)_m=p^2$, which implies $p=0$, and the statement follows. 
\end{proof}

This enables us to establish self-adjointness of the operators $\cL_p$ with respect to the bilinear form $(\cdot,\cdot)_m$.

\begin{proposition}\label{Prop:selfAdj}
For each $p\in \Lambda_N$, we have
\begin{equation}\label{selfAdj}
(\cL_p q,q^\prime)_m=(q,\cL_pq^\prime)_m,\ \ \ q,q^\prime\in\cQ_m.
\end{equation}
\end{proposition}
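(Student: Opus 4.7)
The strategy is to identify the adjoint of $\cL_p$ with respect to $(\cdot,\cdot)_m$ via an anti-involution of the rational Cherednik algebra $H_m$. First I would establish the basic adjunction: for every $r \in \cQ_m$, the multiplication operator $M_r$ and the differential operator $L_r$ are mutually adjoint with respect to $(\cdot,\cdot)_m$ on $\cQ_m$. Indeed, since $s \mapsto L_s$ is a ring homomorphism (so that $L_{rq} = L_r L_q$) and these operators mutually commute,
$$
(rq, q')_m = (L_{rq} q')(0) = (L_r L_q q')(0) = (L_q L_r q')(0) = (q, L_r q')_m.
$$
Consequently, any operator on $\cQ_m$ written as a non-commutative polynomial in $\{M_r, L_r\}_{r\in\Lambda_N}$ has adjoint obtained by reversing all products and swapping $M_r \leftrightarrow L_r$.

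Next I would invoke the standard anti-involution $*$ on $H_m$ given by $x_i^* = D_i$, $D_i^* = x_i$, and $w^* = w^{-1}$ for $w \in S_N$. It fixes every Polychronakos operator, $\pi_i^* = (x_i D_i)^* = D_i^* x_i^* = x_i D_i = \pi_i$, and the symmetriser $e$, so $*$ restricts to an anti-involution of the spherical subalgebra $eH_me$. Under the identification from the proof of Proposition~\ref{Prop:QmAction}, for $r \in \Lambda_N$ the generators $r(x)e$ and $r(D)e$ of $eH_me$ correspond to $M_r$ and $L_r$ respectively, and $*$ interchanges them since $r(x)^* = r(D)$ (the $x_i$ and the $D_i$ commute among themselves, so ordering is irrelevant for symmetric $r$). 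By the first paragraph, $*$ therefore implements the adjoint with respect to $(\cdot,\cdot)_m$. It remains to verify that $m_e(p) = p(\pi_1,\ldots,\pi_N) e$ is $*$-fixed for every $p \in \Lambda_N$: for the power-sum generators $p_k = \sum_i \xi_i^k$ one has $(\pi_i^k)^* = (\pi_i^*)^k = \pi_i^k$, so $m_e(p_k)^* = m_e(p_k)$; the general case follows because $\{m_e(p) : p \in \Lambda_N\}$ is commutative and the product of two commuting $*$-fixed elements is $*$-fixed, $(ab)^* = b^* a^* = ba = ab$. Together these steps yield $\cL_p^* = \cL_p$ with respect to $(\cdot,\cdot)_m$ on $\cQ_m$, which is \eqref{selfAdj}.

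The main obstacle is making the correspondence in the second paragraph fully rigorous: one must confirm that the Cherednik-algebra anti-involution of $eH_me$ genuinely induces the adjoint with respect to $(\cdot,\cdot)_m$ on $\cQ_m$. This rests on the fact that for $r \in \Lambda_N$ the operator $r(D)$ preserves $\cQ_m$ and acts there as $L_r$—a statement in the symmetric case due to Chalykh--Veselov and extended to quasi-invariants in the Berest--Etingof--Ginzburg framework already invoked in the proof of Proposition~\ref{Prop:QmAction}—combined with the elementary adjunction from the first paragraph, which translates the abstract swap $r(x)e \leftrightarrow r(D)e$ under $*$ into the concrete swap $M_r \leftrightarrow L_r$ of adjoints.
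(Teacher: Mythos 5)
Your proof is correct, but it takes a genuinely different route from the paper's. The paper first reduces \eqref{selfAdj} to symmetric arguments: it observes that $\cL_p$ and its adjoint are differential operators with poles only along the diagonals, hence determined by their action on $\Lambda_N$ (Opdam's lemma), and uses the non-degeneracy of $(\cdot,\cdot)_m$ on $\Lambda_N$ (Lemma~\ref{Lemma:nonDeg}) to conclude that it suffices to check the identity for $q,q'\in\Lambda_N$; there it computes with the Dunkl--de Jeu--Opdam pairing $[p,q]_m=(p(D)q)(0)$ and its adjunction properties $[x_jp,q]_m=[p,D_jq]_m$, $[D_jp,q]_m=[p,x_jq]_m$, which give $[\pi_jp,q]_m=[p,\pi_jq]_m$ and hence the claim. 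You instead stay on $\cQ_m$ throughout: you derive the adjunction $M_r^*=L_r$ directly from the definition of $(\cdot,\cdot)_m$ (the paper asserts $L_q^*=q$, $q^*=L_q$ without proof, so this is a useful addition) and then transport the standard anti-involution $x_i\leftrightarrow y_i$ of $H_m$ to the adjoint on $\cQ_m$, concluding from the $*$-fixedness of $p(\pi_1,\ldots,\pi_N)e$. The algebraic heart is the same in both arguments --- the DJO relations are precisely the statement that $x\leftrightarrow D$ implements the adjoint of $[\cdot,\cdot]_m$, and $[\pi_jp,q]_m=[p,\pi_jq]_m$ is your $\pi_i^*=\pi_i$ --- but your version dispenses with the reduction to $\Lambda_N$ and with Lemma~\ref{Lemma:nonDeg}, at the price of leaning on the Berest--Etingof--Ginzburg identification of $eH_me$ with invariant differential operators on quasi-invariants to ensure that the word in $M_r$ and $L_r$ representing $\cL_p$ yields a well-defined operator on $\cQ_m$, independent of the chosen expression; alternatively you can close that gap with the same Opdam lemma the paper uses, since any two such words agree on $\Lambda_N$. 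One small point to make explicit: your displayed computation only gives $(rq,q')_m=(q,L_rq')_m$; the reverse adjunction $(L_rq,q')_m=(q,rq')_m$, needed to take adjoints of arbitrary words, uses the symmetry of $(\cdot,\cdot)_m$ recorded in Section~\ref{sec3}.
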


\begin{proof}
We recall that any differential operator in variables $x_1,\ldots,x_N$ that has rational coefficients with poles located only along the hyperplanes $x_i=x_j$, $1\leq i<j\leq N$, is completely determined by its action on $\Lambda_N$, see e.g.~Lemma 2.8 in \cite{Opd95}. Both $\cL_p$ and its dual $\cL_p^*$ are such differential operators. This is immediate for ${\mathcal L}_p$, and it follows for $\cL_p^*$   from the proof of Proposition \ref{Prop:QmAction} and the fact that $L_q^*=q$, $q^* = L_q$ for all $q\in \cQ_m$. By  Lemma \ref{Lemma:nonDeg}, it thus suffices to prove \eqref{selfAdj} for $q,q^\prime\in\Lambda_N$.

To this end, we recall the bilinear form \cite{DJO94}
$$
[p,q]_m := \big(p(D_1,\ldots,D_N)q\big)(0),\ \ \ p,q\in\C[x_1,\ldots,x_N],
$$
and its  properties
\beq{djoformprop}
[x_j p,q]_m=[p,D_jq]_m, \quad [D_j p,q]_m=[p,x_j q]_m,
\eeq 
for all $j=1,\ldots,N$.
It follows from relations \eqref{djoformprop}  that
\beq{self1}
[\pi_jp,q]_m=[p,\pi_jq]_m.
\eeq
Now let $q,q^\prime\in\Lambda_N$. We deduce by the property \eqref{self1} that
$$
(\cL_p q,q^\prime)_m=\big[p(\pi_1,\ldots,\pi_N)q,q^\prime\big]_m=\big[q,p(\pi_1,\ldots,\pi_N)q^\prime\big]_m=(q,\cL_pq^\prime)_m,
$$
and the claim follows.
\end{proof}

As a simple consequence, we can now prove that the BA function $\phi(x, \lambda)$ satisfies the following remarkable symmetry property.

\begin{theorem}\label{phisym}
For any $p\in\Lambda_N$, we have
\begin{equation}
\label{phiId}\nonumber
\cL_{p,x}\phi(x,\lambda)=\cL_{p,\lambda}\phi(x,\lambda),
\end{equation}
where notations ${\mathcal L}_{p, x}$ and ${\mathcal L}_{p, \lambda}$ stand for the operator ${\mathcal L}_{p}$ acting in the variables $x=(x_1,\ldots, x_N)$ and $\lambda = (\lambda_1, \ldots, \lambda_N)$, respectively.
\end{theorem}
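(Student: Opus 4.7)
The plan is to deduce the identity from the dual-basis expansion \eqref{phiExp} together with the self-adjointness of $\cL_p$ (Proposition \ref{Prop:selfAdj}). Fix a homogeneous basis $\{q_i\}$ of $\cQ_m$ with dual basis $\{q^i\}$ so that $\langle q_i, q^j\rangle_m = \delta_{ij}$ and $\phi(x,\lambda) = \sum_i q_i(x) q^i(\lambda)$. Since each Polychronakos operator $\pi_j = x_j D_j$ is homogeneous of degree zero, so is $\cL_p = \mathrm{Res}\, p(\pi_1,\ldots,\pi_N)$; combined with Proposition \ref{Prop:QmAction} this shows that $\cL_p$ preserves each graded component $\cQ_m[d]$. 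One may therefore write, within a single graded piece, the finite expansions
\begin{equation*}
\cL_p q_i = \sum_j a_{ij} q_j, \qquad \cL_p q^j = \sum_i b_{ji} q^i,
\end{equation*}
whose coefficients are extracted by pairing against the dual basis:
\begin{equation*}
a_{ij} = \langle \cL_p q_i, q^j\rangle_m, \qquad b_{ji} = \langle q_i, \cL_p q^j\rangle_m.
\end{equation*}

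Next I would apply $\cL_{p,x}$ and $\cL_{p,\lambda}$ termwise to $\phi(x,\lambda) = \sum_i q_i(x) q^i(\lambda)$ to obtain
\begin{equation*}
\cL_{p,x}\phi(x,\lambda) = \sum_{i,j} a_{ij}\, q_j(x) q^i(\lambda), \qquad \cL_{p,\lambda}\phi(x,\lambda) = \sum_{i,j} b_{ji}\, q_j(x) q^i(\lambda).
\end{equation*}
The self-adjointness of $\cL_p$ with respect to $\langle\cdot,\cdot\rangle_m = \phi(0,0)^{-1}(\cdot,\cdot)_m$ follows immediately from Proposition \ref{Prop:selfAdj}, and gives $a_{ij} = b_{ji}$. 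This matches the two sums term-by-term, and the identity $\cL_{p,x}\phi(x,\lambda) = \cL_{p,\lambda}\phi(x,\lambda)$ follows.

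The main obstacle I anticipate is purely bookkeeping: justifying the termwise action of the differential operators on the infinite series. This is mild, since working on a single bi-homogeneous component (where all sums are finite) reduces everything to the finite-dimensional identity above, and the bi-homogeneous components of $\phi$ are uniquely determined by it as an analytic function of $(x,\lambda)$. Stripped of this bookkeeping, the theorem is really the encoding into the generating function of the transposition relation $\langle \cL_p q_i, q^j\rangle_m = \langle q_i, \cL_p q^j\rangle_m$.
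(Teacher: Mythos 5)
Your proof is correct and is essentially the same as the paper's: both expand $\phi$ in the dual bases, identify the matrix coefficients $\langle \cL_p q_i, q^j\rangle_m$ and $\langle q_i, \cL_p q^j\rangle_m$, and conclude from the self-adjointness in Proposition \ref{Prop:selfAdj}. Your extra remark that $\cL_p$ is homogeneous of degree zero, so that everything reduces to finite sums on each graded component, is a small but legitimate tightening of the same argument.
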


\begin{proof}
Let $q_i$ and $q^i$ be dual bases of quasi-invariants ${\mathcal Q}_m$ so that formulas \eqref{phiExp}--\eqref{dual} hold.
Defining
$$
a_{i}^j := \langle\cL_p q_i, q^j\rangle_m, \quad  b_{i}^j:= \langle q_i,\cL_p q^j\rangle_m,
$$
we have by Proposition \ref{Prop:selfAdj} and formula \eqref{bform2} that 
$
a_{i}^j  = b_{i}^j
$
for all $i,j\in\Z_{\ge 0}$. It follows that
$$
\cL_{p,x}\phi(x,\lambda) = \sum_i q^i(\lambda)  \sum_j a_{i}^jq_j(x)  
= \sum_jq_j(x)\sum_i b_i^{j}q^i(\lambda)
= \cL_{p,\lambda}\phi(x,\lambda).
$$
\end{proof}


For $p\in\Lambda_N$, let $\cL_p=\cL_{p,x}$, and let $L$ be given by  (\ref{CM}), which in this case is
$$
L = \Delta - \sum_{i<j}\frac{2m}{x_i-x_j}(\partial_i-\partial_j).
$$
Following Baker and Forrester \cite{BF97}, we consider the differential operators 
\begin{equation}
\label{cLpH}
\cL_p^{\mathcal H}=\cL_{p}+\frac{1}{2}[\cL_{p},L]+\frac{1}{2^22!}[[\cL_{p},L],L]+\cdots+\frac{1}{2^dd!}[\cdots[\cL_{p},L],\ldots,L],
\end{equation}
where the last term contains $d=\deg p$ commutators. 

The operators $\cL_p^{\mathcal H}$ have order $ 2 \deg p$ and the constant coefficient highest order terms $(-1)^d p(\partial_1^2,\dots, \partial_N^2).$
In particular, when $p(\xi)=\xi_1+\dots+\xi_N$ the corresponding operator
$$
\cL_p^{\mathcal H}=\sum_{i=1}^N x_i\partial_i-L=-\mathcal  L^{\mathcal H}
$$
coincides up to a sign with the CM operator (\ref{LH}). As we will see below, these operators commute and thus are quantum integrals of the CM system with harmonic term. This statement can also be extracted from the work of Baker and Forrester \cite{BF97}.

Let $\chi_H\colon \cQ_m \to \cQ_m$ be the Hermitisation map defined by (\ref{HqDef2}), and $\cL_{p},\cL_p^{\mathcal H}$ the operators defined above. Note that both of them preserve the algebra of quasi-invariants due to Proposition \ref{Prop:QmAction} and the fact that $L$ preserves  $\cQ_m.$

We are now ready to state and prove our generalisation of the Lassalle-Nekrasov correspondence to the algebra of quasi-invariants $\cQ_m$.

\begin{theorem}\label{intertwine}
The Hermitisation map $\chi_H$ intertwines the trigonometric and rational  harmonic CM operators and their quantum integrals. 
More precisely, the diagram
\[
\begin{tikzcd}
\cQ_m \arrow[r,"\chi_H"] \arrow[d,swap,"{\mathcal L}_p"] &
  \cQ_m \arrow[d,"{\mathcal L}^{\mathcal H}_p"] \\
\cQ_m  \arrow[r,"\chi_H"] & \cQ_m
\end{tikzcd}
\]
is commutative for all $p\in\Lambda_N$, that is 
$$
\cL_p^{\mathcal H}\circ \chi_H = \chi_H\circ\cL_p.
$$
\end{theorem}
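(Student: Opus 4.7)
The plan is to identify the Hermitisation map with the exponential operator $e^{-L/2}$ via Proposition \ref{heat}, and then recognise the Baker--Forrester series \eqref{cLpH} as the conjugation $e^{-L/2}\mathcal{L}_p e^{L/2}$ via the Baker--Campbell--Hausdorff formula. By Proposition \ref{heat}, $\chi_H q = e^{-L/2} q$ for every $q \in \mathcal{Q}_m$, and since $L$ is homogeneous of degree $-2$ it is locally nilpotent on the graded space $\mathcal{Q}_m$, so $e^{\pm L/2}$ act as honest, mutually inverse linear maps there. The required intertwining $\mathcal{L}_p^{\mathcal{H}}\circ\chi_H = \chi_H\circ\mathcal{L}_p$ is therefore equivalent to the operator identity $\mathcal{L}_p^{\mathcal{H}} = e^{-L/2}\mathcal{L}_p e^{L/2}$ on $\mathcal{Q}_m$.

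To establish this operator identity, I would expand the right-hand side by Baker--Campbell--Hausdorff:
$$e^{-L/2}\mathcal{L}_p e^{L/2} = \sum_{k \geq 0}\frac{(-1)^k}{2^k k!}\operatorname{ad}_L^k(\mathcal{L}_p) = \sum_{k \geq 0}\frac{1}{2^k k!}\underbrace{[[\cdots[\mathcal{L}_p, L], L], \cdots, L]}_{k\text{ copies of } L},$$
the sign flip arising from $\operatorname{ad}_L^k(X) = (-1)^k[[\cdots[X, L], L], \cdots, L]$. This matches the defining series \eqref{cLpH} for $\mathcal{L}_p^{\mathcal{H}}$ term by term, provided the sum truncates at $k = d = \deg p$. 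Granting this truncation, the argument is completed by the one-line computation
$$\chi_H\mathcal{L}_p q = e^{-L/2}\mathcal{L}_p q = \bigl(e^{-L/2}\mathcal{L}_p e^{L/2}\bigr)\, e^{-L/2} q = \mathcal{L}_p^{\mathcal{H}} e^{-L/2} q = \mathcal{L}_p^{\mathcal{H}} \chi_H q.$$

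I expect the main subtlety to lie in the truncation of the commutator series at $k = d$, since the order of $\mathcal{L}_p$ as a differential operator does actually grow under iterated commutators with $L$. One clean route is through the $\mathfrak{sl}_2$-triple generated by $-L/2$, $x^2/2$, and the Euler-type element $-E - N/2 + mN(N-1)/2$ inside the spherical rational Cherednik subalgebra $eH_m e$ used in the proof of Proposition \ref{Prop:QmAction}: under its adjoint action, $\mathcal{L}_p = \mathrm{Res}(p(\pi_1, \ldots, \pi_N))$ is shown to lie in a finite-dimensional sub-representation, whence $\operatorname{ad}_L^k(\mathcal{L}_p) = 0$ for $k$ larger than the length of that representation, which is $d$. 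Alternatively, one can invoke the Baker--Forrester analysis of \cite{BF97}, where the precise termination at $k = d$ is verified directly from Polychronakos' expression for $\mathcal{L}_p$ in terms of $\pi_i = x_i D_i$.
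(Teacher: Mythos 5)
Your route is genuinely different from the paper's and is correct in outline. The paper works on the spectral side: it conjugates $\cL_{p,\lambda}$ by the Gaussian multiplication operator $e^{-\lambda^2/2}$, where the commutator series \eqref{conj1} truncates for free (commuting with a multiplication operator strictly lowers the order of a differential operator), then transports the result to the $x$-variables via Theorem \ref{phisym} and $L_x\phi=\lambda^2\phi$, and finishes by applying the self-adjointness of $\cL_p$ (Proposition \ref{Prop:selfAdj}) to the generating function $F(x,\lambda)$. You instead conjugate on the physical side by $e^{-L/2}$, using Proposition \ref{heat} to identify $\chi_H$ with $e^{-L/2}$ (this is legitimate and non-circular, and $e^{\pm L/2}$ do preserve $\cQ_m$ by Proposition \ref{Prop:QmAction} and \eqref{new35}). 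The two proofs are essentially adjoint to each other under $\langle\cdot,\cdot\rangle_m$; yours is shorter, needs neither Theorem \ref{phisym} nor the integral machinery behind $\chi_H$, and makes the operator identity \eqref{HamCMSn} --- which the paper only obtains afterwards, in Section \ref{automLN}, as a \emph{consequence} of this theorem --- the engine of the proof. The price is that all the difficulty is shifted into the truncation of $\operatorname{ad}_L^k(\cL_p)$ at $k=\deg p$, which on the paper's side of the correspondence is trivial.

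That truncation is the one real gap. You rightly note that order counting fails, and the $\mathfrak{sl}_2$ idea is the correct one, but the assertion that the relevant sub-representation has ``length $d$'' is exactly what needs proof: a weight-zero vector in a locally finite $\mathfrak{sl}_2$-module can a priori be lowered many more than $d$ times. The clean argument goes through the PBW filtration of $H_m$: the element $p(\pi_1,\ldots,\pi_N)$ has degree at most $d$ in the generators $x_i$, and each application of $\operatorname{ad}_{\sum_i y_i^2}$ lowers this $x$-degree by one (since $[y_i,x_j]\in\C S_N$), so $\operatorname{ad}_{\sum_i y_i^2}^{\,d+1}\bigl(p(\pi)\bigr)=0$ already in $H_m$; restricting to $\Lambda_N$ and invoking Opdam's lemma, as in the proof of Proposition \ref{Prop:selfAdj}, upgrades this to $\operatorname{ad}_L^{\,d+1}(\cL_p)=0$ as an identity of differential operators. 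With that step supplied your argument is complete, and it delivers Corollary \ref{cor} at the same stroke. Your alternative of citing \cite{BF97} for the truncation would also close the gap, but at the cost of importing precisely the computation this paper is trying to replace by a conceptual argument.
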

\begin{proof}
As before we will use subscript $x$ or $\lambda$ to indicate the variables the operators act in. 
Let us start with the following conjugation relation
\begin{multline}
\label{conj1}
e^{\lambda^2/2}\cL_{p,\lambda}e^{-\lambda^2/2}
=\cL_{p,\lambda}+\frac{1}{2}[\lambda^2,\cL_{p,\lambda}]+\frac{1}{2^22!}[\lambda^2,[\lambda^2,\cL_{p,\lambda}]] \\
  + \cdots 
+ \frac{1}{2^dd!}[\lambda^2,\ldots,[\lambda^2,\cL_{p,\lambda}]\cdots],
\end{multline}
where the last commutator is applied $d=\deg p$ times. Here we used the 
standard formula $Ad_{e^X} = e^{ad_X}$ with $X$ being the operator of multiplication by $\lambda^2/2$, and the
fact that the 
operator $ad_X$   is lowering the order of a differential operator at least by one, so that after $d=\deg p$ times the commutator becomes zero.
 
Now using formula \eqref{conj1}, Theorem \ref{phisym},  and the eigenfunction property   $L_x\phi = \lambda^2\phi$ we get
\begin{equation*}
\begin{split}
\cL_{p,\lambda}\ F(x, \lambda)&=
e^{-\lambda^2/2}(e^{\lambda^2/2}\cL_{p,\lambda}e^{-\lambda^2/2})\phi \\
&
=\Big(\cL_{p,x}+\frac{1}{2}[\cL_{p,x},L_x]+\frac{1}{2^22!}[[\cL_{p,x},L_x],L_x] \\
&
+ \cdots+\frac{1}{2^dd!}[\cdots[\cL_{p,x},L_x],\ldots,L_x]\Big)\phi\ e^{-\lambda^2/2} ={\mathcal L}^{\mathcal H}_{p, x} F(x, \lambda). 
\end{split}
\end{equation*}
Therefore by Proposition \ref{Prop:selfAdj} we get 
\begin{equation*}
\big(\cL_p^{\mathcal H} \chi_H\big)(q) = \big\langle \cL_{p,x}^{\mathcal H} F(x,\cdot),q(\cdot)\big\rangle_m
=\langle F(x,\cdot),\cL_pq(\cdot)\rangle_m
= (\chi_H\cL_p)(q),
\end{equation*}
as required.
\end{proof}

As a corollary, the commutativity of the operators $\cL_p^{\mathcal H}$ immediately follows. 

\begin{corollary}\label{cor}
The operators (\ref{cLpH}) commute:
\beq{BFcom}
[\cL_p^{\mathcal H}, \cL_q^{\mathcal H}]=0
\eeq
for all $p,q \in \Lambda_N$ and thus are quantum integrals of the CM system in an harmonic field (\ref{LH}).
\end{corollary}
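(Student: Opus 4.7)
The plan is to deduce commutativity of the $\cL_p^{\mathcal H}$ from the already established commutativity of the $\cL_p$ by transporting the relation through the Hermitisation map. First I would recall that by construction the trigonometric Polychronakos-type operators $\{\cL_p : p \in \Lambda_N\}$ pair-wise commute (this is stated just after formula (\ref{PolyInt})), and by Proposition \ref{Prop:QmAction} they preserve the quasi-invariant algebra $\cQ_m$. Hence $[\cL_p, \cL_q] = 0$ as operators on $\cQ_m$ for all $p, q \in \Lambda_N$.

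Next I would observe that the Hermitisation map $\chi_H \colon \cQ_m \to \cQ_m$ is a linear bijection. Indeed, for a homogeneous basis $\{q_i\}$ of $\cQ_m$, the expansion (\ref{HqExp}) shows that $\chi_H(q_i) = H_{q_i}$ has leading term $q_i$, so $\{H_{q_i}\}$ is again a basis of $\cQ_m$ (as already recorded in the Proposition preceding Proposition \ref{Cor:eigValEq}). Thus $\chi_H$ is invertible on $\cQ_m$. Now I would apply Theorem \ref{intertwine} twice: on $\cQ_m$,
\begin{equation*}
\cL_p^{\mathcal H} \cL_q^{\mathcal H} \chi_H = \cL_p^{\mathcal H} \chi_H \cL_q = \chi_H \cL_p \cL_q = \chi_H \cL_q \cL_p = \cL_q^{\mathcal H} \cL_p^{\mathcal H} \chi_H,
\end{equation*}
and since $\chi_H$ is a bijection on $\cQ_m$, this yields $[\cL_p^{\mathcal H}, \cL_q^{\mathcal H}] = 0$ as linear operators on $\cQ_m$, hence in particular on the subalgebra $\Lambda_N \subset \cQ_m$.

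Finally I would upgrade this to an identity of differential operators. By definition (\ref{cLpH}), both $\cL_p^{\mathcal H}$ and $\cL_q^{\mathcal H}$ are built from $\cL_p$, $\cL_q$ and $L$ by iterated commutators, and each of these has rational coefficients with poles only along the hyperplanes $x_i = x_j$. The commutator $[\cL_p^{\mathcal H}, \cL_q^{\mathcal H}]$ therefore enjoys the same property, so I may invoke the uniqueness principle recalled at the start of the proof of Proposition \ref{Prop:selfAdj} (cf.\ Lemma 2.8 in \cite{Opd95}): a differential operator with such rational coefficients is completely determined by its restriction to $\Lambda_N$. Since that restriction vanishes, we conclude $[\cL_p^{\mathcal H}, \cL_q^{\mathcal H}] = 0$, establishing (\ref{BFcom}).

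I do not anticipate a serious obstacle: the heavy lifting is done by Theorem \ref{intertwine} and by the Opdam-type uniqueness argument already used in Proposition \ref{Prop:selfAdj}. The one point requiring slight care is the bijectivity of $\chi_H$, which however is immediate from the leading-term description (\ref{HqExp}); and the promotion from identity on $\Lambda_N$ to identity of differential operators, which is handled verbatim as in Proposition \ref{Prop:selfAdj}.
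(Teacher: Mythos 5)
Your proof is correct and follows essentially the same route as the paper, which simply remarks that commutativity "immediately follows" from Theorem \ref{intertwine}; you have supplied the implicit details (invertibility of $\chi_H$ from the leading-term structure \eqref{HqExp}, conjugation of $[\cL_p,\cL_q]=0$ through $\chi_H$, and the Opdam-type uniqueness argument to pass from an identity on $\Lambda_N$ to an identity of differential operators). The only point the paper adds beyond this is the subsequent remark that polynomiality of the coefficients in $m$ extends \eqref{BFcom} to non-integer $m$, which is outside the scope of the corollary as stated.
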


Note that the coefficients of the operators $\cL_p^{\mathcal H}$ depend polynomially on $m$, and therefore commutativity \eqref{BFcom} holds for any value of the parameter $m$, not necessarily integer. 


In the quantum case, the {\it Lassalle--Nekrasov correspondence} \cite{Las91,Nek97} is given by the linear map $\chi_{LN}\colon \Lambda_N \to \Lambda_N$, 
$\chi_{LN}(p) = e^{-L/2} p$, $p\in \Lambda_N$, which is
 such that the diagram
\begin{equation}\label{LN}
\begin{tikzcd}
\Lambda_N \arrow[r,"\chi_{LN}"] \arrow[d,swap,"{\mathcal L}_p"] &
\Lambda_N \arrow[d,"{\mathcal L}^{\mathcal H}_p"] \\
\Lambda_N \arrow[r,"\chi_{LN}"] & \Lambda_N
\end{tikzcd}
\end{equation}
is commutative for all $p\in\Lambda_N$. 
Strictly speaking, the case corresponding to $p(\xi)=\xi_1+\cdots+\xi_N$ appeared in \cite{Las91}, whereas the general case can be found in \cite{BF97}. Although this formulation of the correspondence is not present in these works, 
it is readily extracted from the latter paper. In our case with $m\in \Z_{\ge 0}$ we have by Proposition \ref{heat}  the relation $\chi_H|_{\Lambda_N} = \chi_{LN}$. Moreover, our definition in \eqref{HqDef2} of the Hermitisation map $\chi_H$ is readily modified to yield $\chi_{LN}$ for non-integer $m$: the BA function should be replaced by a particular generalised hypergeometric series and $\langle\cdot,\cdot\rangle_m$ by a suitable $L^2$-inner product. 

This establishes the equivalence between the rational CM system in an harmonic field and Sutherland's version of this system, which sounds surprising, especially at the classical level. Indeed, these two systems have very different dynamics: the orbits of CM system with harmonic field are known to be periodic with the same period $2\pi$, while for the Sutherland case they are much more complicated. In fact, there is no contradiction here since the Hamiltonian ${\mathcal L}^{\mathcal H}_1$ corresponds to the momentum $\mathcal L_1$, but not to the Hamiltonian $\mathcal L_2.$


Let us demonstrate this in the simplest case $m=0$. In this case the quasi-invariant algebra $\cQ_0=\C[x_1,\ldots,x_N]$ coincides with the set of all polynomials. For any polynomial $p$ the corresponding operators are
$$
\cL_p=p(x_1\partial_1,\dots, x_N\partial_N), \quad \cL_p^{\mathcal H}=p(x_1\partial_1-\partial_1^2,\dots, x_N\partial_N-\partial_N^2).
$$
The Hermitisation $\chi_H$ sends the monomial $x^\mu=x_1^{\mu_1}\dots x_N^{\mu_N}$ to the product
$$
H_{\mu}(x)=H_{\mu_1}(x_1)\dots H_{\mu_N}(x_N)
$$
of the usual Hermite polynomials, and extends to the general polynomial by linearity. 

Note that the monomials $x^{\mu}$ are the joint eigenvectors of all operators $\cL_p$:
$$
\cL_p x^\mu=p(\mu)x^\mu,
$$
while $H_\mu(x)$ are the joint eigenvectors of $\cL_p^{\mathcal H}$:
$$
\cL_p^{\mathcal H} H_\mu(x)=p(\mu)H_\mu(x)
$$
with the same eigenvalue $p(\mu).$

On the other hand, allowing any $m\in\mathbb{N}$ while restricting attention to $\Lambda_N\subset\cQ_m$, we can recover Lassalle's multivariate Hermite polynomials. More specifically, for partitions $\lambda=(\lambda_1,\ldots,\lambda_N)$ consider the (monic) Jack polynomials $P_\lambda^{(\alpha)}(x)$ in $N$ variables $x=(x_1,\ldots,x_N)$, see e.g.~\cite{Mac95}. For us, the relevant parameter values are $\alpha = -1/m$. Note that not all Jack polynomials can be specialized to such a parameter value, e.g.
\begin{equation*}
\begin{split}
P_{(3,1)}^{(-1/m)}(x) &= m_{(3,1)}(x)+\frac{2m}{m-1}m_{(2,2)}(x)+\frac{m(5m-3)}{(m-1)^2}m_{(2,1,1)}(x)\\
&\quad +\frac{12m^2}{(m-1)^2}m_{(1,1,1,1)}(x)
\end{split}
\end{equation*}
is well-defined if and only if $m\neq 1$. (Here, the monomial symmetric polynomials $m_\lambda(x)=\sum_{a\in S_N(\lambda)}x_1^{a_1}\cdots x_N^{a_N}$.) For $m$ and $\lambda$ such that no $\alpha$-poles are encountered, we consider the symmetric $m$-Hermite polynomials
$$
H_\lambda^{(m)} := \chi_H\left(P_\lambda^{(-1/m)}\right).
$$

From Proposition \ref{unique}, we infer that $H_\lambda^{(m)}$ is the unique symmetric polynomial of the form
$$
H_\lambda^{(m)}(x) = P_\lambda^{(-1/m)}(x)+\text{lower degree terms},
$$
with lower degrees of the form $|\lambda|-2k$, $k=1,\ldots,\lfloor \frac12 |\lambda|\rfloor$, that satisfies the differential equation
$$
\cL^H H_\lambda^{(m)} = -|\lambda|H_\lambda^{(m)}.
$$
Since Lassalle's multivariate Hermite polynomials also have these two properties (see Th\'eor\`eme 1 and Corollaire to Th\'eor\`eme 3 in \cite{Las91}), they must coincide with the symmetric multivariable $m$-Hermite polynomials (up to a choice of normalisation and after setting the parameter $\alpha=-1/m$).

\section{Lassalle--Nekrasov correspondence and automorphisms of Cherednik algebras}
\label{automLN}

Recall the rational Cherednik algebra $H_m$ associated with the symmetric group $S_N$ and parameter $m$ \cite{EG02B} (we assume the parameter $t=1$). This algebra is generated by two polynomial rings $\C[x] =\C[x_1, \ldots, x_N]$, $\C[y]=\C[y_1, \ldots, y_N]$ and the group algebra $\C S_N$ with defining relations
$$
w p(x) = p(w^{-1} x) w, \quad w q(y) = q(w^{-1} y) w,\quad y_i x_j - x_j y_i = S_{ij},
$$
where $S_{ij} = m s_{ij}$ if $i\ne j$ and $S_{ii} = 1 - m \sum_{k\ne i} s_{ik}$ if $i=j$, and $p(x)\in \C[x]$, $q(y)\in \C[y]$, $w\in \C S_N$.

The algebra $H_m$ has a faithful polynomial representation $\rho\colon H_m \to {\rm End}(\C[x])$ where the elements $x_i \in H_m$ act on $\C[x]$  by multiplication,  the elements $y_i \in H_m$ act by  Dunkl operators $D_i$, and the elements $s_{ij}$ act as transpositions of $x_i$ and $x_j$.

Consider the following automorphisms  $a_{\tau,l}, b_{\tau, l}$ of the rational Cherednik algebra $H_m$  \cite{EG02B}:
\beq{automA}
a_{\tau, l}(x_i) = x_i + \tau y_i^l, \quad a_{\tau, l} (y_i) = y_i, \quad a_{\tau, l}(w)=w,
\eeq
\beq{automB}\nonumber
b_{\tau, l}(x_i) = x_i, \quad b_{\tau, l}(y_i) =  y_i + \tau x_i^l, \quad  b_{\tau, l}(w)=w,
\eeq
where $\tau \in \C$, $l \in \Z_{\ge 0}$, $w \in S_N$, and $1\le i \le N$.

Let $\mathbb G$ be the group of automorphisms of the rational Cherednik algebra $H_m$ generated by these elements $a_{\tau,l}, b_{\tau, l}$. Note that elements of $\mathbb G$ preserve the spherical subalgebra $eH_me \subset H_m$, where $e=\frac{1}{N!}\sum_{w\in S_N} w$. 

Consider the subalgebra of $H_m$ generated by the elements $\sigma_k = e \sum_{i=1}^N \pi_i^k$, where $\pi_i = x_i y_i$, $k\in \Z_{\ge 0}$ become Polychronakos operators \cite{Poly92} in the polynomial representation. The elements $\pi_i$ satisfy the commutation relation 
$$
[\pi_i, \pi_j]  = m(\pi_i - \pi_j) s_{ij}
$$ 
for any $i, j$, which implies that the  elements $\sigma_k$ pairwise commute, cf. Polychronakos \cite{Poly92}.

Recall that operators ${\mathcal L}_k = {\rm Res }\, \sigma_k$
generate an algebra of (symmetric) quantum integrals of the trigonometric CM system. 
In particular ${\mathcal L}_1= E = \sum_{i=1}^N x_i \partial_i$ is the Euler operator, and ${\mathcal L}_2 = {\mathcal H}_N$, where ${\mathcal H}_N$ is the Hamiltonian of the trigonometric CM system  \eqref{CMtrig}.


Let us describe the action of the generating automorphisms $a_{\tau, l}$, $b_{\tau, l} \in \mathbb G$ in terms of the adjoint action of the elements
$$
y^l := \sum_{i=1}^N y_i^l, \quad x^l := \sum_{i=1}^N x_i^l.
$$
 Namely, the following statement takes place (cf. \cite{EG02B}).

\begin{proposition}
\label{praut1}
The action of the automorphisms $a_{\tau, l}$ and $b_{\tau, l}$ can be given as  
\beq{autact2}
a_{\tau, l} (h) =  e^{ad_{\frac{\tau}{l+1} y^{l+1}}} h,  \quad 
b_{\tau, l} (h) =  e^{-ad_{\frac{\tau}{l+1} x^{l+1}}} h, \quad (h \in H_m),
\eeq
where $ad_A (X) = A X - X A$. 
\end{proposition}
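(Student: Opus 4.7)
The strategy is to exploit the fact that both $a_{\tau,l}$ and $\exp(\mathrm{ad}_A)$ with $A = \frac{\tau}{l+1}\, y^{l+1}$ are algebra homomorphisms of $H_m$---the latter because the exponential of any derivation respects products---so it suffices to verify that they agree on the generators $x_i,\, y_i,\, w \in S_N$. Along the way I will check that the series $\exp(\mathrm{ad}_A)$ applied to each generator terminates, so no completion or convergence issue arises. The argument for $b_{\tau,l}$ will then be entirely symmetric.

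The easy generators are quick: since $y_1,\ldots,y_N$ pairwise commute, $\mathrm{ad}_A(y_j)=0$, and hence $\exp(\mathrm{ad}_A)(y_j)=y_j$; since $y^{l+1}=\sum_j y_j^{l+1}$ is $S_N$-invariant (as $w y_j w^{-1}=y_{w(j)}$), we have $\mathrm{ad}_A(w)=0$ and $\exp(\mathrm{ad}_A)(w)=w$. Both match the prescription of $a_{\tau,l}$.

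The heart of the argument is then the identity
\begin{equation*}
[y^{l+1},\, x_i] = (l+1)\, y_i^l.
\end{equation*}
To prove it, I would expand $[y^{l+1}, x_i] = \sum_{j=1}^N \sum_{k=0}^{l} y_j^k\, S_{ij}\, y_j^{l-k}$ using $[y_j, x_i] = S_{ji} = S_{ij}$. For $j = i$, the decomposition $S_{ii} = 1 - m \sum_{r\neq i} s_{ir}$ produces the required piece $(l+1)\, y_i^l$ together with a sum of reflection terms; for $j\ne i$ the factor $S_{ij}= m\, s_{ij}$ contributes further reflection terms. Using $y_i^k s_{ir} = s_{ir}\, y_r^k$ and $y_j^k s_{ij} = s_{ij}\, y_i^k$ to move transpositions to the left, together with commutativity of the $y_j$'s and a reindexing $k \mapsto l-k$, the reflection contribution from $j=i$ cancels exactly that from $j \ne i$. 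Once the identity is established, $\mathrm{ad}_A(x_i)=\tau\, y_i^l$ and $\mathrm{ad}_A^2(x_i) = \tau\,\mathrm{ad}_A(y_i^l)=0$, so the exponential terminates: $\exp(\mathrm{ad}_A)(x_i) = x_i + \tau\, y_i^l = a_{\tau,l}(x_i)$.

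For $b_{\tau,l}$, the mirror computation gives $[x^{l+1}, y_i] = -(l+1)\, x_i^l$, whence $\mathrm{ad}_B(y_i) = \tau\, x_i^l$ with $B=-\frac{\tau}{l+1}\, x^{l+1}$; the $x_j$'s commute among themselves and $x^{l+1}$ is $S_N$-invariant, so the remaining generators are fixed by $\exp(\mathrm{ad}_B)$, and a second-order truncation of the series yields the claim. The only genuinely computational step---and the main obstacle---is the cancellation of reflection terms in the derivation of the key identity: it is not conceptually deep, but requires careful bookkeeping of how transpositions conjugate powers of the $y_i$'s.
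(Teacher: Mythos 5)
Your proposal is correct and follows essentially the same route as the paper: reduce to generators via the fact that the exponential of a (locally nilpotent) derivation is a homomorphism, establish the key commutators $[x_i,y^{l+1}]=-(l+1)y_i^l$ and $[y_i,x^{l+1}]=(l+1)x_i^l$, and observe that the exponential series truncates after two terms. The only difference is that you carry out in detail the cancellation of the reflection terms, which the paper dismisses with ``can be checked directly''; your bookkeeping of that cancellation is accurate.
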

\begin{proof}
It is well known that the exponential of any nilpotent derivation is a homomorphism, so
it is sufficient to prove formulas \eqref{autact2} for $h =x_i, y_i$, $(i=1,\ldots, N)$.
The following commutation relations can be checked directly:
\beq{powers}
[x_i, y^{l+1}] = - (l+1) y_i^{l }, \quad [y_i, x^{l+1}] =  (l+1) x_i^{l}.
\eeq
The right-hand side of the formula \eqref{autact2} for $a_{\tau, l}(x_i)$ has two non-zero terms only by formula \eqref{powers}, which are $a_{\tau, l}(x_i) = x_i +\tau y_i^l$ as required. It is also clear that the first formula \eqref{autact2} states $a_{\tau, l}(y_i) = y_i$ as required. The second formula \eqref{autact2} can be checked similarly.
\end{proof}
Recall the standard  formula from linear algebra
\beq{Adad}
Ad_{e^A} = e^{ad_A}, 
\eeq
where $Ad_C X = C X C^{-1}$.  

In the polynomial representation we have a well defined action of the operators of the form $e^{p(D)}$, $p\in \C[x]$ since $p(D)$ are locally nilpotent operators.  We also have the natural action of the operators $p(D)$ on the functions of the form $e^{-q(x)} r(x)$, where $r \in \C[x]$, $q \in \C[x]^{S_N}$, which allows us to consider the operator 
$$
Ad_{e^{q(x)}} p(D) := e^{q(x)} p(D)e^{-q(x)}
$$ 
acting on the space of polynomials  $\C[x]$.

In view of formula \eqref{Adad} we can now restate  Proposition \ref{praut1} as follows. 

\begin{proposition}
\label{conjaut}
In the polynomial representation the action of elements $a_{\tau, l}(h)$ and $b_{\tau, l}(h)$  for any $h \in H_m$ can be given as
\beq{autact}\nonumber
a_{\tau, l} (h) =  
e^{\frac{\tau}{l+1} D^{l+1}} h e^{-\frac{\tau}{l+1} D^{l+1}} , 
\quad
b_{\tau, l} (h) = 
e^{-\frac{\tau}{l+1} x^{l+1}} h e^{\frac{\tau}{l+1} x^{l+1}}, 
\eeq
where $D^{l+1} :=\sum_{i=1}^N D_i^{l+1}$.
 \end{proposition}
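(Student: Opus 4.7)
The plan is to combine Proposition \ref{praut1} with the standard identity \eqref{Adad} $Ad_{e^A}=e^{ad_A}$, translated to the polynomial representation. First I would note that in the polynomial representation the element $y^{l+1}=\sum_i y_i^{l+1}$ acts as the differential operator $D^{l+1}=\sum_i D_i^{l+1}$, which is locally nilpotent on $\C[x]$ because each $D_i$ lowers the polynomial degree by one. Hence $e^{\frac{\tau}{l+1}D^{l+1}}$ is a well-defined invertible operator on $\C[x]$, with inverse $e^{-\frac{\tau}{l+1}D^{l+1}}$, and for every polynomial $f$ only finitely many terms of the exponential series contribute when acting on $f$.

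Second I would verify that $ad_{\frac{\tau}{l+1}y^{l+1}}$ is locally nilpotent on the generators $x_i, y_i, s_{ij}$ of $H_m$. From \eqref{powers} we have $ad_{y^{l+1}}(x_i)=-(l+1)y_i^l$, while $ad_{y^{l+1}}(y_i)=0$ and $ad_{y^{l+1}}(s_{ij})=0$ since $y^{l+1}$ is $S_N$-invariant. Iterating, $ad_{y^{l+1}}^2$ annihilates every generator, so $e^{ad_{\frac{\tau}{l+1}y^{l+1}}}$ acts by a terminating series on any fixed $h\in H_m$. By the standard computation $Ad_{e^A}=e^{ad_A}$ (which is purely formal when $ad_A$ is locally nilpotent), one gets, in the polynomial representation,
\begin{equation*}
a_{\tau,l}(h) = e^{ad_{\frac{\tau}{l+1}y^{l+1}}}h = e^{\frac{\tau}{l+1}D^{l+1}}\, h\, e^{-\frac{\tau}{l+1}D^{l+1}},
\end{equation*}
which is the first formula of the proposition.

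Third, the argument for $b_{\tau,l}$ is formally parallel, but one needs to be careful because $e^{\pm\frac{\tau}{l+1}x^{l+1}}$ is multiplication by an entire function rather than by a polynomial. As remarked just before the proposition, the operator $p(D)$ (and hence any element of $H_m$ in the polynomial representation) has a well-defined action on functions of the form $e^{-q(x)}r(x)$ with $r\in\C[x]$ and $q\in\C[x]^{S_N}$; moreover by \eqref{powers} we have $ad_{x^{l+1}}(y_i)=(l+1)x_i^l$, $ad_{x^{l+1}}(x_i)=0$ and $ad_{x^{l+1}}(s_{ij})=0$, so $ad_{x^{l+1}}$ is locally nilpotent on $H_m$. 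Consequently for any $h\in H_m$ the formal series $e^{-\frac{\tau}{l+1}x^{l+1}}\,h\,e^{\frac{\tau}{l+1}x^{l+1}}$, when applied to a polynomial, reduces to a polynomial, and equals $e^{-ad_{\frac{\tau}{l+1}x^{l+1}}}h=b_{\tau,l}(h)$ by Proposition \ref{praut1}.

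The main (and really only) subtlety is the second formula: one must justify that conjugation by the exponential of multiplication by the non-polynomial function $x^{l+1}$ still yields a well-defined operator on $\C[x]$. This is precisely what the local nilpotency of $ad_{x^{l+1}}$ on $H_m$ guarantees, together with the convention explained in the paragraph preceding the proposition.
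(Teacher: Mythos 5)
Your argument is correct and coincides with the paper's (implicit) proof: the paper simply presents this proposition as a restatement of Proposition \ref{praut1} via the identity $Ad_{e^A}=e^{ad_A}$ in the polynomial representation, which is exactly your route. You merely spell out the local nilpotency and well-definedness details that the paper relegates to the preceding paragraph.
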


This leads us to another interpretation of the Lassalle--Nekrasov correspondence $\varphi_{LN}$ between quantum integrals  of the trigonometric and rational harmonic CM systems, sending  ${\mathcal L}_p$ to ${\mathcal L}_p^{\mathcal H}$, $p \in \Lambda_N$, see diagram \eqref{LN}. 

 Let ${\mathcal T}$  be the commutative subalgebra in $H_m$ generated by the elements 
\beq{pik}
\pi^k:= \sum_{i=1}^N \pi_i^k, \quad k \in \Z_{\ge 0}.
\eeq
Define  ${\mathcal R} = \gamma({\mathcal T})$ to be its image under the automorphism $\gamma = a_{-1,1} \in \mathbb G$. 
More explicitly, the algebra ${\mathcal R}$ is generated by the elements
$$
\gamma(\pi^k) = \sum_{i=1}^N (x_i y_i -y_i^2)^k.
$$
Let ${\mathcal D}_{CM}^T$ and ${\mathcal D}_{CM}^R$  be the algebras of quantum integrals of the trigonometric and rational harmonic CM systems generated by the differential operators ${\mathcal L}_p$ and  ${\mathcal L}_p^{\mathcal H}$, $p \in \Lambda_N$, respectively.

\begin{proposition}
The following diagram is commutative:
\begin{equation}\label{LNv2}\nonumber
\begin{tikzcd}
{\mathcal T}  \arrow[r,"\gamma"] \arrow[d,swap,"\rm{Res}"] &
{\mathcal R} \arrow[d,"\rm{Res}"] \\
{{\mathcal D}_{CM}^T} \arrow[r,"\varphi_{LN}"] & {{\mathcal D}_{CM}^R},
\end{tikzcd}
\end{equation}
where $\varphi_{LN}$ is the Lassalle--Nekrasov correspondence between the quantum integrals: 
\begin{equation}
\label{HamCS}\nonumber
\varphi_{LN}({\mathcal L}_p) =  {\mathcal L}^{\mathcal H}_p,
\end{equation}
with the operator  ${\mathcal L}^{\mathcal H}_p$ defined by \eqref{cLpH}. Furthermore, this operator can also be given as
\beq{HamCMSn} 
 {\mathcal L}^{\mathcal H}_p   = e^{-L/2} {\mathcal L}_p e^{L/2}, \quad 
L = \Delta - \sum_{i<j}^N\frac{2m}{x_i - x_j}(\partial_i - \partial_j). 
\eeq
\end{proposition}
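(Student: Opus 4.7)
The plan is to deduce both assertions directly from the explicit conjugation formulas supplied by Proposition \ref{conjaut}. Specialising that proposition to $\tau=-1$, $l=1$ yields, in the polynomial representation of $H_m$,
$$\gamma(h) = a_{-1,1}(h) = e^{-D^2/2}\, h\, e^{D^2/2}, \quad D^2 = \sum_{i=1}^N D_i^2.$$
It then remains to (i) push this identity from $\C[x_1,\ldots,x_N]$ down to $\Lambda_N$ via $\rm{Res}$, and (ii) recognise that conjugation by $e^{L/2}$ on symmetric polynomials reproduces the series \eqref{cLpH}.

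For step (i), I would observe that $D^2$ is $S_N$-invariant and hence preserves $\Lambda_N$, with $\rm{Res}(D^2)=L$; this identification of the Dunkl Laplacian with the gauged rational CM operator on symmetric polynomials is well known and can be cited, for example, from \cite{BEG02, EG02B}. Since both $p(\pi)$ and $e^{\pm D^2/2}$ preserve $\Lambda_N$ (the former by Proposition \ref{Prop:QmAction} \emph{a fortiori}), restriction is multiplicative on the triple product, giving
$$\rm{Res}(\gamma(p(\pi))) = e^{-L/2}\, \mathcal{L}_p\, e^{L/2}.$$
For step (ii), I would expand using the standard identity $e^{-A}Be^{A}=\sum_{k\ge 0}\frac{1}{k!}\mathrm{ad}_{-A}^k B$ with $A=L/2$, obtaining
$$e^{-L/2}\, \mathcal{L}_p\, e^{L/2} = \mathcal{L}_p + \tfrac{1}{2}[\mathcal{L}_p, L] + \tfrac{1}{2^2\cdot 2!}[[\mathcal{L}_p, L],L] + \cdots,$$
which is precisely \eqref{cLpH}. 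This simultaneously establishes formula \eqref{HamCMSn} and the commutativity of the diagram, with $\varphi_{LN}(\mathcal{L}_p)=\mathcal{L}_p^{\mathcal{H}}$.

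The only delicate point is the termination of the BCH expansion after $d=\deg p$ brackets, which is implicit in \eqref{cLpH}. At the Cherednik algebra level this reduces to the local nilpotency of $\mathrm{ad}_{y^2}$ on $\mathcal{T}$: using $[y^2, x_i]=-2y_i$ and $[y^2, y_i]=0$ from \eqref{powers}, each application of $\mathrm{ad}_{y^2}$ converts an $x_i$ into a $y_i$ in any monomial representative and strictly decreases the total $x$-degree, so the series stabilises. A closer count, or direct appeal to \cite{BF97}, yields the sharp bound $d=\deg p$. Apart from this degree bookkeeping the argument is a routine unpacking of Proposition \ref{conjaut}, so I expect the termination step to be the only real obstacle.
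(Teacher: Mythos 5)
Your proof is correct, and the first half (reduce to the generators $\pi^k$, apply Proposition \ref{conjaut} with $\tau=-1$, $l=1$, and use ${\rm Res}\sum y_i^2 = L$ to get ${\rm Res}(\gamma(p(\pi)))=e^{-L/2}\mathcal{L}_p e^{L/2}$) coincides with the paper's. Where you genuinely diverge is in identifying $e^{-L/2}\mathcal{L}_p e^{L/2}$ with $\mathcal{L}_p^{\mathcal H}$: the paper gets this for free from Proposition \ref{heat} ($\chi_H=e^{-L/2}$) together with the intertwining Theorem \ref{intertwine}, i.e.\ it reuses the already-established quasi-invariant Lassalle--Nekrasov correspondence, whereas you expand the conjugation via $\mathrm{Ad}_{e^{-L/2}}=e^{\mathrm{ad}_{-L/2}}$ and match it term by term against the defining formula \eqref{cLpH}. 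Your route is purely algebraic and self-contained --- it does not rely on the Baker--Akhiezer machinery behind Theorem \ref{intertwine}, and combined with Proposition \ref{heat} it would in fact give an independent derivation of the symmetric part of that theorem --- at the price of having to justify termination of the series, which you correctly reduce to local nilpotency of $\mathrm{ad}_{y^2}$ on $\mathcal T$ via the $x$-filtration (note the sign: \eqref{powers} with $l=1$ gives $[y^2,x_i]=+2y_i$, though this does not affect nilpotency); to pass from vanishing in $H_m$ to vanishing of the iterated commutators of differential operators you should also invoke, as the paper does elsewhere (proof of Proposition \ref{Prop:selfAdj}), that such operators are determined by their restriction to $\Lambda_N$. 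The paper's argument is shorter because the analogous expansion and termination argument was already carried out once, on the $\lambda$-side, in formula \eqref{conj1} of the proof of Theorem \ref{intertwine}; your version essentially redoes that computation on the $x$-side at the Cherednik-algebra level.
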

 \begin{proof}
All the maps in the diagram are homomorphisms so it is sufficient to compare the maps $\varphi_{LN}\circ {\rm Res}$ and ${\rm Res}\circ \gamma$ applied to $\pi^k\in \mathcal T$.

Recall that $L$ given in \eqref{HamCMSn} is the Hamiltonian of the rational CM system (without harmonic term) which can be expressed  as
$$
L = {\rm Res} \sum_{i=1}^N y_i^2,
$$
see  \cite{Dun89, Hec91}. 
Therefore by Proposition \ref{conjaut}  we have ${\rm Res}\, (\gamma(\pi^k))= e^{-L/2} {\mathcal L}_k e^{L/2}$, where
${\mathcal L}_k = {\rm Res}\, \pi^k$ is the quantum integral of the trigonometric CM system. 

 On the other hand 
by Proposition \ref{heat} and Theorem \ref{intertwine} we have 
$e^{-L/2} {\mathcal L}_k e^{L/2} = {\mathcal L}^{\mathcal H}_k$ as required.
\end{proof}

\section{Higher analogues of multivariable $m$-Hermite polynomials and bispectrality}
\label{sechigher}

Consider now the triangular automorphism  $\gamma$
of the rational Cherednik algebra $H_m$  which is a composition of automorphisms \eqref{automA}:
 \beq{automAgen}
\gamma(x_i) = x_i + {\mathcal P}_\gamma (y_i), \quad \gamma(y_i) = y_i, \quad \gamma(w)=w,
\eeq
where ${\mathcal P}_\gamma (z) = \sum_{j=0}^l \tau_j z^j$ is a polynomial of degree $l\in \Z_{\ge 0}$, $\tau_j \in \C$. 
The same arguments as above in the case $l=1$ lead to the next proposition  (see also \cite{Hor16}, \cite{Hor18}, \cite{VZh09} who used automorphisms of the Weyl algebra in a related context).

Recall that  $L_j={\rm Res} \sum_{i=1}^N D_i^j$ defines a higher order quantum integral of the rational CM system, see  Heckman  \cite{Hec91A}. 
 Define $L_\gamma$ as 
$$
L_\gamma = \sum_{j=0}^l \tau_j L_{j+1}.
$$
\begin{proposition}
\label{prophigher}
The algebra ${{\mathcal D}_{CM}^{R, \gamma}}  = {\rm Res} \, {\mathcal R}_\gamma$, where ${\mathcal R}_\gamma= \gamma({\mathcal T})$, is a 
  commutative algebra  of differenial operators with rational coefficients which is isomorphic to the algebra of polynomials in $N$ variables. The lowest order differential operator in the algebra ${{\mathcal D}_{CM}^{R, \gamma}}$ has the form
\beq{Kgamma}
K_\gamma ={\rm Res} \, \gamma (\sum_{i=1}^N x_i y_i) = {\rm Res} \sum_{i=1}^N (x_i y_i +  y_i {\mathcal P}_\gamma(y_i)) = E + { L}_{\gamma},
\eeq
where $E=\sum x_i \partial_i$ is the Euler operator.
\end{proposition}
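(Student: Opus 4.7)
The plan is to use that $\gamma$ is an algebra automorphism of $H_m$ (being a composition of the elementary automorphisms $a_{\tau_j,j}$ from Proposition~\ref{praut1}), so $\mathcal R_\gamma=\gamma(\mathcal T)$ inherits commutativity from $\mathcal T$. Applying the restriction homomorphism ${\rm Res}$, the image $\mathcal D_{CM}^{R,\gamma}$ is automatically a commutative algebra of differential operators on $\Lambda_N$. The substantive content is (i) that this image is a polynomial algebra in exactly $N$ variables and (ii) the explicit formula $K_\gamma = E + L_\gamma$, together with the claim on lowest order.

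For (ii) I would compute directly. Since $\gamma(x_i)=x_i+\mathcal P_\gamma(y_i)$, $\gamma(y_i)=y_i$, and $y_i$ commutes with every polynomial in itself,
$$
\gamma\Bigl(\sum_i x_i y_i\Bigr) = \sum_i (x_i+\mathcal P_\gamma(y_i))y_i = \sum_i x_iy_i + \sum_i y_i\mathcal P_\gamma(y_i).
$$
Next, ${\rm Res}\,\sum_i x_iy_i=E$: expanding $x_iD_i = x_i\partial_i + m\sum_{j\ne i}x_i(x_i-x_j)^{-1}(s_{ij}-1)$ and pairing $(i,j)$ with $(j,i)$ the Dunkl correction telescopes to $m\sum_{i<j}(s_{ij}-1)$, which annihilates $\Lambda_N$. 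The second sum restricts term by term to $\sum_{j=0}^l\tau_j L_{j+1}=L_\gamma$ by the definition of the higher rational CM integrals. Hence $K_\gamma = E+L_\gamma$.

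For (i), I would run a leading-symbol argument. From $\gamma(\pi_i)=x_iy_i+\tau_l y_i^{l+1}+(\text{lower order in } y)$ one obtains $\gamma(\pi^k)=\sum_i\gamma(\pi_i)^k=\tau_l^k\sum_i y_i^{k(l+1)}+(\text{lower order})$. Since the Dunkl operator $D_i$ has principal differential-operator symbol $\xi_i$, the leading symbol of ${\rm Res}\,\gamma(\pi^k)$ is $\tau_l^k p_{k(l+1)}(\xi)$, where $p_n(\xi)=\sum_i\xi_i^n$. The $N$ symmetric polynomials $p_{l+1},p_{2(l+1)},\ldots,p_{N(l+1)}$ are algebraically independent in $\C[\xi]^{S_N}$ --- change variables $u_i=\xi_i^{l+1}$ to reduce to the classical algebraic independence of $p_1(u),\ldots,p_N(u)$ --- hence ${\rm Res}\,\gamma(\pi^k)$, $k=1,\ldots,N$, are algebraically independent pairwise-commuting differential operators of distinct orders $k(l+1)$. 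In particular $K_\gamma$ is the lowest-order non-constant element of $\mathcal D_{CM}^{R,\gamma}$. For $k>N$, the classical fact that the trigonometric CM algebra ${\rm Res}(\mathcal T)=\C[\mathcal L_1,\ldots,\mathcal L_N]$ is a polynomial algebra in the first $N$ generators expresses $e\pi^k$ as a polynomial in $e\pi^1,\ldots,e\pi^N$ inside the spherical subalgebra $eH_me$; since $\gamma$ is an automorphism preserving $e$, the same identity transports to ${\rm Res}\,\gamma(\pi^k)\in\C[{\rm Res}\,\gamma(\pi^1),\ldots,{\rm Res}\,\gamma(\pi^N)]$, yielding $\mathcal D_{CM}^{R,\gamma}\cong\C[z_1,\ldots,z_N]$. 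The main obstacle is the symbol analysis: one has to verify that no cancellation occurs in the principal symbols of $\gamma(\pi^k)$ after summation over $i$ and restriction to $\Lambda_N$, which rests on the standard fact $D_i=\partial_i+(\text{zeroth order})$ in the differential-operator filtration.
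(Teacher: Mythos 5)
Your proposal is correct, and it in fact supplies more detail than the paper, which dispenses with this proposition by the single remark that ``the same arguments as above in the case $l=1$'' apply. Your treatment of commutativity (via the automorphism property of $\gamma$) and of the formula $K_\gamma=E+L_\gamma$ (direct computation of $\gamma(\sum_i x_iy_i)$, cancellation of the Dunkl correction in ${\rm Res}\sum_i x_iy_i$, and Heckman's identification ${\rm Res}\sum_i y_i^{j+1}=L_{j+1}$) is exactly what the paper intends. The one place where you take a genuinely different route is the claim that $\mathcal D_{CM}^{R,\gamma}$ is a polynomial algebra in $N$ variables: the paper's implicit argument is that $\gamma$ acts in the polynomial representation by conjugation with $e^{\sum_j\frac{\tau_j}{j+1}L_{j+1}}$ (cf.\ Propositions \ref{conjaut} and \ref{Pqformula}), so that $\mathcal D_{CM}^{R,\gamma}$ is conjugate to, hence abstractly isomorphic to, the trigonometric algebra $\mathcal D_{CM}^{T}=\C[\mathcal L_1,\ldots,\mathcal L_N]$; you instead prove algebraic independence of the first $N$ generators by a leading-symbol computation and then import the relations for $k>N$ through $\gamma$. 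Both arguments are sound; the conjugation route is shorter and gives the isomorphism in one stroke, while your symbol argument has the advantage of directly exhibiting the orders $k(l+1)$ of the generators, which is what justifies the ``lowest order'' assertion about $K_\gamma$ --- a point the paper leaves entirely implicit. Two minor caveats: your identification of the principal symbol of $\gamma(\pi_i)$ with $\tau_l\xi_i^{l+1}$ requires $l\ge 1$ (for $l=0$ the term $x_iy_i$ has the same order), and the whole symbol argument presumes $\tau_l\neq 0$, which is guaranteed by the paper's convention that $\mathcal P_\gamma$ has degree $l$.
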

Note that this proposition holds for any value of $m$, not necessarily integer. 
For integer $m$ the operator $K_\gamma$ admits an operator intertwining it with a corresponding constant coefficient differential operator.
Let $T=T(x, \partial)$ be a differential operator such that the BA function is given as 
\beq{Top}
\phi(x, \lambda) = T e^{(x,\lambda)}.
\eeq
We have the following result,  which for $l=1$ is contained in  \cite{BC20}.

\begin{proposition}
For any $m\in \Z_{\ge 0}$ the operator $K_\gamma$ satisfies the intertwining relation
$$
K_\gamma T = T (\sum_{j=0}^l \tau_j \partial^{j+1} + E),
$$
where $\partial^{j} = \sum_{i=1}^N \partial_i^{j}$.
\end{proposition}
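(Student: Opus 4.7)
The plan is to verify the identity by applying both sides to the family of exponentials $e^{(x,\lambda)}$ with $\lambda \in \mathbb C^N$: since both sides are differential operators in $x$ (with polynomial or rational coefficients), agreement on every exponential forces agreement as operators. I would split $K_\gamma = E + L_\gamma$ with $L_\gamma = \sum_{j=0}^l \tau_j L_{j+1}$, as in \eqref{Kgamma}, and handle the two summands separately.

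For each $L_{j+1} T$, the power sum $p_{j+1}(\xi)=\sum_i \xi_i^{j+1}$ is a symmetric polynomial and hence a quasi-invariant, so the BA eigenfunction identity \eqref{LqEq} gives $L_{j+1}\phi=(\sum_i \lambda_i^{j+1})\phi$. Writing $\phi = T e^{(x,\lambda)}$ via \eqref{Top} and using $\partial^{j+1}e^{(x,\lambda)}=(\sum_i \lambda_i^{j+1})e^{(x,\lambda)}$, this becomes
$$L_{j+1} T\, e^{(x,\lambda)} = T\partial^{j+1} e^{(x,\lambda)}$$
for every $\lambda$, whence $L_{j+1} T = T\partial^{j+1}$ as operators.

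For the Euler piece I would exploit the homogeneity \eqref{hom} of $\phi$. Differentiating $\phi(tx,\lambda)=\phi(x,t\lambda)$ in $t$ at $t=1$ yields $E_x \phi = E_\lambda \phi$. The left side equals $(ET)e^{(x,\lambda)}$. Since $T$ is a differential operator in $x$ alone, the right side satisfies
$$E_\lambda\bigl(Te^{(x,\lambda)}\bigr)=T E_\lambda e^{(x,\lambda)} = T\sum_i \lambda_i x_i e^{(x,\lambda)} = T E_x e^{(x,\lambda)} = (TE)e^{(x,\lambda)},$$
where the third equality uses $\lambda_i e^{(x,\lambda)}=\partial_i e^{(x,\lambda)}$. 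Hence $[E,T] e^{(x,\lambda)}=0$ for all $\lambda$, so $ET=TE$.

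Combining the two intertwinings,
$$K_\gamma T = ET + \sum_{j=0}^l \tau_j L_{j+1} T = TE + T\sum_{j=0}^l \tau_j \partial^{j+1} = T\Big(\sum_{j=0}^l \tau_j \partial^{j+1} + E\Big),$$
which is the desired identity. The only delicate point is the standard reduction from \emph{annihilates every exponential} to \emph{is the zero differential operator}: if $D=\sum a_\alpha(x)\partial^\alpha$ satisfies $D e^{(x,\lambda)}=0$ for all $\lambda$, then $\sum a_\alpha(x)\lambda^\alpha\equiv 0$ in $\lambda$, forcing every $a_\alpha$ to vanish. I expect no serious obstacle here; the hypothesis $m\in\mathbb Z_{\ge 0}$ enters only through the existence of $\phi$ and the operator $T$.
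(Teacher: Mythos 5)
Your proposal is correct and follows essentially the same route as the paper: decompose $K_\gamma = E + L_\gamma$, obtain $L_{j+1}T = T\partial^{j+1}$ from the eigenfunction property of the BA function, and deduce $[E,T]=0$ from the homogeneity relation \eqref{hom} (the paper phrases this last step as ``$T$ is homogeneous of degree $0$'', which is what your computation on exponentials verifies). Your explicit reduction to testing on the exponentials $e^{(x,\lambda)}$ is a slightly more detailed writing-out of the same argument, not a different method.
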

\begin{proof}
Relation \eqref{Top} and the property $L_{j} \phi(x, \lambda) = \lambda^{j} \phi(x, \lambda)$, where $\lambda^j = \sum_{i=1}^N \lambda_i^j$,  imply the intertwining relation
$$
L_{j} T = T \partial^{j}.
$$
The operator $T$ is homogeneous of degree 0 by the homogeneity relation \eqref{hom} of the BA function $\phi(x, \lambda)$. Hence $[E, T]=0$ and the statement follows.
\end{proof} 

Recall that the operator $L_{\gamma}$ preserves the space of quasi-invariant polynomials ${\mathcal Q}_m$ and it acts on this space as a locally nilpotent operator.  Similarly to Propositions~\ref{Cor:eigValEq} and \ref{heat}, we have the following result.
\begin{proposition}\label{Pqformula}
For any homogeneous $q \in {\mathcal Q}_m$ the polynomial 
\beq{Pq}
P^\gamma_q = e^{\sum_{j=0}^l \frac{\tau_j}{j+1}L_{j+1}} q
\eeq
is an eigenfunction of the operator $K_\gamma$:
$$
K_\gamma P^\gamma_q =\deg q \,  P^\gamma_q.
$$
More generally, we have the operator $e^{\sum_{j=0}^l \frac{\tau_j}{j+1}L_{j+1}}$ intertwining the action of the algebras ${{\mathcal D}_{CM}^{R, \gamma}}$ and ${{\mathcal D}_{CM}^{T}}$ on quasi-invariants ${\mathcal Q}_m$.
\end{proposition}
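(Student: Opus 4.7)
The approach is to mimic the proofs of Propositions \ref{Cor:eigValEq} and \ref{heat} for the eigenvalue equation, and to derive the general intertwining from the Cherednik algebra conjugation of Proposition \ref{conjaut} combined with an Opdam-type uniqueness argument.

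Set $A' := \sum_{j=0}^l \frac{\tau_j}{j+1} L_{j+1}$ and $B := e^{A'}$. Since each $L_{j+1}$ preserves $\mathcal{Q}_m$ by formula \eqref{new35} and is homogeneous of degree $-(j+1)$, the operator $A'$ is locally nilpotent on $\mathcal{Q}_m$, so $B \colon \mathcal{Q}_m \to \mathcal{Q}_m$ is a well-defined linear bijection with inverse $e^{-A'}$. The commutation relations $[E, L_{j+1}] = -(j+1)L_{j+1}$ (homogeneity) and $[L_{j+1}, L_{k+1}] = 0$ (commutativity of the rational CM integrals) give $[E, A'] = -L_\gamma$ and $[A', L_\gamma] = 0$. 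Expanding $[E, B]$ by the standard power series manipulation, exactly as in the proof of Proposition \ref{heat}, yields $[E, B] = -L_\gamma B$, whence $K_\gamma B = (E + L_\gamma) B = BE$. Applied to a homogeneous $q \in \mathcal{Q}_m$ of degree $d$, this gives $K_\gamma P^\gamma_q = B E q = d P^\gamma_q$, proving the first assertion.

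For the general intertwining, let $K' \in \mathcal{D}_{CM}^{R,\gamma}$ denote the image of $\mathcal{L}_p \in \mathcal{D}_{CM}^{T}$ under the algebra isomorphism induced by $\gamma$. In the rational Cherednik algebra, $\gamma(p(\pi_1,\ldots,\pi_N)) = \sum_{n \ge 0} \frac{1}{n!}\mathrm{ad}_{A^*}^n(p(\pi_1,\ldots,\pi_N))$ with $A^* = \sum_j \frac{\tau_j}{j+1} y^{j+1}$, and this sum terminates at $n = \deg p$: indeed, the commutation $[y^{j+1}, x_i] = -(j+1) y_i^j$ from \eqref{powers} yields $[y^{j+1}, \pi_k] = (j+1) y_k^{j+1}$, so each application of $\mathrm{ad}_{A^*}$ strictly lowers the total $x$-degree of $p(\pi_1,\ldots,\pi_N)$ in its PBW form. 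Passing to the polynomial representation and restricting to $\Lambda_N$, the Cherednik elements $A^*$ and $p(\pi_1,\ldots,\pi_N)$ act as $A'$ and $\mathcal{L}_p$ respectively, so $K'|_{\Lambda_N} = \sum_{n=0}^{\deg p} \frac{1}{n!} \mathrm{ad}_{A'}^n(\mathcal{L}_p)|_{\Lambda_N} = (B \mathcal{L}_p B^{-1})|_{\Lambda_N}$, with the right-hand side now a \emph{genuine} rational differential operator. The Opdam-type uniqueness lemma used in the proof of Proposition \ref{Prop:selfAdj} then extends the identity $K' = B \mathcal{L}_p B^{-1}$ from $\Lambda_N$ to all of $\mathcal{Q}_m$, giving the desired intertwining $K' B = B \mathcal{L}_p$.

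The main technical point is the Cherednik-algebra termination of $\mathrm{ad}_{A^*}^n$ applied to $p(\pi_1,\ldots,\pi_N)$, which converts the a priori infinite exponential conjugation into a finite sum of rational differential operators; once this is in hand, the passage from $\Lambda_N$ to $\mathcal{Q}_m$ reduces to the standard uniqueness of rational-differential-operator extensions, and both assertions of the proposition follow.
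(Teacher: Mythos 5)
Your proposal is correct and follows essentially the route the paper intends (it gives no explicit proof, pointing instead to Propositions \ref{Cor:eigValEq} and \ref{heat} and the automorphism machinery of Section \ref{automLN}): the commutator identity $[E,e^{A'}]=-L_\gamma e^{A'}$ for the eigenvalue equation, and the conjugation $\gamma(h)=e^{\mathrm{ad}_{A^*}}h$ together with restriction for the general intertwining. The only blemish is the sign in your quoted relation $[y^{j+1},x_i]$ (formula \eqref{powers} gives $[x_i,y^{l+1}]=-(l+1)y_i^l$, hence $[y^{j+1},x_i]=+(j+1)y_i^{j}$), but the consequence you actually use, $[y^{j+1},\pi_k]=(j+1)y_k^{j+1}$, is the correct one and the argument is unaffected.
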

The spectrum of the operator $K_\gamma$ on ${\mathcal Q}_m$ thus coincides with the spectrum of the Euler operator $E$.

In the simplest one-dimensional case with  $m=0$  and $\tau_j=0$ for $j<l$, $\tau_l = \tau$ we have the operator
$$
K_\gamma = x \frac{d }{d x} + \tau \frac{d^{l+1}}{d x^{l+1}}.
$$
The corresponding polynomials 
$$
\displaystyle P_n(x; \tau) = e^{\frac{\tau}{l+1}  \frac{d^{l+1}}{d x^{l+1}}} x^{n}
$$
are known as Gould--Hopper polynomials \cite{GH62}. 
These polynomials appear in relation with various problems from probability, quantum mechanics and integrable systems \cite{Cha11, DLMT96, VL13}.

Note that for $l\ge 2$ these polynomials do not satisfy any three-term recurrence relation and hence by Favard's theorem do not form an orthogonal system (but  satisfy a more general vector orthogonality property, see Horozov \cite{Hor16}).

The corresponding eigenvalue equation with $ \, \tau=-l-1$
\beq{freud}
K_\gamma \Psi= x \frac{d\Psi }{d x} - (l+1) \frac{d^{l+1}\Psi}{d x^{l+1}}=\lambda \psi 
\eeq
has interesting non-polynomial solutions 
$$
\Psi=m_k=\int_0^{+\infty}z^{k}e^{-z^{2(l+1)}+xz^2}dz, \quad k\ge 0, 
$$
which are the moments of the generalised Shohat--Freud weight \cite{Fre76} (see e.g. \cite{CJ20} where moments are expressed through generalised hypergeometric functions for $l=2,3,4$). Indeed,
it is easy to see that 
$$
K_\gamma e^{-z^{2(l+1)}+xz^2}=\frac{1}{2}z\frac{d}{dz} e^{-z^{2(l+1)}+xz^2},
$$
which after integration by parts gives the relation (\ref{freud}) with $\lambda=-\frac{k}{2}.$

Going back to the general case consider the function
$$
F_\gamma (x, \lambda) = \phi(x, \lambda) e^{\sum_{j=0}^l  \frac{\tau_j}{j+1} \lambda^{j+1}},
$$
where $\lambda^{j+1}= \sum_{i=1}^N \lambda_i^{j+1}$. For a homogeneous basis $q^i$, $i \in \Z_{\ge 0}$  in the space of quasi-invariants ${\mathcal Q}_m$  consider the polynomials 
$P^\gamma_i(x):= P^\gamma_{q_i}(x)$ given by formula \eqref{Pq}, where $q_i$ is the dual basis in quasi-invariants ${\mathcal Q}_m$. 

\begin{proposition}
\label{Flexpansion}
We have the following expansion
$$
 F_\gamma(x, \lambda) = \sum_i P^\gamma_i(x) q^i(\lambda).
$$
\end{proposition}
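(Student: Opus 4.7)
The plan is to apply the operator $e^{M_x}$, where
\[
M_x := \sum_{j=0}^l \frac{\tau_j}{j+1}\, L_{j+1}
\]
acts in the variable $x$, to both sides of the Etingof--Ginzburg-type expansion $\phi(x,\lambda) = \sum_i q_i(x)\, q^i(\lambda)$ from \eqref{phiExp}, and to compare the two resulting expressions.

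For the left-hand side I would first note that each $L_{j+1}$ is precisely the operator $L_q$ of \eqref{LqEq} associated with the symmetric (hence quasi-invariant) Newton power sum $q(\xi)=\sum_i\xi_i^{j+1}$; hence $L_{j+1}\phi(x,\lambda)=\lambda^{j+1}\phi(x,\lambda)$, where $\lambda^{j+1}=\sum_i\lambda_i^{j+1}$. Since the $L_{j+1}$ pairwise commute, iterating yields $M_x^n\phi=\bigl(\sum_j\tfrac{\tau_j}{j+1}\lambda^{j+1}\bigr)^n\phi$, and summing the Taylor series gives
\[
e^{M_x}\phi(x,\lambda)=\exp\!\Bigl(\sum_{j=0}^l\tfrac{\tau_j}{j+1}\lambda^{j+1}\Bigr)\phi(x,\lambda)=F_\gamma(x,\lambda).
\]

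For the right-hand side I would apply $e^{M_x}$ termwise. By \eqref{new35} each $L_{j+1}$ preserves $\mathcal{Q}_m$, and since it strictly lowers the polynomial degree $M_x$ is locally nilpotent on $\mathcal{Q}_m$; hence $e^{M_x}$ is a well-defined linear endomorphism of $\mathcal{Q}_m$ acting as a finite sum on every homogeneous element. Applied to the summand $q_i(x)$, the very definition \eqref{Pq} of $P^\gamma_{q_i}$ gives $e^{M_x}q_i = P^\gamma_{q_i} = P^\gamma_i$, so that
\[
e^{M_x}\sum_i q_i(x)\, q^i(\lambda) = \sum_i P^\gamma_i(x)\, q^i(\lambda).
\]
Comparing this with the first computation produces the claimed expansion of $F_\gamma$.

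The main point requiring care is the interchange of $e^{M_x}$ with the infinite summation in \eqref{phiExp}. I would handle this by working degree-by-degree: for every fixed total degree in $\lambda$ only finitely many $q^i(\lambda)$ contribute, the corresponding partners $q_i(x)$ have bounded degree, and $M_x$ acts on that finite-dimensional subspace as a finite sum. Consequently both computations reduce to finite linear-algebraic manipulations in each graded piece and agree coefficient-by-coefficient, completing the argument.
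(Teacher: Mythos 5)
Your proof is correct, but it takes a genuinely different route from the paper's. The paper argues indirectly: it first writes $F_\gamma(x,\lambda)=\sum_i\widetilde P^\gamma_i(x)q^i(\lambda)$ with a priori unknown quasi-invariant coefficients $\widetilde P^\gamma_i$ having top term $q_i$, derives the identity $K_\gamma F_\gamma=E_\lambda F_\gamma$ (as in Proposition \ref{Cor:eigValEq}), concludes that each $\widetilde P^\gamma_i$ solves $K_\gamma \widetilde P^\gamma_i=\deg q^i\,\widetilde P^\gamma_i$, and then identifies $\widetilde P^\gamma_i$ with $P^\gamma_i$ by uniqueness of the polynomial solution with prescribed leading term together with Proposition \ref{Pqformula}. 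You instead apply $e^{M_x}$ with $M_x=\sum_{j=0}^l\frac{\tau_j}{j+1}L_{j+1,x}$ directly to the expansion \eqref{phiExp}, using $L_{j+1}\phi=\lambda^{j+1}\phi$ (a property the paper itself invokes, since $L_{j+1}$ is the operator $L_q$ for the Newton sum $q=p_{j+1}$) on one side and the definition \eqref{Pq} on the other. Your approach is more direct and self-contained: it bypasses the differential equation, the uniqueness argument, and the appeal to Proposition \ref{Pqformula} altogether, and your degree-by-degree justification of the termwise application of $e^{M_x}$ to the infinite sum is exactly the point that needs care and is handled adequately (each $\lambda$-homogeneous component involves only finitely many $i$, and $M_x$ strictly lowers degree, hence is locally nilpotent on $\mathcal Q_m$). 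What the paper's route buys in exchange is the explicit characterisation of $P^\gamma_i$ as the unique polynomial eigenfunction of $K_\gamma$ with leading term $q_i$, which fits the bispectral theme of the section; your route instead makes the generating-function identity an immediate consequence of the definitions.
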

\begin{proof}
The function $F_\gamma(x, \lambda) $ can be expanded as  
$$
 F_\gamma(x, \lambda) = \sum_i \widetilde P^\gamma_i(x) q^i(\lambda)
$$
for some quasi-invariant polynomials $\widetilde P^\gamma_i(x)$ with the highest degree term $q_i(x)$. Similarly to the proof of Proposition \ref{Cor:eigValEq} we have the relation
$$
K_\gamma  F_\gamma(x, \lambda) = E_\lambda F_\gamma(x, \lambda),
$$
where the operator $K_\gamma$ is given by formula \eqref{Kgamma}. Therefore
\beq{difeqK}
K_\gamma \widetilde P^\gamma_i(x) = \deg q^i \widetilde P^\gamma_i(x).
\eeq
Since the polynomial solution of differential equation \eqref{difeqK} with the highest degree term $q_i$ is unique, it follows by Proposition \ref{Pqformula} that  $\widetilde P^\gamma_i(x) =  P^\gamma_i(x)$. 
\end{proof}

This function satisfies the following bispectrality property in the sense of Duistermaat and Gr\"unbaum \cite{DG86}.
\begin{proposition}
\label{prop75}
For any quasi-invariant polynomial $q\in {\mathcal Q}_m$ the following differential equations hold:
\beq{bisp0}\nonumber
L_{q, x} F_\gamma(x, \lambda) = q(\lambda) F_\gamma(x, \lambda), \quad 
\widehat L_{q, \lambda}  F_\gamma(x, \lambda) = q(x) F_\gamma(x,  \lambda),
\eeq
where $\widehat L_{q, \lambda} = e^{\sum_{j=0}^l  \frac{\tau_j}{j+1} \lambda^{j+1}} L_{q, \lambda} e^{-\sum_{j=0}^l  \frac{\tau_j}{j+1} \lambda^{j+1}}$. 
\end{proposition}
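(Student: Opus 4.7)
The plan is to prove the two equations separately, both being essentially direct consequences of the eigenfunction property \eqref{LqEq} of the BA function combined, in the second case, with the symmetry \eqref{sym}.

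For the first equation, I would observe that the exponential factor $e^{g(\lambda)}$, where $g(\lambda)=\sum_{j=0}^l\frac{\tau_j}{j+1}\lambda^{j+1}$, depends only on $\lambda$ and is therefore a scalar with respect to the operator $L_{q,x}$, which acts only in the $x$ variables. Hence
\[
L_{q,x}F_\gamma(x,\lambda)=e^{g(\lambda)}L_{q,x}\phi(x,\lambda)=e^{g(\lambda)}q(\lambda)\phi(x,\lambda)=q(\lambda)F_\gamma(x,\lambda),
\]
where the middle equality is exactly \eqref{LqEq}.

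For the second equation, I first invoke the symmetry $\phi(x,\lambda)=\phi(\lambda,x)$ of the BA function (stated in \eqref{sym}) to conclude that $L_{q,\lambda}\phi(x,\lambda)=q(x)\phi(x,\lambda)$ by applying \eqref{LqEq} with the roles of $x$ and $\lambda$ swapped. Then I would unwind the definition of $\widehat{L}_{q,\lambda}$: since the operator of multiplication by $e^{\pm g(\lambda)}$ commutes with $\phi(x,\lambda)$ as functions of $\lambda$, we get
\[
\widehat{L}_{q,\lambda}F_\gamma(x,\lambda)=e^{g(\lambda)}L_{q,\lambda}\bigl(e^{-g(\lambda)}\phi(x,\lambda)e^{g(\lambda)}\bigr)=e^{g(\lambda)}L_{q,\lambda}\phi(x,\lambda)=q(x)F_\gamma(x,\lambda),
\]
which is the claim.

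There is no real obstacle here; both identities are essentially bookkeeping once one has \eqref{LqEq} and the symmetry \eqref{sym} at hand, and the conjugation by $e^{g(\lambda)}$ in the definition of $\widehat{L}_{q,\lambda}$ is precisely what is needed to absorb the gauge factor that converts $\phi$ into $F_\gamma$. The only point deserving emphasis is that the nontrivial content of bispectrality lies in the \emph{rationality} of the coefficients of $\widehat{L}_{q,\lambda}$ in $\lambda$, which follows because $L_{q,\lambda}$ has such coefficients and the conjugation by $e^{\pm g(\lambda)}$ produces only polynomial shifts in the coefficients via the finite Baker--Campbell--Hausdorff expansion analogous to \eqref{conj1}.
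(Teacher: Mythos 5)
Your proposal is correct and takes essentially the same route as the paper, which simply notes that the claim follows directly from the eigenvalue property \eqref{BAeigenf} (equivalently \eqref{LqEq}) and the symmetry \eqref{sym} of the Baker--Akhiezer function. You have merely written out explicitly the bookkeeping that the conjugation by $e^{\sum_j \frac{\tau_j}{j+1}\lambda^{j+1}}$ in the definition of $\widehat L_{q,\lambda}$ absorbs the gauge factor relating $\phi$ to $F_\gamma$, which is exactly the intended argument.
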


The proof follows directly from  property \eqref{BAeigenf} of the BA function and its symmetry \eqref{sym}. This leads to the following result.
\begin{theorem}
\label{ddb}
For any homogeneous quasi-invariant $q(x) \in {\mathcal Q}_m[d]$ the polynomials $P^\gamma_{i}(x)$ satisfy the recurrence relation  
\beq{bisp}
q(x) P^\gamma_{i}(x) =  \sum_ {j:   d_i -l d\le  \deg q_j \le d_i +d} C^j_{i} P^\gamma_{j}(x),  \quad C^j_i = \langle \widehat L_q q^j, q_i \rangle_m,
\eeq
where $d_i = \deg P_i^\gamma = \deg q_i$, 
which together with the differential equation 
$$
K_\gamma P^\gamma_{i}(x) =d_i \,  P^\gamma_{i}(x)
$$
 gives the differential--difference bispectral property for the polynomials $P^\gamma_{i}(x)$.
\end{theorem}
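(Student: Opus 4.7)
The plan is to combine the $q^j$-expansion from Proposition~\ref{Flexpansion} with the bispectral identity of Proposition~\ref{prop75} and the duality $\langle q^j,q_i\rangle_m=\delta_{ji}$. The differential equation $K_\gamma P^\gamma_i=d_iP^\gamma_i$ is exactly Proposition~\ref{Pqformula}, so the substantive work lies in the multiplicative recurrence.

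First I would pair $F_\gamma(x,\lambda)=\sum_j P^\gamma_j(x)q^j(\lambda)$ against $q_i(\lambda)$ in the spectral variable to extract $P^\gamma_i(x)=\langle F_\gamma(x,\cdot),q_i(\cdot)\rangle_m$. Since $q(x)$ does not interact with the $\lambda$-pairing, multiplying by $q(x)$ and invoking Proposition~\ref{prop75} gives
\[
q(x)P^\gamma_i(x)=\langle q(x)F_\gamma(x,\cdot),q_i(\cdot)\rangle_m=\langle\widehat L_{q,\lambda}F_\gamma(x,\cdot),q_i(\cdot)\rangle_m.
\]
Re-inserting the expansion of $F_\gamma$ and pulling $\widehat L_{q,\lambda}$ through the sum produces $\sum_j P^\gamma_j(x)\,\langle\widehat L_qq^j,q_i\rangle_m$, which, matched against Proposition~\ref{Flexpansion}, delivers both the recurrence and the formula $C^j_i=\langle\widehat L_qq^j,q_i\rangle_m$.

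For this manipulation to be legitimate, $\widehat L_qq^j$ must lie in $\mathcal Q_{\mathcal A}$ so that the pairing is defined. I would establish this by writing
\[
\widehat L_q=e^{\mathrm{ad}_A}L_q=\sum_{k=0}^{\deg q}\frac{1}{k!}(\mathrm{ad}_A)^k L_q,\qquad A=\sum_{j=0}^l\frac{\tau_j}{j+1}M_{\lambda^{j+1}},
\]
and noting that the series truncates because each bracket with a multiplication operator strictly lowers the differential order of $L_q$. Since $L_q$ preserves $\mathcal Q_{\mathcal A}$ by \eqref{new35} and multiplication by the symmetric polynomial $\lambda^{j+1}$ does as well, every term in the series preserves $\mathcal Q_{\mathcal A}$, and hence so does $\widehat L_q$. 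For the degree range I would then track the grading: the homogeneous components of $(\mathrm{ad}_A)^kL_q$ have degree $-d+\sum_{i=1}^k(j_i+1)$ with each $j_i\in\{0,\dots,l\}$, so $\widehat L_q$ maps $\mathcal Q_{\mathcal A}[d_j]$ into $\bigoplus_{r=d_j-d}^{d_j+dl}\mathcal Q_{\mathcal A}[r]$. Since $\langle\cdot,\cdot\rangle_m$ pairs only equal-degree homogeneous parts, $C^j_i$ can be nonzero only when $d_j-d\le d_i\le d_j+dl$, equivalently $d_i-ld\le d_j\le d_i+d$. The main subtle point I anticipate is confirming that $\widehat L_q$ genuinely acts on polynomials rather than merely on formal series despite its definition through an exponential conjugation; once this is in hand, the remainder is a routine combination of duality and grading bookkeeping.
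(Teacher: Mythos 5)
Your argument is correct and follows essentially the same route as the paper: both rest on applying $\widehat L_{q,\lambda}$ to the expansion of $F_\gamma$ from Proposition~\ref{Flexpansion}, invoking Proposition~\ref{prop75}, extracting coefficients via the duality $\langle q^j,q_i\rangle_m=\delta_{ji}$, and bounding the degree shift of $\widehat L_q$ (the paper phrases this as replacing $\partial_i$ by $\partial_i-\mathcal P_\gamma(\lambda_i)$ in $L_{q,\lambda}$, which is equivalent to your truncating $e^{\mathrm{ad}_A}$ expansion). Your extra care that $\widehat L_q$ preserves $\mathcal Q_{\mathcal A}$ and acts polynomially is a reasonable supplement to what the paper leaves implicit.
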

 \begin{proof}
Let us write the operator $\widehat L_{q, \lambda}$ in the basis $q^i(\lambda)$:
\beq{Lqlam}
\widehat L_{q, \lambda}  q^i(\lambda) = \sum_{j}  C^i_j q^j(\lambda),
\eeq
where the coefficients $C^i_j = \langle \widehat L_q q^i, q_j \rangle_m$ since $\langle q^j, q_k\rangle_m = \delta^j_k$. The operator $\widehat L_{q, \lambda}$ can be given by 
replacing partial derivatives $\partial_i$ in $L_{q, \lambda}$ with $\partial_i - {\mathcal P}_\gamma(\lambda_i)$. Since  the operator $L_{q, \lambda}$ is lowering the degree of quasi-invariants by $d$ and the multiplication by ${\mathcal P}_\gamma(\lambda_i)$ is increasing the degree by $l$ the inner product  $C^i_j = \langle \widehat L_q q^i, q_j \rangle_m=0$  if $\deg q_j>d_i +ld$ or $\deg q_j < d_i -d$.

By applying the operator $\widehat L_{q, \lambda}$ to the function $F_\gamma(x, \lambda)$ and using 
  formula \eqref{Lqlam}  and Propositions \ref{Flexpansion} and \ref{prop75}  we get the required relation  \eqref{bisp}.
\end{proof}

Let us illustrate Theorem \ref{ddb} by a one-dimensional example with ${\mathcal P}_\gamma (z) = \tau z^l$. Let us define polynomials $p_k^l$, $k\in \Z_{\ge0}$ by the expansion
$$
\phi_m(x, \lambda)e^{\frac{\tau}{l+1}\lambda^{l+1}} = \sum_{k=0}^\infty p_k^l(x) \lambda^k,
$$
where the BA function $\phi_m$ is given by \eqref{ba2}. 
Note that $p_{2s-1}^l(x)=0$ for $s=1, \ldots, m$. 
The relation of polynomials $p^l_k$ at $l=1$, $\tau=-1$ with $m$-Hermite polynomials is given by the formula
$$
H_k^{(m)}(x) = k! p^1_k(x).
$$

In this case the operator 
$
\widehat L_{q, \lambda}
$
for $q=x^2$ takes the form
$$
\widehat L_{x^2, \lambda} = \partial^2_\lambda  - 2(\tau \lambda^l  + m \lambda^{-1})\partial_\lambda   +  \tau^2 \lambda^{2l} +(2m- l)\tau \lambda^{l-1},
$$
and hence
\beq{klkl0}
x^2 p^l_k = (k+2)(k-2m+1)p^l_{k+2} +   \tau(l - 2(k-m+1))p^l_{k-l+1}   + \tau^2 p^l_{k-2l}.
\eeq

Let now   $m=1$. Then $x^3 \in {\mathcal Q}_m$ and the corresponding differential operators $L_{x^3, \lambda}$ and $\widehat L_{x^3, \lambda}$ have the form 
$$
L_{x^3, \lambda} = \partial_\lambda^3 - 3 \lambda^{-1} \partial_\lambda^2+ 3 \lambda^{-2}\partial_\lambda,
$$
\begin{multline*}
\widehat L_{x^3, \lambda} = L_{x^3, \lambda}  - 
 3 \tau \lambda^l\partial_\lambda^2 
+3(\tau(2-l)\lambda^{l-1}  + \tau^2 \lambda^{2 l} )\partial_\lambda
\\
- \tau(l-1)(l-3) \lambda^{l-2} + 3(l-1) \tau^2 \lambda^{2l-1} - \tau^3 \lambda^{3l}.
\end{multline*}
The corresponding recurrence relation has the form
\begin{multline}
\label{klkl}
x^3 p^l_k = (k+3)(k+1)(k-1) p^l_{k+3}  
\\
- \tau (   3 k^2  - 3 k l + l^2 + 3 k- l -3    ) p^l_{k-l+2} + 3 \tau^2(k-l)p^l_{k-2l+1}
-\tau^3 p^l_{k-3l}.
\end{multline}

We note that when $l=1$ recurrence relations of this type were considered by G\'omez-Ullate et al. \cite{GKKM16}  
in the context of exceptional Hermite polynomials \cite{GGM14}. In this case formulas \eqref{klkl0}, \eqref{klkl} take the form of the following recurrence relations for $m$-Hermite polynomials:
\beq{mHrec1}\nonumber
x^2 H_k^{(1)} = \frac{k-1}{k+1} H_{k+2}^{(1)} + (2k-1) H_{k}^{(1)} + k(k-1) H_{k-2}^{(1)},
\eeq
\beq{mHrec2}\nonumber
x^3 H_k^{(1)} = \frac{k-1}{k+2} H_{k+3}^{(1)} + 3(k-1) H_{k+1}^{(1)} + 3k(k-1)H_{k-1}^{(1)}+ k(k-1)(k-2)H_{k-3}^{(1)}.
\eeq


\section{Deformed Calogero--Moser systems}
\label{secDefCM}

Let us recall the following generalised CM Hamiltonian which describes interaction of two types of particles \cite{CFV98}, \cite{SV04}:
\begin{multline}
\label{defHrat}
 L_{N_1, N_2}  = k^{-1}\sum_{i=1}^{N_1} \partial^2_{z_i}  + \sum_{j=1}^{N_2} \partial^2_{w_j}  - \sum_{i_1<i_2}^{N_1}\frac{2}{z_{i_1} - z_{i_2}}(\partial_{z_{i_1}}-\partial_{z_{i_2}}) 
\\
- 
 \sum_{j_1<j_2}^{N_2}\frac{2 k^{-1}}{w_{j_1} - w_{j_2}}(\partial_{w_{j_1}}-\partial_{w_{j_2}})
- \sum_{i=1}^{N_1}\sum_{j=1}^{N_2} \frac{2}{z_i - w_j}(k^{-1}\partial_{z_i} - \partial_{w_j}),
\end{multline}
where $N_1, N_2 \in \Z_{\ge 0}$. This Hamiltonian is known to be integrable for any non-zero $k\in \C$.

Consider the rational Cherednik algebra $H_{1/k}$ associated with the symmetric group $S_N$, where $k\in \N$, $k\le N$. In this case  the polynomial representation $\C[x]$ has a submodule $I$ consisisting of polynomials $p(x)$ vanishing on the $S_N$ orbit of the subspace $\Pi$ given by 
\begin{equation*}
\begin{split}
x_1 &=\ldots = x_k =: z_1,\\
x_{k+1} &=\ldots = x_{2k} =: z_2,\\
&\ \, \vdots\\
x_{(N_1-1) k +1} &= \ldots = x_{N_1 k} =: z_{N_1},
\end{split}
\end{equation*}
where $N_1\in \N$ is such that $N_2:= N- k N_1 \in \N$ (\cite{Fei12},  \cite{ESG09}).

Let us consider the action of an $S_N$--invariant element $h \in H_{1/k}$ in the quotient module $\C[x]/I$. It gives rise to a differential operator ${\rm Res}_\Pi \, h$ acting on functions in variables $z, w$, where $z=(z_1, \ldots, z_{N_1})$, and $w=(w_1, \ldots, w_{N_2})$ is given by $w_i = x_{N_1 k +i}$, $i=1, \ldots, N_2$.  
Let $\Sigma\subset \Pi$ be the union of hyperplanes in $\Pi$ where two coordinates from the set $z_i, w_j$ become equal.
Then this operator may be defined by the formula
$$
{\rm Res}_\Pi h (f)  = h(\widetilde f)|_\Pi,
$$
where $f$ is a regular function on a small open set $U \subset\Pi\setminus \Sigma$, and  $\widetilde f$ is its $S_N$-invariant regular extension to an $S_N$-invariant open set $V\subset \C^N$, $U \subset V$, such that $\widetilde f|_U = f$.
It was established in \cite{Fei12} that 
\beq{Fres}
{\rm Res}_\Pi \, \sum_{i=1}^N y_i^2 =  L_{N_1, N_2}.
\eeq

Let us also consider the trigonometric version of the Hamiltonian \eqref{defHrat} which (in the exponential coordinates) has the form \cite{CFV98}, \cite{SV04}
\begin{multline}
\label{defHtrig}\nonumber
{\mathcal L}_{N_1, N_2} =
k^{-1}\sum_{i=1}^{N_1} (z_i \partial_{z_i})^2 +  \sum_{j=1}^{N_2} (w_j \partial_{w_j})^2 -
 \sum_{i_1<i_2}^{N_1} \frac{z_{i_1}+z_{i_2}}{z_{i_1}-z_{i_2}}(z_{i_1} \partial_{z_{i_1}} - z_{i_2} \partial_{z_{i_2}})
\\
- k^{-1} \sum_{j_1<j_2}^{N_2} \frac{w_{j_1}+w_{j_2}}{w_{j_1}-w_{j_2}}(w_{j_1} \partial_{w_{j_1}} - w_{j_2} \partial_{w_{j_2}})
- \sum_{i=1}^{N_1}\sum_{j=1}^{N_2} \frac{z_i+ w_j}{z_i - w_j} (k^{-1} z_i \partial_{z_i} -  w_j \partial_{w_j}).
\end{multline}
\begin{lemma}
\label{EulerNN}
The following relations hold:
\beq{deftr}
{\rm Res}_\Pi \sum_{i=1}^N (x_i y_i)^2 = {\mathcal L}_{N_1, N_2},
\eeq
and
\beq{Eulerd}
{\rm Res}_\Pi \sum_{i=1}^N x_i y_i =  \sum_{i=1}^{N_1} z_i \partial_{z_i} + \sum_{j=1}^{N_2} w_j \partial_{w_j} =: E_{N_1, N_2}.
\eeq
\end{lemma}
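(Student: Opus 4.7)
The plan is to prove both identities by direct computation in the faithful polynomial representation $\rho\colon H_{1/k}\to\mathrm{End}(\C[x])$, where $y_i$ acts as the Dunkl operator $D_i$ and $x_i$ by multiplication. Both operators $\sum_i x_iy_i$ and $\sum_i(x_iy_i)^2$ are $S_N$-invariant, so the residue ${\rm Res}_\Pi\, h(f)=h(\widetilde f)|_\Pi$ is computed by extending $f$ to an $S_N$-invariant polynomial $\widetilde f$ on $\C^N$, applying $h$, and restricting to $\Pi$; throughout, the recurring simplification is that $(s_{ij}-1)\widetilde f\equiv 0$ for every transposition, so any term carrying a factor $s_{ij}-1$ may be discarded.

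I would start with \eqref{Eulerd}. Expanding
\[
\sum_i x_iD_i = E_x + \frac{1}{k}\sum_{i\ne j}\frac{x_i}{x_i-x_j}(s_{ij}-1)
\]
and using the elementary identity $\frac{x_i}{x_i-x_j}+\frac{x_j}{x_j-x_i}=1$ to symmetrize over unordered pairs, the reflection sum collapses to $\frac{1}{k}\sum_{i<j}(s_{ij}-1)$, which annihilates $\widetilde f$. Thus ${\rm Res}_\Pi\sum_i x_iy_i$ reduces to applying the Euler field $E_x=\sum_i x_i\partial_{x_i}$ to $\widetilde f$ and restricting to $\Pi$. Since $E_x$ acts as multiplication by $d$ on homogeneous degree-$d$ polynomials, and since the linear parametrization of $\Pi$ preserves degree, $E_{N_1,N_2}$ likewise acts as multiplication by $d$ on $\widetilde f|_\Pi$. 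Hence the two operators agree on each homogeneous component, and by linearity ${\rm Res}_\Pi E_x = E_{N_1,N_2}$, giving \eqref{Eulerd}.

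For \eqref{deftr} the plan is analogous but more intricate, following the template of \cite{Fei12} used to prove \eqref{Fres}. Applying the Cherednik relations $D_ix_i = x_iD_i + 1 - \frac{1}{k}\sum_{j\ne i}s_{ij}$ and $s_{ij}D_i = D_js_{ij}$, I would rewrite
\[
\sum_i\pi_i^2 = \sum_i x_i^2D_i^2 + \sum_i x_iD_i - \frac{1}{k}\sum_{i\ne j} x_iD_js_{ij},
\]
then substitute the explicit form $D_i = \partial_i + \frac{1}{k}\sum_{\ell\ne i}\frac{s_{i\ell}-1}{x_i-x_\ell}$ in each summand, discard the resulting reflection terms on $S_N$-invariant $\widetilde f$ via the pair-symmetrization above, and restrict the surviving rational differential operator from $\C^N$ to $\Pi$.

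The main obstacle is the combinatorial bookkeeping after the restriction. The denominators $x_i-x_j$ split into four classes depending on whether the pair $\{i,j\}$ lies inside a single $k$-block of $\Pi$, across two distinct $z$-blocks, inside the $w$-sector, or between a $z$-block and the $w$-sector, and they specialize accordingly to $0$, $z_a-z_b$, $w_{a'}-w_{b'}$, or $z_a-w_{b'}$. The same-block case is the most delicate: the apparent pole of $1/(x_i-x_j)$ is cancelled by $(s_{ij}-1)\widetilde f$ and must be analyzed via its regular limit on $\Pi$, then accumulated over the $\binom{k}{2}$ pairs within each block; the resulting contributions, together with the $x_i^2\partial_i^2$ terms, are what produce the quadratic part $k^{-1}(z_i\partial_{z_i})^2 + (w_j\partial_{w_j})^2$ of ${\mathcal L}_{N_1,N_2}$. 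The three cross-class contributions, combined with the surviving $s_{ij}$ in the third summand after the $x_iD_j$ expansion, yield the off-diagonal rational terms with numerators $\frac{z_{i_1}+z_{i_2}}{z_{i_1}-z_{i_2}}$, $\frac{w_{j_1}+w_{j_2}}{w_{j_1}-w_{j_2}}$, and $\frac{z_i+w_j}{z_i-w_j}$. Matching these block-by-block with ${\mathcal L}_{N_1,N_2}$ term by term completes the identification.
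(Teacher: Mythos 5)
Your treatment of \eqref{Eulerd} is complete and correct: the pair symmetrization $\frac{x_i}{x_i-x_j}+\frac{x_j}{x_j-x_i}=1$ does collapse the reflection part of $\sum_i x_iD_i$ to $\frac1k\sum_{i<j}(s_{ij}-1)$, which annihilates $\widetilde f$, and the identification of the restricted Euler field with $E_{N_1,N_2}$ is immediate. For \eqref{deftr}, however, your route differs from the paper's and, as written, stops short of a proof. The paper does not re-expand the Dunkl operators at all: it first quotes the known identity ${\rm Res}\sum_i(x_iy_i)^2={\mathcal H}_N$ (the trigonometric CM operator in the form already recorded in Section 5), so that only a purely differential operator with rational coefficients remains, and then computes ${\rm Res}_\Pi$ of that operator by the substitution recipe of \cite{Fei12} in the orthonormal coordinates $\tilde z=k^{1/2}z,\,w$ (replace $x_{jk+s}\mapsto k^{-1/2}\tilde z_j$, $\partial_{x_{jk+s}}\mapsto k^{-1/2}\partial_{\tilde z_j}$, drop normal derivatives), followed by rescaling back to $z$. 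This two-step factorization is what makes the proof short; your plan re-derives the first step and the second simultaneously, which is legitimate in principle but considerably more laborious.

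The genuine gap is in the execution of that combined computation. First, ``discard the resulting reflection terms on $S_N$-invariant $\widetilde f$'' cannot be applied wholesale to $\sum_i x_i^2D_i^2$: only the innermost (rightmost) Dunkl operator acts on the invariant $\widetilde f$; the outer one acts on $\partial_i\widetilde f$, which is \emph{not} invariant, and its reflection part $\frac1k\sum_{\ell\ne i}\frac{(\partial_\ell-\partial_i)\widetilde f}{x_i-x_\ell}$ is precisely what produces all the rational interaction terms of ${\mathcal H}_N$ --- discarding it would lose the statement. Relatedly, saying that ``the apparent pole of $1/(x_i-x_j)$ is cancelled by $(s_{ij}-1)\widetilde f$'' misattributes the mechanism: $(s_{ij}-1)\widetilde f\equiv 0$ identically, so such terms contribute nothing and require no limit; the delicate $0/0$ limits on $\Pi$ come from $(s_{ij}-1)$ applied to the non-invariant intermediates $\partial_i\widetilde f$ (equivalently $x_i\partial_i\widetilde f$) for $i,j$ in one $k$-block, where $(\partial_j-\partial_i)\widetilde f$ vanishes on $x_i=x_j$. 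Finally, the decisive block-by-block matching with ${\mathcal L}_{N_1,N_2}$ --- in particular the check that the $\binom{k}{2}$ same-block contributions combine with the $\sum_i x_i^2\partial_i^2$ terms to give exactly $k^{-1}(z_a\partial_{z_a})^2$ with no leftover first-order terms, and that the cross factors $k^{-1/2}$ distribute correctly over the $z$--$z$, $w$--$w$ and $z$--$w$ interactions --- is asserted rather than carried out, and this is where essentially all the content of \eqref{deftr} lies.
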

\begin{proof}
Recall that 
$$
{\mathcal H}_N = {\rm Res} \sum_{i=1}^N {\mathcal D}_i^2 =  {\rm Res} \sum_{i=1}^N (x_i y_i)^2  =
\sum_{i=1}^N (x_i \partial_{x_i})^2 - k^{-1} \sum_{i<j}^N \frac{x_i+x_j}{x_i-x_j} (x_i \partial_{x_i} - x_j \partial_{x_j}). 
$$
Let us consider orthonormal coordinates $(\tilde z, w)$ on the plane $\Pi$, where $\tilde z = k^{1/2} z$. In order to calculate the operator ${\rm Res}_\Pi \sum_{i=1}^N (x_i y_i)^2$ one has to replace in the operator ${\mathcal H}_N$ coordinates $x_{j k +s}$ with $k^{-1/2}\tilde z_j$  ($0\le j \le N_1-1$, $1\le s \le k$), and derivatives  $\partial_{x_{j k +s}}$ with $k^{-1/2} \partial_{\tilde z_j}$, coordinates $x_{N_1 k +i}$ with $w_i$, $i=1, \ldots, N_2$ and derivatives $\partial_{x_{N_1 k +i}}$ with $\partial_{w_i}$ and to replace with zero all the derivatives normal to the plane $\Pi$ (cf. \cite{Fei12}). This produces the operator
\begin{multline*}
{\rm Res}_\Pi \sum_{i=1}^N (x_i y_i)^2 = k^{-1}\sum_{i=1}^{N_1} 
 ({\tilde z}_i \partial_{{\tilde z}_i})^2 +  \sum_{j=1}^{N_2} (w_j \partial_{w_j})^2 -
 \sum_{i_1<i_2}^{N_1} \frac{{\tilde z}_{i_1}+ {\tilde z}_{i_2}}{{\tilde z}_{i_1}- {\tilde z}_{i_2}}({\tilde z}_{i_1} \partial_{{\tilde z}_{i_1}} - {\tilde z}_{i_2} \partial_{{\tilde z}_{i_2}})
\\
- k^{-1} \sum_{j_1<j_2}^{N_2} \frac{w_{j_1}+w_{j_2}}{w_{j_1}-w_{j_2}}(w_{j_1} \partial_{w_{j_1}} - w_{j_2} \partial_{w_{j_2}})
- \sum_{i=1}^{N_1}\sum_{j=1}^{N_2} \frac{k^{-1/2}\tilde z_i+ w_j}{k^{-1/2}\tilde z_i - w_j} (k^{-1} \tilde z_i \partial_{{\tilde z}_i} - w_j \partial_{w_j}).
\end{multline*}
Formula \eqref{deftr} follows by changing variables $\tilde z= k^{1/2} z$. Formula \eqref{Eulerd} is easy to see in a similar way.
\end{proof}

Define the algebra $\Lambda_{N_1,N_2}\subset \C[z, w]$ as the algebra generated by the deformed Newton sums (cf. \cite{SV04})
$$
p_r = k \sum_{i=1}^{N_1} z_i^r + \sum_{i=1}^{N_2} w_i^r, \quad r\ge 0.
$$
Let $I^{S_N}$ be the $S_N$-invariant part of the ideal $I$, it consists of $S_N$-invariant polynomials $p(x)$ which vanish on $\Pi$.
\begin{lemma}
\label{LambdaQuot}
We have an isomorphism of algebras $\Lambda_{N_1,N_2}\cong \Lambda_N/I^{S_N}$.
\end{lemma}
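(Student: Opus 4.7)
The plan is to realise the claimed isomorphism as induced by the natural restriction map to the subspace $\Pi$. Concretely, define the algebra homomorphism
\[
\sigma\colon\C[x_1,\ldots,x_N]\longrightarrow\C[z,w],\quad f\mapsto f|_\Pi,
\]
given by substituting $x_{(j-1)k+s}=z_j$ for $1\le j\le N_1$, $1\le s\le k$, and $x_{N_1k+i}=w_i$ for $1\le i\le N_2$. Let $\sigma_0$ denote its restriction to $\Lambda_N$. The goal is to identify the image of $\sigma_0$ with $\Lambda_{N_1,N_2}$ and its kernel with $I^{S_N}$, so that the first isomorphism theorem yields the statement.

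For the image, I would evaluate $\sigma_0$ on the Newton power sums $P_r(x)=\sum_{i=1}^N x_i^r$, which generate $\Lambda_N$ over $\C$. The substitution immediately yields $\sigma_0(P_r)=k\sum_{i=1}^{N_1}z_i^r+\sum_{j=1}^{N_2}w_j^r=p_r$, so $\mathrm{Im}\,\sigma_0$ is exactly the subalgebra generated by $\{p_r:r\ge 1\}$, which is by definition $\Lambda_{N_1,N_2}$. (The higher $p_r$ with $r>N$ lie in $\C[p_1,\ldots,p_N]$ via Newton's identities, which are polynomial identities and so survive substitution; this observation is not strictly needed but makes clear that the image is a finitely generated subalgebra.)

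For the kernel, the key observation is that a polynomial $f\in\Lambda_N$ is $S_N$-invariant, so $f|_\Pi=0$ if and only if $f$ vanishes on the entire orbit $S_N\cdot\Pi$, i.e.\ $f\in I$. Combined with $S_N$-invariance of $f$, this gives $\Ker\sigma_0=\{f\in\Lambda_N:f\text{ vanishes on }S_N\cdot\Pi\}=I\cap\Lambda_N=I^{S_N}$. The first isomorphism theorem then yields
\[
\Lambda_N/I^{S_N}\;\cong\;\mathrm{Im}\,\sigma_0\;=\;\Lambda_{N_1,N_2},
\]
as required.

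There is no real obstacle here: the argument is essentially a bookkeeping exercise once one identifies $\sigma_0$ as the right map. The only point that requires a line of justification is the identification $\Ker\sigma_0=I^{S_N}$, which relies on the fact that for $S_N$-invariant $f$, vanishing on the single plane $\Pi$ is equivalent to vanishing on the full $S_N$-orbit of $\Pi$ — a direct consequence of invariance.
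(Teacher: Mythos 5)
Your proof is correct and follows essentially the same route as the paper: both define the restriction map $p\mapsto p|_\Pi$ on $\Lambda_N$, identify its image with $\Lambda_{N_1,N_2}$ and its kernel with $I^{S_N}$, and apply the first isomorphism theorem. The paper states these two identifications as ``clear,'' whereas you spell out the computation on power sums and the orbit argument for the kernel, which is a welcome but not substantively different elaboration.
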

\begin{proof}
Consider the map $\varphi: \Lambda_N \to \Lambda_{N_1, N_2}$ given by the natural restriction $\varphi(p)=p|_\Pi$, $p\in \Lambda_N$. It is clear that ${\rm Im } \,  \varphi = \Lambda_{N_1, N_2}$ and $\Ker \,  \varphi = I^{S_N}$.
\end{proof}

We have the following version of the Lassalle--Nekrasov correspondence for the deformed CM operators, (which also can be extracted from \cite{DH12}).
\begin{theorem}
\label{LNdeform}
The following diagram is commutative for any $p\in \Lambda_N$:
\begin{equation}\label{LNdef5}\nonumber
\begin{tikzcd}
\Lambda_{N_1, N_2} \arrow[r,"\chi_{N_1, N_2}"] \arrow[d,swap,"{{\mathcal L}}_{p, N_1, N_2}"] &
\Lambda_{N_1, N_2} \arrow[d,"{\mathcal L}^{\mathcal H}_{p, N_1, N_2}"] \\
\Lambda_{N_1, N_2} \arrow[r,"\chi_{N_1, N_2}"] & \Lambda_{N_1, N_2},
\end{tikzcd}
\end{equation}
with $\chi_{N_1, N_2}(f) = e^{-\frac12 L_{N_1, N_2}} f$, $f\in \Lambda_{N_1, N_2}$, and ${\mathcal L}_{p, N_1, N_2}= {\rm Res}_\Pi \, p(\pi_1, \ldots, \pi_N)$, 
${\mathcal L}^{\mathcal H}_{p, N_1, N_2} = Ad_{e^{-\frac12 L_{N_1, N_2}}} {\mathcal L}_{p, N_1, N_2}$.

Furthermore, the following equalities hold: 
\beq{pp11}
{\mathcal L}_{\sum x_i y_i, N_1, N_2} = E_{N_1, N_2},  \quad {\mathcal L}_{\sum (x_i y_i)^2, N_1, N_2} = {\mathcal L}_{N_1, N_2},
\eeq
and
\beq{pp22}
{\mathcal L}^{\mathcal H}_{\sum x_i y_i, N_1, N_2} = -  L_{N_1, N_2} + E_{N_1, N_2}.
\eeq
\end{theorem}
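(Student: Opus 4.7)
The plan is to observe that the commutativity of the diagram is essentially tautological from the definition $\mathcal{L}^{\mathcal{H}}_{p,N_1,N_2} := Ad_{e^{-L_{N_1,N_2}/2}}\mathcal{L}_{p,N_1,N_2}$, since once all operators are known to make sense on $\Lambda_{N_1,N_2}$ one simply computes
\[
\mathcal{L}^{\mathcal{H}}_{p,N_1,N_2} \circ \chi_{N_1,N_2} = e^{-L_{N_1,N_2}/2}\,\mathcal{L}_{p,N_1,N_2}\,e^{L_{N_1,N_2}/2}\,e^{-L_{N_1,N_2}/2} = \chi_{N_1,N_2} \circ \mathcal{L}_{p,N_1,N_2}.
\]
The nontrivial content is therefore twofold: (a) show that all the operators in the diagram are well-defined linear endomorphisms of $\Lambda_{N_1,N_2}$, and (b) verify the identifications \eqref{pp11} and \eqref{pp22}.

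For (a), I would argue via the Cherednik-algebra setup of the previous sections. By Lemma~\ref{LambdaQuot}, $\Lambda_{N_1,N_2} \cong \Lambda_N/I^{S_N}$, and since $I$ is an $H_{1/k}$-submodule of $\C[x]$, any $S_N$-invariant element of $H_{1/k}$ preserves $I^{S_N}$ and hence descends to an operator on $\Lambda_{N_1,N_2}$. This immediately yields $\mathcal{L}_{p,N_1,N_2}$ and, via \eqref{Fres}, $L_{N_1,N_2}$ as endomorphisms of $\Lambda_{N_1,N_2}$. Because $L_{N_1,N_2}$ is homogeneous of Euler degree $-2$ (it strictly lowers total polynomial degree in $z,w$), it is locally nilpotent on the graded pieces of $\Lambda_{N_1,N_2}$, so $e^{\pm L_{N_1,N_2}/2}$ are well-defined linear operators there. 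More elegantly, one can invoke the automorphism $a_{-1,1} \in \mathbb{G}$ of $H_{1/k}$ from Section~\ref{automLN}, under which $p(\pi_1,\ldots,\pi_N) \mapsto p(x_1 y_1 - y_1^2,\ldots,x_N y_N - y_N^2)$; this element of $H_{1/k}$ is still $S_N$-invariant, so ${\rm Res}_\Pi$ of it is a genuine differential operator on $\Lambda_{N_1,N_2}$ which agrees with $\mathcal{L}^{\mathcal{H}}_{p,N_1,N_2}$.

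For (b), the first equality in \eqref{pp11} is precisely \eqref{Eulerd} of Lemma~\ref{EulerNN}, while the second is \eqref{deftr}. Equation \eqref{pp22} then follows from a short BCH computation: since $L_{N_1,N_2}$ is Euler-homogeneous of degree $-2$, one has $[L_{N_1,N_2}, E_{N_1,N_2}] = 2 L_{N_1,N_2}$, hence $[-L_{N_1,N_2}/2, E_{N_1,N_2}] = -L_{N_1,N_2}$ and $[-L_{N_1,N_2}/2, -L_{N_1,N_2}] = 0$, so the series $e^{ad_{-L_{N_1,N_2}/2}}(E_{N_1,N_2})$ truncates after two terms to give $E_{N_1,N_2} - L_{N_1,N_2}$. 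Combining this with the first identity in \eqref{pp11} yields \eqref{pp22}. The main obstacle in the entire argument is the clean verification of well-definedness of the conjugation; I would handle this by leaning on the Cherednik-algebra/residue framework of Lemma~\ref{EulerNN} and Section~\ref{automLN} (where the automorphism $a_{-1,1}$ already encodes the conjugation by $e^{-\sum y_i^2/2}$ at the ambient algebra level), rather than attempting a direct analytic argument on $\Lambda_{N_1,N_2}$.
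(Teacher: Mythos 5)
Your proposal is correct and takes essentially the same route as the paper: the diagram is obtained by descending the ambient Cherednik-algebra picture on $\Lambda_N$ to the quotient $\Lambda_N/I^{S_N}\cong\Lambda_{N_1,N_2}$ via ${\rm Res}_\Pi$, Lemma \ref{LambdaQuot} and formula \eqref{Fres}, with \eqref{pp11} coming from Lemma \ref{EulerNN}. The only cosmetic difference is in \eqref{pp22}: you compute $e^{ad_{-L_{N_1,N_2}/2}}(E_{N_1,N_2})=E_{N_1,N_2}-L_{N_1,N_2}$ directly on the quotient using $[L_{N_1,N_2},E_{N_1,N_2}]=2L_{N_1,N_2}$, whereas the paper performs the equivalent computation $a_{-1,1}\big(\sum x_iy_i\big)=\sum x_iy_i-\sum y_i^2$ inside $H_{1/k}$ and then applies ${\rm Res}_\Pi$ with \eqref{Fres}.
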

\begin{proof}
Let us recall commutative diagram \eqref{LN}. This diagram is equivalent to the following commutative diagram where Cherednik algebra elements act via Dunkl operators:
\begin{equation}\label{LNdef6}\nonumber
\begin{tikzcd}
\Lambda_N \arrow[r,"\chi"] \arrow[d,swap,"p(\pi)"] &
\Lambda_N \arrow[d,"p^{\mathcal H}(\pi)"] \\
\Lambda_N \arrow[r,"\chi"] & \Lambda_N 
\end{tikzcd}
\end{equation}
with $\chi(f) = {e^{-\frac12 \sum y_i^2}}(f)$, $f \in \Lambda_N$, and the operator $p^{\mathcal H}(\pi) = Ad_{e^{-\frac12 \sum y_i^2}} p(\pi)$. By considering actions in the quotient space $\Lambda_{N_1, N_2}$ (see Lemma \ref{LambdaQuot}) and applying the operation ${\rm Res}_\Pi$ and formula \eqref{Fres} the first statement follows.

Formulas \eqref{pp11} follow by Lemma \ref{EulerNN}. In order to establish \eqref{pp22} let us observe that $p^{\mathcal H}(\pi) = {\rm Res}_\Pi (x_i y_i - \sum y_i^2)$. Then formula \eqref{pp22} follows from \eqref{Fres}.
\end{proof}

We also have a quasi-invariant version of Theorem \ref{LNdeform}. Let us introduce the following algebra $\Lambda_{N_1, N_2}(k)\subset \C[z_1, \ldots, z_{N_1}, w_1, \ldots, w_{N_2}]$ of generalised quasi-invariants following \cite{ERF16}. A polynomial $q\in \Lambda_{N_1, N_2}(k)$ if the following three conditions are satisfied:
\begin{enumerate}
\item
$\sigma(q)=q$ for any $\sigma \in S_{N_2}$ which acts by permuting $w$-variables,
\item
for any fixed $w$, the polynomial $q$ is $k$-quasi-invariant as a polynomial in variables $z_1, \ldots, z_{N_1}$,
\item
$\partial_{z_i} q = k \partial_{w_j}q$ if $z_i=w_j$ for any $1\le i \le N_1$, $1 \le j \le N_2$.
\end{enumerate}
Note that we have an inclusion of algebras $\Lambda_{N_1, N_2} \subset \Lambda_{N_1, N_2}(k)$. We also have the property that the action of the differential operator $L_{N_1, N_2}$ preserves the space  $\Lambda_{N_1, N_2}(k)$, and, furthermore,  $\Lambda_{N_1, N_2}(k)$ is a module over the spherical subalgebra $eH_{1/k}e$ \cite{ERF16}. In this module the elements $p(\pi_1, \ldots, \pi_N)$, $p\in \Lambda_N$ act as differential operators ${\mathcal L}_{p, N_1, N_2}$, in particular, these operators preserve the space $\Lambda_{N_1, N_2}(k)$. This leads us to the following extension of Theorem \ref{LNdeform}.
\begin{theorem}
\label{LNdeformQuasi}
The following diagram is commutative for any $p\in \Lambda_N$:
\begin{equation}\label{LNdef7}\nonumber
\begin{tikzcd}
\Lambda_{N_1, N_2}(k) \arrow[r,"\chi_{N_1, N_2}"] \arrow[d,swap,"{{\mathcal L}}_{p, N_1, N_2}"] &
\Lambda_{N_1, N_2}(k) \arrow[d,"{\mathcal L}^{\mathcal H}_{p, N_1, N_2}"] \\
\Lambda_{N_1, N_2}(k) \arrow[r,"\chi_{N_1, N_2}"] & \Lambda_{N_1, N_2}(k).
\end{tikzcd}
\end{equation}
\end{theorem}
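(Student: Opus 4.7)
The plan is to observe that Theorem \ref{LNdeformQuasi} reduces to a tautology once one uses the conjugation description of ${\mathcal L}^{\mathcal H}_{p,N_1,N_2}$ established inside the proof of Theorem \ref{LNdeform}, together with the module-theoretic facts about $\Lambda_{N_1,N_2}(k)$ recalled in the paragraph preceding the theorem. The essential content is therefore not a new computation but a careful verification that every operator in the diagram really does act on the enlarged quasi-invariant space $\Lambda_{N_1,N_2}(k)$.

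First I would check that $\chi_{N_1,N_2}=e^{-\frac{1}{2}L_{N_1,N_2}}$ is a well-defined invertible endomorphism of $\Lambda_{N_1,N_2}(k)$. By \cite{ERF16}, the operator $L_{N_1,N_2}$ preserves $\Lambda_{N_1,N_2}(k)$; since it is homogeneous of degree $-2$, it acts nilpotently on each graded component, so the exponential series truncates in every degree, yielding an invertible, upper-triangular (in the degree filtration) linear map on $\Lambda_{N_1,N_2}(k)$. Next I would record that, because $\Lambda_{N_1,N_2}(k)$ is a module over the spherical subalgebra $eH_{1/k}e$ in which $p(\pi_1,\ldots,\pi_N)e$ acts as ${\mathcal L}_{p,N_1,N_2}$, the differential operator ${\mathcal L}_{p,N_1,N_2}$ preserves $\Lambda_{N_1,N_2}(k)$ for every $p\in\Lambda_N$. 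In particular, the conjugate
\[
e^{-\frac{1}{2}L_{N_1,N_2}}\,{\mathcal L}_{p,N_1,N_2}\,e^{\frac{1}{2}L_{N_1,N_2}}
\]
is a well-defined endomorphism of $\Lambda_{N_1,N_2}(k)$.

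Finally, I would invoke the identity ${\mathcal L}^{\mathcal H}_{p,N_1,N_2}=\mathrm{Ad}_{e^{-\frac{1}{2}L_{N_1,N_2}}}\,{\mathcal L}_{p,N_1,N_2}$, which was established in the proof of Theorem \ref{LNdeform} purely from manipulations in $eH_{1/k}e$ and hence is an identity of abstract operators, independent of the domain on which they act. Reading this identity as ${\mathcal L}^{\mathcal H}_{p,N_1,N_2}\circ \chi_{N_1,N_2}=\chi_{N_1,N_2}\circ {\mathcal L}_{p,N_1,N_2}$ on $\Lambda_{N_1,N_2}(k)$ gives precisely the commutativity of the diagram.

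The main obstacle is conceptual rather than computational: the original proof of Theorem \ref{LNdeform} passed through the quotient $\Lambda_N/I^{S_N}\cong\Lambda_{N_1,N_2}$ of Lemma \ref{LambdaQuot}, and this route is unavailable here because $\Lambda_{N_1,N_2}(k)$ is strictly larger than $\Lambda_{N_1,N_2}$ and is not realised as a quotient of $\Lambda_N$. One must therefore genuinely rely on the module structure of $\Lambda_{N_1,N_2}(k)$ over $eH_{1/k}e$ proved in \cite{ERF16} in order to make sense of ${\mathcal L}_{p,N_1,N_2}$, $L_{N_1,N_2}$ and their conjugates on the quasi-invariant algebra. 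Once that input is in place, the rest of the argument is formal.
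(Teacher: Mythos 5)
Your argument is correct and essentially the paper's own: both rest on the fact (from \cite{ERF16}) that $L_{N_1,N_2}$ and the operators ${\mathcal L}_{p,N_1,N_2}$ preserve $\Lambda_{N_1,N_2}(k)$, after which the commutativity is formal. The only cosmetic difference is that the paper transfers the identity from $\Lambda_{N_1,N_2}$ to $\Lambda_{N_1,N_2}(k)$ by noting that a differential operator in the variables $z,w$ is fully determined by its action on $\Lambda_{N_1,N_2}$ and then invoking Theorem \ref{LNdeform}, whereas you read the same commutativity off directly from the defining relation ${\mathcal L}^{\mathcal H}_{p,N_1,N_2}=Ad_{e^{-\frac12 L_{N_1,N_2}}}{\mathcal L}_{p,N_1,N_2}$ once all maps are known to act on the enlarged space.
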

Indeed, as we explained all the maps from the space $\Lambda_{N_1, N_2}(k)$ to itself are well-defined. Since a differential operator acting on functions in variables $z, w$ is fully determined by its action on $\Lambda_{N_1, N_2}$ the statement follows from Theorem \ref{LNdeform}.

Operators ${\mathcal L}_{p, N_1, N_2}$ in Theorems \ref{LNdeform}, \ref{LNdeformQuasi} were obtained as restrictions ${\rm Res}_\Pi$ of the elements $Ad_{e^{-\frac12 \sum y_i^2}} p(\pi)$ of the rational Cherednik algebra. In terms of automorphisms of the rational Cherednik algebras considered in Section \ref{automLN} we have ${\mathcal L}^{\mathcal H}_{p, N_1, N_2}  = {\rm Res}_\Pi a_{-1,1}(p(\pi))$, where $a_{-1,1}$ is now an automorphism of the algebra $H_{1/k}$ defined in \eqref{automA}.

Let us now consider a more general automorphism  $\gamma$ of the algebra $H_{1/k}$ given by formula \eqref{automAgen}. Define the algebra of differential operators ${{\mathcal D}_{N_1, N_2}^{R, \gamma}}  = {\rm Res}_\Pi \, {\mathcal R}_\gamma$, where ${\mathcal R}_\gamma= \gamma({\mathcal T})$, and ${\mathcal T}$ is the commutative subalgebra in the algebra $H_{1/k}$ generated by elements $\pi^r$, $r\in \Z_{\ge 0}$,  given by formula \eqref{pik}.

We have the following generalisation of Proposition \ref{prophigher} to the deformed case.
\begin{theorem}
\label{prophigherdeform}
The algebra ${\mathcal D}_{N_1, N_2}^{R, \gamma}$  is a 
  commutative algebra  of differenial operators with rational coefficients which contains $N_1+N_2$ algebraically independent elements. The lowest order differential operator in the algebra ${{\mathcal D}_{N_1, N_2}^{R, \gamma}}$ has the form
\beq{KgammaDeformed}
K_{N_1, N_2, \gamma} ={\rm Res}_\Pi \, \gamma (\sum_{i=1}^N x_i y_i) = {\rm Res}_\Pi \sum_{i=1}^N (x_i y_i +  y_i {\mathcal P}_\gamma(y_i)) = E_{N_1, N_2} + { L}_{N_1, N_2, \gamma},
\eeq
where  
$$
{L}_{N_1, N_2, \gamma} = {\rm Res}_\Pi \sum_{i=1}^N y_i {\mathcal P}_\gamma (y_i)
$$ 
is the order $l+1$ quantum integral of the rational generalised CM Hamiltonian \eqref{defHrat}.
\end{theorem}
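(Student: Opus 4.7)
The plan is to verify three ingredients in sequence: commutativity of ${\mathcal D}_{N_1,N_2}^{R,\gamma}$, the explicit form of the lowest-order operator $K_{N_1,N_2,\gamma}$, and the count of algebraically independent generators. This extends Proposition \ref{prophigher} to the deformed setting and the overall structure of the proof should closely mirror that of Theorem \ref{LNdeform}.

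First I would address commutativity. Since ${\mathcal T} \subset H_{1/k}$ is commutative and $\gamma$ is an automorphism, ${\mathcal R}_\gamma = \gamma({\mathcal T})$ is automatically commutative. Moreover $\gamma$ is $S_N$-equivariant (it fixes each $w \in S_N$ and its action on the generators $x_i, y_i$ is $S_N$-covariant), so ${\mathcal R}_\gamma$ lies in the $S_N$-invariant subalgebra of $H_{1/k}$. Hence ${\rm Res}_\Pi$ is a well-defined algebra homomorphism on ${\mathcal R}_\gamma$, and its image ${\mathcal D}_{N_1,N_2}^{R,\gamma}$ is a commutative algebra of differential operators with rational coefficients on $\Pi$.

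Next I would compute $K_{N_1,N_2,\gamma}$ directly. Using $\gamma(y_i) = y_i$ and $\gamma(x_i) = x_i + {\mathcal P}_\gamma(y_i)$ gives
$$
\gamma\!\left(\sum_{i=1}^N x_i y_i\right) = \sum_{i=1}^N x_i y_i + \sum_{i=1}^N y_i {\mathcal P}_\gamma(y_i).
$$
Applying ${\rm Res}_\Pi$, the first sum becomes $E_{N_1,N_2}$ by formula \eqref{Eulerd} of Lemma \ref{EulerNN}, while the second is $L_{N_1,N_2,\gamma}$ by definition. The latter equals $\sum_{j=0}^l \tau_j\, {\rm Res}_\Pi\, y^{j+1}$, a linear combination of operators that pairwise commute and commute with $L_{N_1,N_2} = {\rm Res}_\Pi\, y^2$ (all descend from the commutative subalgebra $\C[y_1,\ldots,y_N]^{S_N} \subset H_{1/k}$), so $L_{N_1,N_2,\gamma}$ is a quantum integral of $L_{N_1,N_2}$ in the sense of \cite{Fei12}. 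Its differential order is $l+1$, contributed by $\tau_l y^{l+1}$. Since for $r\ge 2$ the element ${\rm Res}_\Pi\, \gamma(\pi^r)$ has differential order $r(l+1) > l+1$, the operator $K_{N_1,N_2,\gamma}$ is indeed the lowest-order non-constant element of the algebra.

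The main obstacle is the count of algebraically independent generators. My plan is to take as candidates $K_r := {\rm Res}_\Pi\, \gamma(\pi^r)$ for $r=1,\ldots,N_1+N_2$ and examine their top-$y$-degree symbols. Since $[x_j, y_i^{l+1}]$ strictly lowers the total $y$-degree, the highest-$y$-degree part of $\gamma(\pi^r) = \sum_i (x_i y_i + y_i{\mathcal P}_\gamma(y_i))^r$ is $\tau_l^r \sum_i y_i^{r(l+1)}$. Under ${\rm Res}_\Pi$ this becomes $\tau_l^r$ times the rational deformed CM integral attached to the Newton sum $p_{r(l+1)}$, whose principal symbol on $T^*\Pi$ is, in the adapted coordinates of Lemma \ref{EulerNN}, a weighted power sum of degree $r(l+1)$ in the $N_1+N_2$ cotangent variables. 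After the substitution $u_j = \zeta_j^{l+1}$, $v_j = \omega_j^{l+1}$, these principal symbols become, up to non-zero scalars, standard Newton sums of degree $r$ in $(u,v)$, hence algebraically independent for $r=1,\ldots,N_1+N_2$ by the Vandermonde argument. A cleaner alternative is a deformation argument: algebraic independence is an open condition on $(\tau_0,\ldots,\tau_l)\in \C^{l+1}$, so it suffices to verify the claim at the special point $\tau_0=\cdots=\tau_{l-1}=0$, $\tau_l=1$, where the analysis reduces to the pure higher-order rational case whose integrability is already known.
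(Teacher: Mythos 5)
Your proposal is correct and follows essentially the same route as the paper's (very terse) proof: the formula for $K_{N_1,N_2,\gamma}$ comes from Lemma \ref{EulerNN} exactly as you compute, commutativity is inherited from $\mathcal T$ through the automorphism and the restriction homomorphism, and the algebraic independence is reduced to that of the leading symbols of ${\rm Res}_\Pi\,\gamma(\pi^r)$, which are the deformed power sums of degrees $r(l+1)$ — the paper simply cites \cite[Proposition 4]{SV04} for their independence where you supply a direct symbol/Vandermonde argument. One small imprecision: the symbol of ${\rm Res}_\Pi\sum_i y_i^{r(l+1)}$ is $k^{1-r(l+1)}\sum_i\zeta_i^{r(l+1)}+\sum_j\omega_j^{r(l+1)}$, so after the substitution $u_i=k^{-(l+1)}\zeta_i^{l+1}$, $v_j=\omega_j^{l+1}$ you obtain the $k$-deformed Newton sums $k\sum_i u_i^r+\sum_j v_j^r$ rather than standard ones; the generalized Vandermonde determinant with one column block rescaled by $k\neq 0$ is still a nonzero polynomial, so your conclusion is unaffected.
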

Indeed, algebraic independence of (the highest order terms of) the operators 
$$
{\rm Res}_\Pi \gamma(\pi^s)= {\rm Res}_\Pi \sum_{i=1}^N (x_i y_i +y_i {\mathcal P}_\gamma (y_i))^s
$$
 with $s= 1, \ldots, N_1+N_2$ follows from \cite[Proposition 4]{SV04}. The rest of the proof is immediate from Lemma \ref{EulerNN} and other previous considerations.

In the case $P_\gamma(z)=-z$ the operator \eqref{KgammaDeformed} is equal to the operator $-  L_{N_1, N_2} + E_{N_1, N_2}={\mathcal L}^{\mathcal H}_{\sum x_i y_i, N_1, N_2}$ considered above.
Its integrability was established independently in \cite{Fei12} for integer $k$ and in \cite{DH12} for general $k\in\C$. Furthermore, in the latter paper so-called super Hermite polynomials were introduced and shown to be joint eigenfunctions of ${\mathcal L}^{\mathcal H}_{\sum x_i y_i, N_1, N_2}$ and its higher order integrals.


\section{Concluding remarks}

We have defined a new class of $\mathcal A$-Hermite polynomials related to special configurations $\mathcal A$ of hyperplanes with multiplicities, and we showed that these polynomials are the eigenvectors of the corresponding generalised CM operator (\ref{harm}). Of particular interest is the case of multivariable $m$-Hermite polynomials. A natural question is whether these polynomials are joint eigenvectors of the corresponding commuting quantum integrals. The answer is in general negative already for the symmetric $m$-Hermite polynomials.

The reason is that if the multiplicity $m$ is non-zero then the bilinear forms on the space $\cQ_m$ become indefinite and the invariant subspaces of the algebra of integrals in general are not one-dimensional and contain multivariable $m$-Hermite polynomials as generalised eigenvectors. 

One can see this already in the simplest two-particle case, when the second integral has Jordan blocks. Indeed, let us consider the trigonometric side of the Lassalle--Nekrasov correspondence instead. Then this integral corresponds to the second order Hamiltonian \eqref{CMtrig}. Let us fix $m=1$. 
Then for any $l\in \N$ we have a $2\times 2$ Jordan block given by
\begin{equation*}
{\mathcal H}_2 (x_1^l x_2^l) = 2 l^2 x_1^l x_2^l, \quad {\mathcal H}_2 (x_1^{l-1} x_2^{l-1}(x_1^2+x_2^2))  = 2 l^2 x_1^{l-1} x_2^{l-1}(x_1^2+x_2^2) - 4 x_1^l x_2^l.
\end{equation*}

The corresponding spectral decomposition of the space of quasi-invariants $\cQ_m$ seems to be very interesting.

Another interesting direction is to extend the considerations to the difference versions of the Calogero--Moser systems, see \cite{Rui87} and  \cite{vDie95}.
We hope to come back to some of these questions elsewhere.

\section{Acknowledgements}

We are very grateful to Nikita Nekrasov for attracting our attention to his work \cite{Nek97}, 
and to Margit R\"osler for useful discussions. 

The second author (M.H.)~acknowledges financial support from the Swedish Research Council (Reg.~nr.~2018-04291).
The work of the first (M.F.) and third (A.P.V.) authors (Sections \ref{sec3}, \ref{automLN}, \ref{sechigher}) was supported by the Russian Science Foundation grant no.~20-11-20214.

\bibliographystyle{amsalpha}

\end{document}